\renewenvironment{framed}{%
 \def\FrameCommand##1{\hskip\@totalleftmargin
 \fboxsep=\FrameSep\fbox{##1}
     \hskip-\linewidth \hskip-\@totalleftmargin \hskip\columnwidth}%
 \MakeFramed {\advance\hsize-\width
   \@totalleftmargin\z@ \linewidth\hsize
   \@setminipage}}%
 {\par\unskip\endMakeFramed}
\newtheorem{theorem}{Theorem}[section]
\newtheorem{lemma}[theorem]{Lemma}
\newtheorem{meta-theorem}[theorem]{Meta-Theorem}
\newtheorem{claim}[theorem]{Claim}
\newtheorem{remark}[theorem]{Remark}
\definecolor{darkgreen}{rgb}{0,0.5,0}
\crefname{theorem}{Theorem}{Theorems}
\Crefname{lemma}{Lemma}{Lemmas}
\Crefname{figure}{Figure}{Figures}
\newcommand{\ignore}[1]{}
\newcommand{\Expect}{\operatorname{E}}
\newcommand{\Prob}{\operatorname{Pr}}
\newcommand{\paren}[1]{\mathopen{}\left( #1 \right)\mathclose{}}
\newcommand{\poly}{{\operatorname{poly}}}
\newcommand{\polylog}{\poly \log}
\newcommand{\outdeg}{\operatorname{outdeg}}
\newcommand{\dist}{\operatorname{dist}}
\newcommand{\sparse}{\mathsf{sp}}
\newcommand{\bad}{\mathsf{bad}}
\newcommand{\Nout}{N^{\operatorname{out}}}
\newcommand{\Ninc}{{N}} 
\newcommand{\Det}{\mathsf{Det}}
\newcommand{\Detd}{\Det_{\scriptscriptstyle d}}
\newcommand{\ID}{\operatorname{ID}}
\newcommand{\LOCAL}{\mathsf{LOCAL}}
\newcommand{\LCA}{\mathsf{LCA}}
\newcommand{\CONGEST}{\mathsf{CONGEST}}
\newcommand{\MPC}{\mathsf{MPC}}
\newcommand{\CLIQUE}{\mathsf{CONGESTED}\text{-}\mathsf{CLIQUE}}
\newcommand{\trial}{{\sf ColorBidding}}
\newcommand{\speedupalgo}{{Opportunistic Speed-up}}  
\newcommand{\badvertex}{{\bf Bad}}
\newcommand{\bbbb}{{\sf{Bad}}}
\newcommand{\Good}{\sf{Good}}
\newcommand{\Cdecomp}{4}
\newcommand{\Ctries}{3}
\newcommand{\Csamples}{\beta}
\newcommand{\randcolor}{\mathscr{R}}
\newcommand{\DeltaSp}{\Delta_{\ast}}
\newcommand{\NSp}{N_{\ast}}
\newcommand{\Algo}{\mathcal{A}}
\newcommand{\binomial}{\operatorname{Binomial}}
\newcommand{\Invariant}{\mathcal{I}}
\newcommand{\InSize}{\ell_{\operatorname{in}}}
\newcommand{\OutSize}{\ell_{\operatorname{out}}}
\newcommand{\fyi}[1]{{\color{blue} \emph{#1}}}  
    \newcommand{\Evoverloaded}{E_v^{\mathrm{overload}}}
    \newcommand{\Evlucky}{E_v^{\mathrm{lucky}}}
    \newcommand{\Evrich}{E_v^{\mathrm{rich}}}
    \newcommand{\Evlazy}{E_v^{\mathrm{lazy}}}
\title{The Complexity of $(\Delta+1)$ Coloring  in \\ Congested Clique, Massively Parallel Computation, \\ and Centralized Local Computation}
\author{
Yi-Jun Chang \\
\small U. Michigan \\
\small cyijun@umich.edu
\and
Manuela Fischer\\
\small ETH Zurich \\
\small manuela.fischer@inf.ethz.ch
\and
Mohsen Ghaffari\\
\small ETH Zurich \\
 \small ghaffari@inf.ethz.ch
\and
Jara Uitto \\
\small ETH Zurich \& U. Freiburg \\
\small jara.uitto@inf.ethz.ch
\and
Yufan Zheng \\
\small U. Michigan \\
\small lwins.lights@gmail.com
 }
\begin{document}
\date{}
\maketitle
\thispagestyle{empty}
\setcounter{page}{0}

\begin{abstract}
In this paper, we present new randomized algorithms that improve the complexity of the classic $(\Delta+1)$-coloring problem, and its generalization $(\Delta+1)$-list-coloring, in three well-studied models of distributed, parallel, and centralized computation:
\begin{description}
\item[Distributed Congested Clique:]  We present an $O(1)$-round randomized algorithm for $(\Delta+1)$-list coloring in the congested clique model of distributed computing. This settles the asymptotic complexity of this problem. It moreover improves upon the $O(\log^\ast \Delta)$-round randomized algorithms of Parter and Su [DISC'18] and $O((\log\log \Delta)\cdot \log^\ast \Delta)$-round randomized algorithm of Parter [ICALP'18].

\item[Massively Parallel Computation:] 
We present a $(\Delta+1)$-list coloring algorithm with round complexity $O(\sqrt{\log\log n})$ in the Massively Parallel Computation ($\MPC$) model with strongly sublinear memory per machine. This algorithm uses a memory of $O(n^{\alpha})$ per machine, for any desirable constant $\alpha>0$, and a total memory of $\widetilde{O}(m)$, where $m$ is the size of the graph. Notably, this is the first coloring algorithm with sublogarithmic round complexity, in the sublinear memory regime of $\MPC$. For the quasilinear memory regime of $\MPC$, an $O(1)$-round algorithm was given very recently by Assadi et al. [SODA'19].

\item[Centralized Local Computation:] We show that $(\Delta+1)$-list coloring can be solved with $\Delta^{O(1)} \cdot O(\log n)$ query complexity, in the centralized local computation model. The previous state-of-the-art for $(\Delta+1)$-list coloring in the centralized local computation model are based on  simulation of known $\LOCAL$ algorithms. The deterministic $O(\sqrt{\Delta} \polylog \Delta + \log^\ast n)$-round $\LOCAL$ algorithm of Fraigniaud et al. [FOCS'16] can be implemented in the centralized local computation model with query complexity $\Delta^{O(\sqrt{\Delta} \polylog \Delta)} \cdot O(\log^\ast n)$; the randomized $O(\log^\ast \Delta) +2^{O(\sqrt{\log \log n})}$-round $\LOCAL$ algorithm of Chang et al. [STOC'18] can be implemented in the centralized local computation model with query complexity $\Delta^{O(\log^\ast  \Delta)} \cdot O(\log n)$. 
\end{description}
\end{abstract}
\newpage

\section{Introduction, Related Work, and Our Results}
In this paper, we present improved randomized algorithms for vertex coloring in three models of distributed, parallel, and centralized computation: the {\em congested clique} model of distributed computing, the {\em massively parallel computation} model, and the {\em centralized local computation} model. We next overview these results in three different subsections, while putting them in the context of the state of the art. The next section provides a technical overview of the known algorithmic tools as well as the novel ingredients that lead to our results.

\paragraph{$(\Delta + 1)$-coloring and $(\Delta + 1)$-list Coloring.} Our focus is on the standard $\Delta+1$ vertex coloring problem, where $\Delta$ denotes the maximum degree in the graph. All our results work for the generalization of the problem to $(\Delta + 1)$-list coloring problem, defined as follows: each vertex $v$ in the graph $G=(V, E)$ is initially equipped with a set of colors $\Psi(v)$ such that $|\Psi(v)| = \Delta + 1$. The goal is to find a proper vertex coloring where each vertex $v\in V$ is assigned a color in  $\Psi(v)$ such that no two adjacent vertices are colored the same.

\subsection{Congested Clique Model of Distributed Computing}
\paragraph{Models of Distributed Computation.}
There are three major models for distributed graph algorithms, namely $\LOCAL$, $\CONGEST$, and $\CLIQUE$. In the $\LOCAL$ model~\cite{Linial92,Peleg00}, the input graph $G=(V,E)$ is identical to the communication network and each $v\in V$ hosts a processor that initially knows $\deg(v)$, a unique $\Theta(\log n)$-bit $\ID(v)$,
and global graph parameters $n = |V|$ and $\Delta =\max_{v\in V} \deg(v)$. Each processor
is allowed unbounded computation and has access to a stream of private random bits. \emph{Time} is partitioned into synchronized \emph{rounds} of communication, in which each processor sends one unbounded message to each neighbor.
At the end of the algorithm, each $v$ declares its output label, e.g., its own color. The $\CONGEST$ model~\cite{Peleg00} is a variant of  $\LOCAL$ where there is an $O(\log n)$-bit message size constraint. The $\CLIQUE$ model, introduced in \cite{lotker2005CongestK},  is a variant of $\CONGEST$ that allows all-to-all communication: Each vertex initially knows its adjacent edges of the input graph $G=(V, E)$. In each round, each vertex is allowed to transmit $n-1$ many $O(\log n)$-bit messages, one addressed to each other vertex.

In this paper, our new distributed result is an improvement for coloring in the $\CLIQUE$ model. It is worth noting that the $\CLIQUE$ model has been receiving extensive attention recently, see e.g., \cite{Patt-Shamir2011sorting, dolev2012tri, berns2012super, Lenzen2013, Drucker:congestedK, Danupon-paths, hegeman2014near, hegeman2015lessons, Censor2016, Hegeman:MST, becker_et_al:transshipment, Gall2016FurtherAA, Censor-Hillel:SparseMatrix, Ghaffari16, Ghaffari2017, ghaffari2018improved, ParterS18, Parter18, DBLP:conf/csr/BarenboimK18}.


\paragraph{State of the Art for Coloring in $\LOCAL$ and $\CONGEST$.} Most prior works on distributed coloring focus on the $\LOCAL$ model. The current state-of-the-art randomized upper bound for the $(\Delta+1)$-list coloring problem is  $O(\log^\ast \Delta) + O(\Detd(\polylog n)) = O(\Detd(\polylog n))$ of~\cite{ChangLP18} (which builds upon the techniques of~\cite{HarrisSS18}), 
where $\Detd(n') = 2^{O(\sqrt{\log \log n'})}$ is the deterministic complexity
of $(\deg+1)$-list coloring on $n'$-vertex graphs~\cite{PanconesiS96}.
In the $(\deg+1)$-list coloring problem, each $v$ has a palette of size $\deg(v)+1$. This algorithm follows the {\em graph shattering} framework~\cite{BEPS16, Ghaffari16}. The pre-shattering phase takes $O(\log^\ast \Delta)$ rounds. After that, the remaining uncolored vertices form connected components of size $O(\poly \log n)$. The post-shattering phase then applies a $(\deg+1)$-list coloring deterministic algorithm to color all these vertices.

\paragraph{State of the Art for Coloring in $\CLIQUE$.} 
Hegeman and Pemmaraju~\cite{hegeman2015lessons} gave algorithms for $O(\Delta)$-coloring in the $\CLIQUE$ model, which run in $O(1)$ rounds if $\Delta\geq \Theta(\log^4 n )$ and in $O(\log\log n)$ rounds otherwise. It is worth noting that $O(\Delta)$ coloring is a significantly more relaxed problem in comparison to $\Delta+1$ coloring. For instance, we have long known a very simple $O(\Delta)$-coloring algorithm in $\LOCAL$-model algorithm with round complexity $2^{O(\sqrt{\log \log n})}$~\cite{BEPS16}, but only recently such a round complexity was achieved for $\Delta+1$ coloring ~\cite{ChangLP18,HarrisSS18}.    

Our focus is on the much more stringent $\Delta+1$ coloring problem. For this problem, the $\LOCAL$ model algorithms of~\cite{ChangLP18,HarrisSS18} need messages of $O(\Delta^2 \log n)$ bits, and thus do not extend to $\CONGEST$ or $\CLIQUE$. For $\CLIQUE$ model, the main challenge is when $\Delta > \sqrt{n}$, as otherwise, one can simulate the algorithm of~\cite{ChangLP18} by leveraging the all-to-all communication in $\CLIQUE$ which means each vertex in each round is capable of communicating $O(n \log n)$ bits of information. Parter~\cite{Parter18} designed the first sublogarithmic-time $(\Delta+1)$ coloring algorithm for $\CLIQUE$, which runs in $O(\log \log \Delta \log^\ast \Delta)$ rounds. The algorithm of~\cite{Parter18} is able to reduce the maximum degree to $O(\sqrt{n})$ in $O(\log \log \Delta)$ iterations, and each iteration invokes the algorithm of~\cite{ChangLP18} on instances of maximum degree $O(\sqrt{n})$. Once the maximum degree is $O(\sqrt{n})$, the algorithm of~\cite{ChangLP18} can be implemented in $O(\log^\ast \Delta)$ rounds in $\CLIQUE$. Subsequent to~\cite{Parter18}, the upper bound was improved to $O(\log^\ast \Delta)$ in~\cite{ParterS18}. Parter and Su~\cite{ParterS18} observed that the algorithm of~\cite{Parter18} only takes $O(1)$ iterations if we only need to reduce the degree to $n^{1/2 + \epsilon}$, for some constant $\epsilon > 0$, and they achieved this by modifying the internal details of~\cite{ChangLP18} to reduce the required message size to $O(\Delta^{8/5} \log n)$. 

\paragraph{Our Result.}
For the $\CLIQUE$ model, we present a new algorithm for $(\Delta+1)$-list coloring in the randomized congested clique
model running in $O(1)$ rounds. This improving on the previous best known $O(\log^\ast \Delta)$-round algorithm of Parter and Su~\cite{ParterS18} and settles the asymptotic complexity of the problem.

\begin{theorem}\label{thm-congested-clique-main}
    There is an $O(1)$-round algorithm that solves the $(\Delta+1)$-list coloring problem in $\CLIQUE$, with success probability $1 - 1/\poly(n)$.
\end{theorem}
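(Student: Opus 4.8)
The plan is to follow the graph-shattering paradigm of~\cite{BEPS16,ChangLP18}, but to realize \emph{both} its pre-shattering and its post-shattering phase in $O(1)$ rounds of $\CLIQUE$. Two difficulties must be dealt with: for $\Delta \gg \sqrt n$ a vertex cannot even learn its $2$-hop neighborhood, so the $\LOCAL$-style simulations behind~\cite{Parter18,ParterS18} do not apply; and even for $\Delta\le\sqrt n$ the pre-shattering of~\cite{ChangLP18} costs $\Theta(\log^\ast\Delta)$ rounds, which is exactly the term we want to remove. The overall scheme is: (1) compute the sparse--dense decomposition; (2) color the dense part directly; (3) reduce the sparse part to a family of tiny, mutually independent $(\deg{+}1)$-list-coloring instances that can be shipped to individual machines; (4) solve each shipped instance locally by greedy coloring.

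For step (1) I would compute the decomposition of~\cite{ChangLP18}: a partition of $V$ into a sparse set $\Gsp$ (each $v\in\Gsp$ has $\Omega(\Delta^2)$ non-edges inside $N(v)$) and vertex-disjoint almost-cliques $C_1,\dots,C_k$ (each of $\Theta(\Delta)$ vertices, small external degree, and diameter $2$). The only nonlocal ingredient is a local sparsity estimate, which every vertex can obtain from $O(\log n)$ sampled neighbors plus one all-to-all exchange, so step (1) costs $O(1)$ rounds even when $\Delta\gg\sqrt n$. For step (2), the dense-coloring subroutines of~\cite{ChangLP18} color each $C_i$ in $O(1)$ rounds; the only new point is that a single $C_i$ may be far larger than $\sqrt n$, so its $\Theta(|C_i|^2)$-word internal communication is carried out by Lenzen's routing, and since the $C_i$ are vertex-disjoint the total load is $O(n\Delta)=O(m)$ words per round, so $O(1)$ rounds still suffice.

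Step (3) is the heart of the matter. Following~\cite{ChangLP18}, one round of \oneshot{} gives every still-uncolored $v\in\Gsp$ \emph{permanent slack} $\Omega(\Delta)$, i.e.\ thereafter $|\Psi(v)|-\deg_{\Gsp}(v)=\Omega(\Delta)$. Rather than running $\Theta(\log^\ast\Delta)$ further bidding rounds, I would split the color space into $t$ disjoint random blocks $\mathcal C_1,\dots,\mathcal C_t$ and the uncolored vertices of $\Gsp$ into classes $V_1,\dots,V_t$, decreeing that $V_j$ may only use colors of $\mathcal C_j$; as the blocks are disjoint, proper colorings of the induced instances $G[V_j]$ (with restricted palettes) glue to a proper coloring of $\Gsp$. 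Taking $t=\Theta(\Delta/\log n)$, Chernoff bounds show that in $G[V_j]$ every vertex has degree $O(\log n)$ while its restricted palette still strictly exceeds its degree --- the $\Omega(\Delta)$ slack absorbs the $O(\sqrt{(\Delta/t)\log n})$ fluctuations --- so $G[V_j]$ is a bona fide $(\deg{+}1)$-list-coloring instance of maximum degree $O(\log n)$; and for $\Delta\ge\polylog n$ each $G[V_j]$ has only $O(n)$ edges in total. Then by Lenzen's routing all $t\le n$ instances, palettes included, can be shipped to private machines in $O(1)$ rounds, and each machine finishes its instance greedily (step (4)). The remaining regime $\Delta=\polylog n$, where no useful split exists, I would handle by bootstrapping the same idea with an additional palette-sparsification step (each vertex samples $O(\log\Delta)$ colors, shrinking the conflict graph) until the instance is small enough to gather; controlling the number of bootstrap steps is the delicate part. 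A union bound over the $O(1)$ randomized steps, each succeeding with probability $1-1/\poly(n)$, gives the stated guarantee.

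The genuine obstacle is step (3) in the low-degree regime: showing that, after slack generation, the uncolored instance collapses --- with \emph{no} $\log^\ast$-type recursion --- to pieces that each fit on one machine, within $O(1)$ rounds. Everything else is either a direct adaptation of~\cite{ChangLP18} or a routine use of Lenzen's routing; the new ingredient is a $\CLIQUE$-specific way to compress the iterated color-bidding of~\cite{ChangLP18} into $O(1)$ rounds, trading locality for the model's large per-vertex bandwidth.
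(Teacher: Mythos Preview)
Your high-level split into a high-degree and a low-degree regime matches the paper, and your step~(3) idea of partitioning both vertices \emph{and} colors is close in spirit to what the paper actually does. But the proposal has two genuine gaps, one in each regime.

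\textbf{High degree ($\Delta\gg\sqrt n$).} Your plan is to first run the CLP sparse--dense decomposition and dense-coloring routines (steps (1)--(2)), and only then apply the color-partitioning trick to the sparse leftovers. The problem is that the CLP preprocessing itself needs $\Theta(\Delta^2\log n)$-bit messages (this is stated explicitly as the message bound of the black-box in \cref{lem:CLP-summary}), and Lenzen's routing only gives each vertex $O(n\log n)$ bits of bandwidth per round. Your accounting in step~(2), ``total load $O(n\Delta)=O(m)$ words'', is the wrong invariant: Lenzen's lemma bounds the \emph{per-vertex} send/receive, not the global sum, and a vertex in an almost-clique of size $\Theta(\Delta)$ that must learn its clique-mates' palettes faces a $\Theta(\Delta^2)$-word receive load. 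So for $\Delta\gg\sqrt n$ steps (1)--(2) do not run in $O(1)$ rounds as written. The paper sidesteps this entirely: for $\Delta\ge\log^{4.1}n$ it never computes a sparse--dense decomposition at all. Instead it applies your step~(3) idea directly to the \emph{whole} graph --- partition $V$ into $k=\sqrt\Delta$ random parts $B_1,\dots,B_k$ and a small leftover $L$, partition the color universe into $C_1,\dots,C_k$, ship each $(B_i,C_i)$ to a machine, and recurse on $L$ --- showing that the palette-size invariant $|\Psi(v)|\ge\max\{\deg(v),\Delta-\Delta^{\lambda}\}+1$ survives each level, with no slack-generation step needed (\cref{lem:partition}, \cref{thm:largedegree}).

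\textbf{Low degree ($\Delta=\polylog n$).} You correctly identify this as ``the genuine obstacle'' and leave it as a sketch (``bootstrapping \dots\ until the instance is small enough to gather''). This is precisely where the paper's second technical contribution lives, and it is not a bootstrap of the partitioning idea. The paper runs the $O(1)$-round CLP preprocessing (now legal since $\Delta^2=O(n)$), then \emph{sparsifies} the $O(\log^\ast\Delta)$-round color-bidding phase: each vertex pre-samples all $O(\polylog\Delta)$ colors it will ever try, which lets it identify a set $\NSp(v)$ of only $O(\polylog\Delta)=O(\poly(\log\log n))$ neighbors whose messages it actually needs. The resulting $\tau=O(\log^\ast\Delta)$-round algorithm on a graph of effective degree $\DeltaSp=O(\poly(\log\log n))$ is then compressed to $O(1)$ rounds via the opportunistic-speedup lemma (\cref{lem:speedup}): every vertex ships its local data to each other vertex with probability $p\approx 1/\DeltaSp$, and since $\DeltaSp^{\tau}\log\DeltaSp=O(\log n)$, w.h.p.\ some receiver sees the entire $\tau$-hop $\NSp$-neighborhood of $u$ and can compute $u$'s output. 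The nontrivial part of the analysis is replacing the CLP invariant $\sum_{u\in\Nout(v)}1/p_u\le 1/C_i$ --- which cannot be checked in the sparsified graph --- by a locally checkable surrogate on the sampled colors, while keeping the failure probability $1/\poly(\Delta)$ \emph{per vertex} so that shattering still applies. None of this is visible in your outline.
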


The proof is presented in two parts: If $\Delta \geq \log^{4.1} n$, the algorithm of \Cref{thm:largedegree} solves the $(\Delta+1)$-list coloring problem in $O(1)$ rounds; otherwise, the algorithm of \Cref{thm:smalldegree} solves the problem in $O(1)$ rounds.

\subsection{Massively Parallel Computation}

\paragraph{Model.} The Massively Parallel Computation (MPC) model was introduced by Karloff et al.~\cite{KarloffSV10}, as a theoretical abstraction for practical large-scale parallel processing settings such as MapReduce~\cite{dg04}, Hadoop~\cite{White:2012}, Spark~\cite{ZahariaCFSS10}, and Dryad~\cite{Isard:2007}, and it has been receiving increasing more attention over the past few years~\cite{KarloffSV10, goodrich2011sorting, LattanziMSV11, Beame13, Andoni:2014, Beame14, hegeman2015lessons, AhnGuha15,Roughgarden16,Im17,czumaj2017round, assadi2017simple,assadi2017coresets,ghaffari2018improved,harvey2018greedy,brandt2018breaking,assadi2018massively,boroujeni2018approximating,Andoni2018}. In the $\MPC$ model, the system consists of a number of machines, each with $S$ bits of memory, which can communicate with each other in synchronous rounds through a complete communication network. Per round, each machine can send or receive at most $S$ bits in total. Moreover, it can perform some $\poly(S)$ computation, given the information that it has. In the case of graph problems, we assume that the graph $G$ is partitioned among the machines using a simple and globally known hash function such that each machine holds at most $S$ bits, and moreover, for each vertex or potential edge of the graph, the hash function determines which machines hold that vertex or edge. Thus, the number of machines is $\Omega(m/S)$ and ideally not too much higher, where $m$ denotes the number of edges. At the end, each machine should know the output of the vertices that it holds, e.g., their color. 

\paragraph{State of the Art for Coloring.}
The $\CLIQUE$ algorithms discussed above can be used to obtain $\MPC$ algorithms with the same asymptotic round complexity if machines have memory of $S=\Omega(n\log n)$ bits. In particular, the work of Parter and Su~\cite{ParterS18} leads to an $O(\log^* \Delta)$-round $\MPC$ algorithm for machines with $S=\Omega(n\log n)$ bits. However, this $\MPC$ algorithm would have two drawbacks: (A) it uses $\Omega(n^2 \log n)$ global memory, and thus would require $(n^2 \log n)/S$ machines, which may be significantly larger than $\tilde{O}(m)/S$. This is basically because the algorithm makes each vertex of the graph learn some $\widetilde{\Theta}(n)$ bits of information. (B) It is limited to machines with $S=\Omega(n\log n)$ memory, and it does not extend to the machines with strongly sublinear memory, which is gaining more attention recently due to the increase in the size of graphs. We note that for the regime of machines with super-linear memory, very recently, Assadi, Chen, and Khanna~\cite{AssadiCK18} gave an $O(1)$-round algorithm which uses only $O(n\log^3 n)$ global memory.\footnote{Here ``global memory'' refers to the memory used for communication. Of course we still need $\tilde{O}(m)$ memory to store the graph.} However, this algorithm also relies heavily on $S=\Omega(n\log^3 n)$ memory per machine and cannot be run with weaker machines that have strongly sublinear memory.  

\paragraph{Our Result.} We provide the first sublogarithmic-time algorithm for $(\Delta+1)$ coloring and $(\Delta+1)$-list coloring in the $\MPC$ model with strongly sublinear memory per machine:

\begin{theorem}\label{thm:MPCCol}
There is an $\MPC$ algorithm that, in $O(\log^*\Delta+\sqrt{\log\log n}) = O(\sqrt{\log\log n})$ rounds, w.h.p.\ computes a $(\Delta+1)$ list-coloring of an $n$-vertex graph with $m$ edges and maximum degree $\Delta$ and that uses $O(n^{\alpha})$ memory per machine, for an arbitrary constant $\alpha>0$, as well as a total memory of $\widetilde{O}(m)$. 
\end{theorem}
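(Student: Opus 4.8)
The plan is to reduce the MPC problem to the congested-clique results already established, and to handle the strongly-sublinear memory regime by a recursive/partitioning scheme that shrinks the effective instance size until it fits on a single machine. First I would dispose of the case of large maximum degree $\Delta \ge \log^{4.1} n$: here I want to invoke (a distributed-to-MPC translation of) \Cref{thm:largedegree}. The subtlety is that the $\CLIQUE$ algorithm assumes $n$ machines each holding a full row of the adjacency structure, i.e.\ $\Omega(n\log n)$ memory, which we do not have. So the first real step is a simulation lemma: a $T$-round $\CLIQUE$ algorithm in which every vertex sends/receives $O(n\log n)$ bits per round can be run in $O(T)$ MPC rounds with $O(n^\alpha)$ memory per machine and $\widetilde O(m + n^{1+\alpha})$ total memory, provided the per-vertex communication can be routed as a bounded-degree (in the message-count sense) load-balanced all-to-all exchange — this is the standard Lenzen-type routing, which MPC supports when $S = n^{\alpha}$. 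I would state this as a black-box reduction and point to the known MPC sorting/routing primitives (Goodrich et al., Lenzen) for the details.

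Second, for the small-degree case $\Delta < \log^{4.1} n$, the global memory budget $\widetilde O(m)$ is generous relative to the ``information content'' of the instance, and here I would follow the now-standard recipe: first run the $O(\log^\ast\Delta)$-round pre-shattering phase (this is the graph-shattering step that underlies \Cref{thm:smalldegree} / \cite{ChangLP18}), after which the uncolored vertices induce components of size $O(\operatorname{poly}\log n)$. Each such component, together with its lists, has total size $\operatorname{poly}(\Delta)\cdot O(\log n)\cdot O(\operatorname{poly}\log n) = \operatorname{poly}\log n \ll n^{\alpha}$, so in $O(1)$ further MPC rounds I can gather each component onto a single machine (using sorting/hashing to collect components) and finish it by brute force or by the deterministic $(\deg+1)$-list coloring algorithm locally. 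The term $\sqrt{\log\log n}$ does not even arise in this branch; it only appears because the large-degree branch, when $\Delta$ is between $\log^{4.1} n$ and, say, $n^{1/2}$, recursively reduces the degree and the recursion depth / component-collection step contributes the $\sqrt{\log\log n}$ exactly as in the $\LOCAL$ shattering analyses of \cite{BEPS16, ChangLP18}. I would organize the degree reduction as: repeatedly split the palette / partition the graph so that each part has maximum degree $n^{1/2+\epsilon}$, recurse; the number of levels before the instance becomes ``small'' is $O(\log\log n)$ naively but $O(\sqrt{\log\log n})$ after the opportunistic speed-up of \cite{ChangLP18} applied per level.

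The main obstacle I anticipate is the memory accounting in the degree-reduction recursion, not the round count. Each recursive call operates on a subgraph, but the palettes are still of size $\Delta+1$ and the union of all subgraph edge-sets at a given recursion level is only $m$, so total memory stays $\widetilde O(m)$ — however I must be careful that the load-balancing after partitioning does not concentrate too many edges (or too-large palettes) on one machine, and that the all-to-all routing primitive invoked inside each recursive $\CLIQUE$-simulation respects the $O(n^\alpha)$ per-machine bound even though the ``clique'' at deeper levels has fewer vertices. Concretely I would prove a single invariant — ``after level $i$, every active subgraph has $\le n_i$ vertices, max degree $\le \Delta_i$, the $n_i$'s and $\Delta_i$'s shrink doubly-exponentially (resp.\ square-root-doubly-exponentially with the speed-up), and the data of each subgraph is evenly spread over $\widetilde O(m_i)/n^\alpha$ machines'' — and push it through the recursion, terminating when $\Delta_i \le \log^{4.1} n_i$ (falling into the small-degree branch) or when a whole subgraph fits on one machine. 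Assembling these pieces gives the claimed $O(\log^\ast\Delta + \sqrt{\log\log n})$ bound; the only place genuine new work (beyond bookkeeping) is hidden is in the routing lemma and in verifying that \Cref{thm:largedegree} and \Cref{thm:smalldegree} are stated in a form whose communication pattern is MPC-routable, which I would isolate as a short preliminary subsection.
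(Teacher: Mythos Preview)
Your proposal has two genuine gaps and misidentifies the source of the $\sqrt{\log\log n}$ term.

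First, the black-box simulation of the $\CLIQUE$ large-degree algorithm cannot work in the strongly-sublinear regime. That algorithm colors each part $B_i$ by shipping all of $G[B_i]$ (which has $\Theta(|V|)$ edges) to a single vertex; this step intrinsically requires $\Omega(n)$ memory at the receiving machine, and no amount of Lenzen-style routing fixes that when $S=O(n^{\alpha})$. The paper does not simulate the $\CLIQUE$ algorithm. Instead it recurses the partitioning of \Cref{lem:partition} on every $B_i$ and on $L$ until the maximum degree drops to $O(n^{\alpha/2})$, which happens after $O(1/\alpha)=O(1)$ levels; at that point each leaf instance satisfies $\Delta^2=O(n^{\alpha})$ and is solved directly by the low-memory-$\MPC$ implementation of CLP (\Cref{lemma:CLP}). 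The recursion depth is constant, not $\sqrt{\log\log n}$.

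Second, in your small-degree branch you assert that the post-shattering components of size $\poly\log n$ can be gathered onto single machines in $O(1)$ rounds. Gathering requires first labelling the components, and connected-component identification in sublinear-memory $\MPC$ is not known in $O(1)$ rounds; the natural approach (graph exponentiation/pointer jumping) costs $\Theta(\log\log n)$ here. The paper avoids full component identification: it runs the deterministic $(\deg+1)$-list-coloring algorithm of Panconesi--Srinivasan on the shattered graph, which has $\LOCAL$ locality $T=2^{O(\sqrt{\log n'})}=2^{O(\sqrt{\log\log n})}$ on components of size $n'=\poly\log n$, and simulates those $T$ rounds via graph exponentiation in $O(\log T)=O(\sqrt{\log\log n})$ $\MPC$ rounds. \emph{This} is where the $\sqrt{\log\log n}$ comes from --- not from the recursion depth, and not from any ``opportunistic speed-up'' (that lemma is a $\CLIQUE$ tool and plays no role in the $\MPC$ proof). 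Finally, the nontrivial memory accounting in the paper is not about routing load but about bounding $\sum_{H}\sum_{v\in H}(\deg_H(v))^2$, the total memory needed for every leaf instance to learn $2$-hop neighborhoods when running CLP; the paper shows this is $\widetilde O(m)$ after at least three partitioning steps, an argument your proposal does not cover.
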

The proof is presented in \Cref{subsec:MPC}.

\subsection{Centralized LOCAL Computation}
\paragraph{Model.} This Local Computation Algorithms (LCA) model is a centralized model of computation that was introduced in~\cite{RubinfeldTVX11}; an algorithm in this model is usually called an $\LCA$.
In this model, there is a graph $G = (V, E)$ where the algorithm is allowed to make the following queries:
\begin{description}
    \item[Degree Query:] Given $\ID(v)$, the oracle returns $\deg(v)$. 
    \item[Neighbor Query:]  Given $\ID(v)$ and an index $i \in [1,\Delta]$,  if $\deg(v) \leq i$, the oracle returns $\ID(u)$, 
    where $u$ is the $i$th neighbor of $v$; otherwise, the oracle returns $\bot$.
\end{description}
It is sometimes convenient to assume that there is a query that returns the list of all neighbors of $v$. This query can be implemented using one degree query and $\deg(v)$ neighbor queries.
For  randomized algorithms, we assume that there is an oracle that given $\ID(v)$ returns an infinite-length random sequence associated with the vertex $v$.
Similarly, for problems with input labels (e.g., the color lists in the list coloring problem), the input label of a vertex $v$ can be accessed given $\ID(v)$. Given a distributed problem $\mathcal{P}$, an $\LCA$ $\Algo$ accomplishes the following. Given $\ID(v)$, the algorithm $\Algo$ returns $\Algo(v) =$ the output of $v$, after making a small number of queries. It is required that the output of $\Algo$ at different vertices are consistent with one legal solution of $\mathcal{P}$. 

The complexity measure for an $\LCA$ is the number of queries.
It is well-known~\cite{ParnasR07} that any $\tau$-round $\LOCAL$ algorithm $\Algo$ can be transformed into an $\LCA$  $\Algo'$ with query complexity $\Delta^{\tau}$. The  $\LCA$  $\Algo'$ simply simulates the $\LOCAL$ algorithm $\Algo$ by querying all radius-$\tau$ neighborhood of the given vertex $v$.
See~\cite{levi2017centralized} for a recent survey about the state-of-the-art in the centralized local model.

\paragraph{State of the Art $\LCA$ for Coloring} The previous state-of-the-art for $(\Delta+1)$-list coloring in the centralized local computation model are based on  simulation of known $\LOCAL$ algorithms. The deterministic $O(\sqrt{\Delta} \polylog \Delta + \log^\ast n)$-round $\LOCAL$ algorithm of~\cite{FraigniaudHK16, barenboim2018locally}\footnote{Precisely, the complexity is $O(\sqrt{\Delta} \log^{2.5} \Delta + \log^\ast n)$ in~\cite{FraigniaudHK16}, and this has been later improved to $O(\sqrt{\Delta \log \Delta} \log^\ast \Delta + \log^\ast n)$ in~\cite{barenboim2018locally}.} can be implemented in the centralized local computation model with query complexity $\Delta^{O(\sqrt{\Delta} \polylog \Delta)} \cdot O(\log^\ast n)$; the randomized $O(\log^\ast \Delta) +2^{O(\sqrt{\log \log n})}$-round $\LOCAL$ algorithm of~\cite{ChangLP18} can be implemented in the centralized local computation model with query complexity $\Delta^{O(\log^\ast  \Delta)} \cdot O(\log n)$.

\paragraph{Our Result.} We show that $(\Delta+1)$-list coloring can be solved with $\Delta^{O(1)} \cdot O(\log n)$ query complexity. 
Note that $\Delta^{O(1)} \cdot O(\log n)$ matches a ``natural barrier'' for randomized algorithms based on the graph shattering framework, as each connected component in the post-shattering phase has this size $\Delta^{O(1)} \cdot O(\log n)$.

\begin{theorem}\label{thm:lca-main}
    There is an centralized local computation algorithm that solves the $(\Delta+1)$-list coloring problem with query complexity $\Delta^{O(1)} \cdot O(\log n)$, with success probability $1 - 1/\poly(n)$.
\end{theorem}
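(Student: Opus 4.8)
The plan is to reuse the $O(1)$-round shattering machinery behind \Cref{thm-congested-clique-main} inside the graph-shattering framework, but paying attention to \emph{locality} rather than round count: a $\tau$-round $\LOCAL$ procedure becomes a $\Delta^{\tau}$-query $\LCA$ by \cite{ParnasR07}, so to reach $\Delta^{O(1)}$ the pre-shattering phase must have \emph{constant} locality --- in contrast to the $O(\logstar\Delta)$-round pre-shattering of \cite{ChangLP18,HarrisSS18}, whose $\logstar\Delta$ stems from an iterated colour-reduction recursion. First I would compute the almost-clique (sparse--dense) decomposition, which is an $O(1)$-local computation, splitting $V$ into $\epsilon$-sparse vertices and a family of almost-cliques. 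Then I would run $O(1)$ rounds of partial coloring: sparse vertices are handled by a \oneshot{}/\trial{}-type step in which each vertex compares $\Theta(\log\Delta)$ independent colour trials against the \emph{sampled} colours of its neighbours --- comparing against sampled colours, not against whether those neighbours kept or discarded them, is exactly what keeps the locality equal to $1$ and prevents dependency cascades --- so that the $\Omega(\epsilon\Delta)$ slack produced in the first round drives the probability that a sparse vertex is still uncolored down to $\Delta^{-\Omega(1)}$; each almost-clique is handled by an internal \dense{}-type synchronized colour-trial step, again in $O(1)$ rounds, leaving a clique vertex uncolored with probability $\Delta^{-\Omega(1)}$. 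Throughout, the standard invariant ``residual list size $\ge$ (number of uncolored neighbours) $+\,1$'' is maintained, so the leftover instance is a genuine $(\deg+1)$-list-coloring instance. (As with \Cref{thm:largedegree} versus \Cref{thm:smalldegree}, the regimes $\Delta\ge\log^{4.1}n$ and $\Delta<\log^{4.1}n$ may need slightly different parameter choices, but the locality is $O(1)$ in both.)

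Given a pre-shattering phase that is $O(1)$-local and leaves every vertex uncolored with probability at most $\Delta^{-c}$ for a large constant $c$, the usual shattering argument gives that, w.h.p., the subgraph induced on the still-uncolored vertices has connected components of size $O(\Delta^{O(1)}\log n)$. The $\LCA$ then answers a query $\ID(v)$ as follows. Simulate the $O(1)$ pre-shattering rounds on the $O(1)$-ball of $v$ (cost $\Delta^{O(1)}$); if $v$ got colored, return that colour. Otherwise run a BFS of the residual graph from $v$, testing membership of each encountered vertex by the same $\Delta^{O(1)}$-query local simulation; since the component $C\ni v$ has $O(\Delta^{O(1)}\log n)$ vertices w.h.p., the exploration terminates after $\Delta^{O(1)}\cdot O(\log n)$ queries. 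Finally colour the $(\deg+1)$-list-coloring instance induced on $C$ by any centralized method --- greedily in an arbitrary order on the $O(\Delta^{O(1)}\log n)$ vertices of $C$ already suffices --- and output $v$'s colour. Different queries are consistent automatically: $C$, its residual lists, and the deterministic rule used to colour $C$ are all determined by the (once-and-for-all fixed) random tape together with $C$ itself, so any two queries landing in the same component report colours from the same proper colouring; the $1/\poly(n)$ failure probability is inherited from the shattering step.

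The step I expect to be hardest is \textbf{collapsing the pre-shattering to constant locality without creating long dependency chains.} Obtaining per-vertex survival probability $\Delta^{-\Omega(1)}$ in $O(1)$ rounds forces many colour trials per round, and the seemingly natural conflict rule ``keep colour $c$ unless a higher-$\ID$ neighbour keeps $c$'' is fatal for an $\LCA$, since resolving one vertex can cascade along a decreasing-$\ID$ path of length $\Omega(\Delta)$ (or longer), destroying the $\Delta^{O(1)}$ bound; one must instead make every vertex's decision depend only on the neighbours' first-level random choices. A secondary difficulty is the dense part: a single round of random colour assignment inside an almost-clique only removes a constant fraction of its vertices, so one has to show that after $O(1)$ such rounds the residual uncolored set is already thin enough (per vertex, $\Delta^{-\Omega(1)}$) for shattering to apply --- this should follow from the $\Delta+1-|C|$ slack together with standard martingale concentration, essentially as in the analysis supporting \Cref{thm:largedegree} and \Cref{thm:smalldegree}, but tracking the residual list sizes through the constant number of rounds is the fiddly part. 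Everything downstream --- the BFS on the residual graph, solving a single small component, and the final union bound over vertices --- is routine.
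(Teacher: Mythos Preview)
Your proposal has a genuine gap at exactly the step you flagged as hardest, and it is more serious than you acknowledge: after the $O(1)$-round initial and dense coloring steps of CLP (essentially \cref{lem:CLP-summary}), the surviving set $V_{\Good}$ only satisfies the invariant $\sum_{u\in\Nout(v)}1/p_u\le 1/C$ for a \emph{constant} $C$. One further round of \trial\ with $|S_v|\approx C/2$ then leaves $v$ uncolored with probability $\approx\exp(-C/6)$, which is still a constant; pushing the sample size up to $|S_v|=\Theta(\log\Delta)$ does not help, because each sampled color is then blocked by an out-neighbor with probability $\sum_u \Theta(\log\Delta)/p_u=\Theta(\log\Delta)/C\gg 1$. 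This is precisely why CLP needs $O(\log^\ast\Delta)$ rounds: the invariant parameter $C$ only climbs as $C\mapsto\exp(\Theta(C))$ per round, so reaching $C=\Theta(\log\Delta)$ (the point at which one more round gives $\Delta^{-\Omega(1)}$ failure) takes $\log^\ast\Delta$ iterations. Your sentence ``the $\Omega(\epsilon\Delta)$ slack\ldots drives the probability\ldots down to $\Delta^{-\Omega(1)}$'' is exactly the unjustified leap --- slack $\Omega(\epsilon\Delta)$ with $\epsilon$ constant only buys a constant in the exponent, not a $\log\Delta$.

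The paper takes a completely different route: it does \emph{not} try to collapse the pre-shattering to $O(1)$ rounds. It keeps the $\tau=O(\log^\ast\Delta)$ rounds of \trial\ but \emph{sparsifies} them: each vertex pre-samples $O(\poly\log\Delta)$ colors for all iterations up front, and then only needs to receive messages from the $\DeltaSp=O(\poly\log\Delta)$ neighbors that share a sampled color (``significant'' neighbors). The delicate part is replacing the original invariant $\Invariant_i(v)$ --- which mentions \emph{all} out-neighbors and hence cannot be checked in the sparsified graph --- by a surrogate $\Invariant_i'(v)$ (being ``rich'') that is checkable from significant neighbors alone and still guarantees the same $\exp(-C_i/6)$ success bound (\cref{lem:shrinkii}). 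The $\LCA$ query cost for this phase is then $\Delta^{O(1)}\cdot\DeltaSp^{\tau}=\Delta^{O(1)}\cdot(\poly\log\Delta)^{O(\log^\ast\Delta)}=\Delta^{O(1)}$, after which your post-shattering BFS argument (which is correct) finishes the job. So the missing idea is not ``avoid dependency chains in an $O(1)$-round procedure'' but rather ``keep the $O(\log^\ast\Delta)$ rounds and shrink the per-round fan-in to $\poly\log\Delta$.''
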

 The proof is presented in  \Cref{sect-implement-color-bidding}.


\section{Technical Overview: Tools and New Ingredients}\label{sec:prelim}
In this section, we first review some of the known technical tools that we will use in our algorithms, and then we overview the two new technical ingredients that lead to our improved results (in combination with the known tools).

\paragraph{Notes and Notations.}
When talking about randomized algorithms, we require the algorithm to succeed \emph{with high probability (w.h.p.)}, i.e., to have success probability at least $1 - 1/\poly(n)$. For each vertex $v$, we write $N(v)$ to denote the set of neighbors of $v$. If there is an edge orientation, $\Nout(v)$ refers to the set of out-neighbors of $v$.
We write $N^k(v) = \{ u \in V \ | \ \dist(u,v) \leq k\}$.
We use subscript to indicate the graph $G$ under consideration, e.g., $N_G(v)$ or $\Nout_G(v)$.
In the course of our algorithms, we slightly abuse the notation to also use $\Psi(v)$ to denote the set of {\em available colors} of $v$. i.e.,  the subset of $\Psi(v)$ that excludes the colors already taken by its neighbors in $N(v)$. The number of {\em excess colors} at a vertex is the number of available colors minus the number of uncolored neighbors. Moreover, we make an assumption that each color can be represented using $O(\log n)$ bits. This is without loss of generality (in all of the models under consideration in our paper), since otherwise we can hash the colors down to this magnitude, as we allow a failure probability of $1/\poly(n)$ for randomized algorithms.


\subsection{Tools}

\paragraph{Lenzen's Routing.}
The routing algorithm of Lenzen~\cite{Lenzen2013} for $\CLIQUE$ allows us to deliver all messages in $O(1)$ rounds, as long as each vertex $v$ is the source and the
destination of at most $O(n)$ messages.
This is a very useful (and frequently used) communication primitive for designing $\CLIQUE$ algorithms.

\begin{lemma}[Lenzen's Routing]\label{lem:routing}
Consider a graph $G=(V,E)$ and a set of point-to-point routing requests, each given by the $\ID$s of the corresponding source-destination pair. As long as each vertex $v$ is the source and the destination of at most $O(n)$ messages, namely $O(n \log n)$ bits of information,  we can deliver all messages in $O(1)$ rounds in the $\CLIQUE$ model.
\end{lemma}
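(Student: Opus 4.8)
The plan is to reduce the given routing instance to a small number of instances of a \emph{balanced} routing problem, where each vertex is source and destination of \emph{exactly} $\Theta(n)$ messages, and then solve the balanced case by a two-hop relaying scheme through uniformly random intermediate vertices. So the first step is normalization: by adding at most a constant number of dummy messages per source--destination pair (or padding with null messages addressed to arbitrary vertices), we may assume each vertex sends exactly $cn$ and receives exactly $cn$ messages for a suitable constant $c$; constant-round overhead and an $O(n\log n)$ per-vertex budget are preserved. We then split the $cn$ messages originating at each vertex into $c$ groups of $n$ messages each, and handle each group in a separate round (a constant number of rounds total), so it suffices to route an instance where every vertex is the source of exactly $n$ messages and the destination of at most $cn$ messages.

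The core step is the randomized two-hop scheme. In the first phase, each vertex $v$ distributes its $n$ outgoing messages so that message number $i$ is sent to relay vertex $\sigma_v(i)$, where the assignment is chosen so that \emph{each relay receives exactly one message from each source} — e.g., by letting $v$ send its $i$-th message to vertex $i$ under a fixed labeling $V=\{1,\dots,n\}$, or by using independent random permutations $\sigma_v$. Either way, in the first phase every vertex sends exactly $n$ messages and receives exactly $n$ messages, so the phase is trivially implementable in one round under the $\CLIQUE$ bandwidth bound (each of the $n$ messages is $O(\log n)$ bits, for $O(n\log n)$ bits per vertex). In the second phase, each relay forwards every message it holds directly to that message's true destination. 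The only thing to verify is that no relay has to send more than $O(n)$ messages to any single destination in total, i.e.\ that the load on each relay-to-destination link is $O(1)$ in expectation and w.h.p., so that the second phase also respects the bandwidth constraint and completes in $O(1)$ rounds.

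The main obstacle — and the place where the randomness is actually used — is bounding this second-phase load. Fix a destination $w$; it is supposed to receive at most $cn$ messages, spread over the $n$ relays. For the deterministic labeling variant, one observes that the multiset of (source, index) pairs destined for $w$ has size $\le cn$, and since each source contributes at most one message to each relay, the number of messages for $w$ sitting at a fixed relay is at most the number of \emph{sources} that sent a $w$-bound message through that relay; a counting/pigeonhole argument over the $\le cn$ total $w$-bound messages and $n$ relays gives an $O(1)$ average, and a Chernoff bound over the random permutations $\sigma_v$ (which are independent across $v$) upgrades this to an $O(\log n / \log\log n)$ or $O(1)$ high-probability bound on the per-relay, per-destination load — either of which is enough, since we only need $O(n)$ total messages per vertex per round and can absorb a logarithmic factor into a constant number of rounds if necessary. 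I would present the argument with fully independent random choices so that the relevant indicator variables (message $j$ for $w$ is routed via relay $r$) are negative-association-friendly and a union bound over the $n$ destinations and $n$ relays closes the claim with failure probability $1/\poly(n)$. Finally, I would remark that a cleaner route is to cite Lenzen's original analysis verbatim; but since the lemma is stated as a black box we will be relied upon later, the two-phase random-relay proof above is the self-contained argument I would record.
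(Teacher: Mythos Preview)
The paper does not prove this lemma at all: it is stated as a black-box tool with a citation to Lenzen's paper, and no argument is given. So there is no ``paper's own proof'' to compare against; the intended proof is simply the reference.

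Your self-contained attempt has a real gap. The randomized two-hop (Valiant-style) scheme you describe does \emph{not} yield $O(1)$ rounds. After phase one, each relay holds exactly $n$ messages; in phase two, the number of messages a fixed relay $r$ must send to a fixed destination $w$ is a balls-in-bins--type quantity whose maximum over $(r,w)$ is $\Theta(\log n/\log\log n)$ with high probability, not $O(1)$. Since the $\CLIQUE$ model allows only one $O(\log n)$-bit message per directed edge per round, phase two then costs $\Theta(\log n/\log\log n)$ rounds. Your sentence ``either of which is enough, since we only need $O(n)$ total messages per vertex per round and can absorb a logarithmic factor into a constant number of rounds if necessary'' is the error: the constraint that bites is the \emph{per-edge} bandwidth, not the per-vertex total, and a $\log n/\log\log n$ per-edge load cannot be absorbed into $O(1)$ rounds.

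Lenzen's actual result is more delicate and, in fact, deterministic: it relies on globally sorting the messages by destination (which can itself be done in $O(1)$ rounds in the clique) and then using the sorted ranks to compute, for each message, an intermediate vertex so that both hops are perfectly balanced on every edge. If you want a self-contained proof, you would need to reproduce that sorting-based argument; the random-relay sketch only recovers the weaker $O(\log n/\log\log n)$ bound. For the purposes of this paper, the correct move is exactly what the paper does: cite \cite{Lenzen2013} and move on.
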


\paragraph{The Shattering Framework.}
Our algorithm follows the {\em graph shattering} framework~\cite{BEPS16}, which first performs some randomized process (known as \emph{pre-shattering}) to solve ``most'' of the problem, and then performs some clean-up steps (known as \emph{post-shattering}) to solve the remaining part of the problem. Typically, the remaining graph is simpler in the sense of having small components and having a small number of edges. Roughly speaking, at each step of the algorithm, we specify an invariant that all vertices must satisfy in order to continue to participate. Those {\em bad vertices} that violate the invariant are removed from consideration, and postponed to the post-shattering phase. We argue that the bad vertices form connected components of size $\Delta^{O(1)} \cdot O(\log n)$ with probability $1 - 1/\poly(n)$; we use this in designing $\LCA$.
Also, the total number of edges induced by the bad vertices is $O(n)$. Therefore, using Lenzen's routing, in $\CLIQUE$  we can gather all information about the bad vertices to one distinguished vertex $v^\star$, and then  $v^\star$ can color them locally.
More precisely, we have the following lemma~\cite{BEPS16,FischerG17}; see \cref{Sect:proof-shattering-lemma} for the proof.

\begin{lemma}[The Shattering Lemma] \label{lem:shatter}
Let $c \geq 1$. Consider a randomized procedure that generates a subset of vertices $B \subseteq V$.
Suppose that for each $v \in V$, we have $\Prob[v \in B] \leq \Delta^{-3c}$, and this holds even if the random bits not in
$\Ninc^{c}(v)$ are determined adversarially.
Then, the following is true.
\begin{enumerate}
\item With probability $1 - n^{- \Omega(c')}$, each connected component in the graph induced by
$B$ has size at most $(c'/c) \Delta^{2c} \log_{\Delta} n$.
\item With probability $1 - O(\Delta^c) \cdot \exp(-\Omega(n \Delta^{-c}))$, the number of edges induced by $B$ is $O(n)$.
\end{enumerate}
\end{lemma}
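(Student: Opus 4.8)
The plan is to prove the two parts separately, both via union bounds over structures in $B$, exploiting the fact that membership in $B$ depends only on the random bits in the radius-$c$ neighborhood of a vertex.

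For part (1), I would bound the probability that $B$ contains a large ``spread-out'' set of vertices that are pairwise far apart in $G$ but close enough that membership events are independent. Concretely, fix a parameter $k = c'/c \cdot \Delta^{2c}\log_\Delta n$ and suppose some connected component of the graph induced by $B$ has more than $k$ vertices. First I would pass to an auxiliary graph $H$ on vertex set $V$ where $u \sim_H w$ iff $\dist_G(u,w) \le 2c+1$; any connected subgraph of $G[B]$ on more than $k$ vertices contains an $H$-connected subset $T \subseteq B$ with $|T| \ge k/\Delta^{2c+1}$ that is \emph{$(2c+1)$-separated} in $G$ (greedily pick vertices, discarding the $\le \Delta^{2c+1}$ vertices within $G$-distance $2c+1$ of each chosen one). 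For such a $T$, the events $\{v \in B\}_{v\in T}$ involve disjoint sets of random bits, since the neighborhoods $\Ninc^c(v)$ are pairwise disjoint; by the hypothesis (which allows the bits outside $\Ninc^c(v)$ to be adversarial), $\Prob[T \subseteq B] \le (\Delta^{-3c})^{|T|}$. Then I would union-bound over all $H$-connected $T$ of size exactly $t := \lceil k/\Delta^{2c+1}\rceil$: the number of connected subgraphs of size $t$ in a bounded-degree graph is at most $n \cdot (4\Delta^{2c+1})^t$ (standard counting of connected subgraphs / subtrees, since $H$ has max degree $\le \Delta^{2c+1}$). Multiplying, the failure probability is at most $n \cdot (4\Delta^{2c+1})^t \cdot \Delta^{-3ct} \le n \cdot \Delta^{-ct + O(ct)}$; choosing the constant in $k$ appropriately makes $ct \ge \Omega(c' \log_\Delta n)$, giving a bound of $n^{-\Omega(c')}$ as claimed.

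For part (2), the number of edges induced by $B$ is at most $\Delta \cdot |B|$, so it suffices to show $|B| = O(n/\Delta)$ w.h.p.\ in the stated sense. I would cover $V$ by $O(\Delta^c)$ classes $V_1, \dots, V_r$ such that within each class any two vertices are at $G$-distance more than $2c$ (a proper coloring of the power graph $G^{2c}$, which has max degree $< \Delta^{2c}$; actually $r = \Delta^{2c}+1$ suffices, and $r = O(\Delta^c)$ can be arranged if one is more careful, but any $\poly(\Delta)$ bound works for the statement as the exponent is absorbed). Within a fixed class $V_i$, the events $\{v \in B\}_{v \in V_i}$ are mutually independent (disjoint random bits), each of probability $\le \Delta^{-3c} \le \Delta^{-c}$, so by a Chernoff bound $\Prob[|B \cap V_i| \ge 2\Delta^{-c}|V_i| + 1] \le \exp(-\Omega(n\Delta^{-c}))$ when $|V_i|$ is a constant fraction of $n$ (and is trivially controlled otherwise). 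A union bound over the $O(\Delta^c)$ classes gives $|B| = O(n\Delta^{-c})$ with probability $1 - O(\Delta^c)\exp(-\Omega(n\Delta^{-c}))$, hence $O(n)$ edges.

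The main obstacle is the counting step in part (1): getting the bound on the number of $H$-connected vertex sets of a given size right, and then balancing the separation loss (the $\Delta^{2c+1}$ factor from thinning to an independent set) against the $\Delta^{-3c}$ per-vertex probability so that the exponents still leave a net saving of $\Omega(c\log_\Delta n)$ per vertex in $T$ after accounting for the $\log(4\Delta^{2c+1})$ entropy cost of describing each vertex of the connected set. This is exactly why the hypothesis demands $\Delta^{-3c}$ rather than merely $\Delta^{-c}$: one factor of $\Delta^{-c}$ pays for the neighborhood-blow-up in the separation argument, one pays for the subgraph-counting entropy, and one is left over to drive the probability down. I would double-check the constant bookkeeping there; everything else is routine.
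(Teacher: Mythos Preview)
Your approach for part~(1) is exactly the standard argument the paper has in mind; the paper does not reprove it but simply cites it as well-known from the shattering literature. The constant bookkeeping you flag is indeed the only delicate point, and your intuition about where the three factors of $\Delta^{-c}$ go is the right one.

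For part~(2) the paper takes a different, more direct route. Rather than bounding $|B|$ via a partition into well-separated classes and then using $|E(G[B])|\le \Delta\,|B|$, it works with edge indicators $X_e=\mathbf{1}[u\in B\text{ and }v\in B]$ for $e=\{u,v\}$, notes $\Expect\bigl[\sum_e X_e\bigr]\le n\Delta^{1-3c}\ll n$, and applies a Chernoff bound for sums of $\{0,1\}$ variables with bounded dependency (two edge events interact only when their radius-$c$ neighborhoods overlap) to obtain the stated tail bound in one shot. Your detour is also valid in spirit, but a proper coloring of $G^{2c}$ genuinely requires $\Theta(\Delta^{2c})$ classes, not $O(\Delta^c)$, so the union-bound prefactor your argument actually produces is $O(\Delta^{2c})$ rather than the stated $O(\Delta^c)$; this is immaterial for every application in the paper, but it does not literally match the lemma, and I do not see how ``being more careful'' recovers $O(\Delta^c)$ along your route. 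The edge-indicator argument sidesteps the issue by never partitioning the vertex set.
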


\paragraph{Round Compression in $\CLIQUE$ and $\MPC$ by Information Gathering.}
Suppose we are given a $\tau$-round $\LOCAL$ algorithm $\mathcal{A}$ on a graph of maximum degree $\Delta$. A direct simulation of $\mathcal{A}$ on $\CLIQUE$ costs also $\tau$ rounds. However, if each vertex $v$ already knows all information in its radius-$\tau$ neighborhood, then $v$ can locally compute its output in zero rounds. In general, this amount of information can be as high as $\Theta(n^2)$, since there could be $\Theta(n^2)$ edges in the radius-$\tau$ neighborhood of $v$. 
For the case of $\Delta^{\tau} = O(n)$, it is possible to achieve an exponential speed-up in the round complexity in the $\CLIQUE$, compared to that of $\LOCAL$. In particular, in this case, each vertex $v$ can learn its radius-$\tau$ neighborhood in just $O(\log \tau)$ rounds in $\CLIQUE$. Roughly speaking, after $k$ rounds, we are able to simulate the product graph $G^{2^k}$, which is the graph where any two vertices with distance at most $2^{k}$ in graph $G$ are adjacent. This method is known as  {\em graph exponentiation}~\cite{LenzenW2010}, and it has been applied before in the design of algorithms in $\CLIQUE$ and $\MPC$ models, see e.g., ~\cite{Ghaffari2017,GhaffariU2018, ParterS18, Parter18, Andoni2018}.

\paragraph{Round Compression via Opportunistic Information Gathering.} 
Our goal is to achieve the $O(1)$ round complexity in $\CLIQUE$, so an exponential speed-up compared to the $\LOCAL$ model will not be enough. 
Consider the following ``opportunisitc'' way of simulating a $\LOCAL$ algorithm $\Algo$ in the $\CLIQUE$ model. Each vertex $u$ sends its local information (which has $O(\Delta \log n)$ bits) to each vertex $v \in V$ with some fixed probability $p = O(1/\Delta)$, independently, and it hopes that there exists a vertex $v \in V$ that gathers all the required information to calculate the outcome of $\Algo$ at $u$.
To ensure that for each $u$, there exists such a vertex $v$ w.h.p., it suffices that $p^{\Delta^{\tau}}  \gg \frac{\log n}{n}$. 
We note that a somewhat similar idea was key to the $O(1)$-round MST algorithm of~\cite{JurdzinskiN18} for $\CLIQUE$.

\cref{lem:speedup}, presented below, summarizes the criteria for this method to work; see \cref{sect:speed-up-alg} for the proof of the lemma. Denote $\InSize$ as the number of bits needed to represent the   random bits and the input for executing $\Algo$ at a vertex. Denote $\OutSize$ as the number of bits needed to represent the output of  $\Algo$ at a vertex. 
We  assume that each vertex $v$ initially knows a set $\NSp(v) \subseteq N(v)$ such that throughout the algorithm $\Algo$, each vertex $v$ only receives information from vertices in $\NSp(v)$. 
We write $\DeltaSp = \max_{v \in v} |\NSp(v)|$. Note that it is possible that $u \in \NSp(v)$ but $v \notin \NSp(u)$. In this case, during the execution of $\Algo$, all messages sent via the edge $\{u, v\}$  are from $u$ to $v$. 
Denote $\NSp^k(v)$ as the set of all vertices $u$ such that there is a path $(v = w_0, w_1, \ldots, w_{x-1} = u)$ such that $x \leq k$ and $w_i \in \NSp(w_{i-1})$ for each $i \in [1, x-1]$. Intuitively, if $\Algo$ takes $\tau$ rounds, then all information needed for vertex $v \in V$ to calculate its output is the IDs and the inputs of all vertices in $\NSp^{\tau}(v)$.

\begin{lemma}[\speedupalgo] \label{lem:speedup}
    Let $\Algo$ be a $\tau$-round $\LOCAL$ algorithm on $G = (V, E)$. There is an $O(1)$-round simulation of  $\Algo$ in  in $\CLIQUE$, given that (i) $\DeltaSp^{\tau} \log(\DeltaSp + \InSize /\log n) = O(\log n)$, (ii) $\InSize = O(n)$, and (iii) $\OutSize = O(\log n)$.
\end{lemma}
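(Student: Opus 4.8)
The plan is to implement the opportunistic gathering idea: have every vertex broadcast its local state (ID, input, random bits) to a small random subset of the other vertices, and then argue that, with high probability, every vertex $v$ finds at least one "lucky" destination that has received the states of the entire set $\NSp^{\tau}(v)$. Concretely, I would fix a sampling probability $p = \Theta(1/\DeltaSp)$ and have each vertex $u$ include itself in a random set $S_w$ for each $w \in V$ independently with probability $p$; the message $u$ sends to $w$ consists of $\ID(u)$ together with $u$'s $\InSize$-bit input-and-randomness string, which we may assume is what $u$ needs to run $\Algo$. Since $\Algo$ is a $\tau$-round $\LOCAL$ algorithm and, by hypothesis, vertex $v$ only ever receives information originating in $\NSp^{\tau}(v)$, any machine that knows the states of all of $\NSp^{\tau}(v)$ can locally simulate $\Algo$ at $v$ and output $\Algo(v)$.

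The main steps, in order: (1) \emph{Bounding neighborhood size.} Note $|\NSp^{\tau}(v)| \le \DeltaSp^{\tau}$, and let $K = \DeltaSp^{\tau}$. (2) \emph{Existence of a lucky vertex.} For a fixed $v$, a fixed vertex $w$ is "lucky for $v$" if $u \in S_w$ for every $u \in \NSp^{\tau}(v)$; this happens with probability $p^{|\NSp^{\tau}(v)|} \ge p^{K}$. These events are independent over the $n - 1$ choices of $w \ne v$ (they depend on disjoint blocks of coin flips, one block per target vertex), so the probability that no $w$ is lucky for $v$ is at most $(1 - p^{K})^{n-1} \le \exp(-(n-1)p^{K})$. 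Hypothesis (i) gives $K \log(\DeltaSp + \InSize/\log n) = O(\log n)$, hence $K \log \DeltaSp = O(\log n)$, so $p^{K} = \DeltaSp^{-\Theta(K)} = n^{-O(1)}$; choosing the constant in $p$ appropriately makes $(n-1)p^{K} \ge C \log n$ for any desired constant $C$, so a union bound over all $n$ vertices $v$ leaves failure probability $n^{-\Omega(1)}$. (3) \emph{Routing in $O(1)$ rounds.} Each vertex $u$ is the source of a message to $w$ only when $u \in S_w$; the expected number of such $w$ is $np = O(n/\DeltaSp) = O(n)$, and by a Chernoff bound it is $O(n)$ w.h.p. (when this concentration fails we can abort, folding the event into the failure probability). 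Each message has $O(\log n + \InSize)$ bits; since $\InSize = O(n)$ by hypothesis (ii), the total volume originating at $u$ is $O(n) \cdot O(\log n + n) = O(n^{2})$ bits $= O(n)$ messages of $O(\log n)$ bits each — wait, more carefully: $O(n)$ messages each of $O(\log n + \InSize)$ bits is $O(n \log n + n \cdot \InSize)$ bits, which is $O(n^2)$ bits, i.e.\ $O(n)$ words of size $O(n \log n)$; this still fits Lenzen's routing (\Cref{lem:routing}) which only requires each vertex to be source and destination of $O(n \log n)$ bits — so in fact we should send the $\InSize$-bit payloads in $O(\InSize/\log n) = O(n/\log n)$ rounds, or simply observe each vertex sends and receives $O(n)$ messages in the sense of the lemma by treating an $\InSize$-bit payload as $O(\InSize/\log n)$ unit messages and invoking \Cref{lem:routing} a constant number of times since each vertex handles $O(n) \cdot O(\InSize/\log n) \cdot O(\log n) = O(n\cdot \InSize) = O(n^2)$... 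I would instead route in $O(1)$ batches: each vertex is the destination of messages from its random in-set, of expected size $O(n/\DeltaSp) \cdot$... On the receiving side, vertex $w$ receives a message from each $u$ with $u \in S_w$, which is $np = O(n)$ vertices in expectation, each contributing $\InSize = O(n)$ bits — this is $O(n^2)$ bits at $w$, exceeding $S$. So the correct accounting is that we do \emph{not} send full $\InSize$-bit payloads to every lucky candidate; rather we first send only $\ID$s ($O(\log n)$ bits) so each $w$ learns which $u$'s it must collect, identifies whether it is lucky for some $v$, and only then pulls the $\InSize$-bit payloads — but a cleaner route is simply to observe that the total number of (vertex, message) pairs with $\InSize$-bit payloads is $O(n^2)$ words spread over $\Omega(n)$ machines, handled by $O(1)$ applications of Lenzen's routing after splitting each payload into $O(n/\log n)$ unit messages and noting each vertex is source/destination of $O(n) \cdot O(n/\log n)$ unit messages $= O(n^2/\log n)$ which is $O(n)$... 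I will present this as: since $\InSize = O(n)$, an $\InSize$-bit payload is $O(n/\log n)$ standard messages; each vertex sends such payloads to $O(n)$ targets and receives from $O(n)$ sources, giving $O(n^2/\log n)$ standard messages per vertex in each direction, routable by $O(1)$ rounds of \Cref{lem:routing} applied $O(n/\log n)$... hmm, that is $\omega(1)$.

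Let me correct the accounting cleanly, since this \emph{is} the crux. The fix is to make $p$ just barely large enough: we need $np^{K} = \Omega(\log n)$, and we are free to pick $p$ as \emph{small} as $(\Theta(\log n / n))^{1/K}$. With $K = \DeltaSp^{\tau}$ and hypothesis (i) ensuring $K \log \DeltaSp = O(\log n)$, this minimal $p$ satisfies $np = n \cdot (\Theta(\log n)/n)^{1/K}$, which is $O(n)$ but could be close to $n$ when $K$ is large. The right move is the two-phase pull: Phase A, each $u$ sends only $\ID(u)$ (an $O(\log n)$-bit token) to each $w$ with $w$ choosing $u$ independently w.p.\ $p$; each $w$ receives $O(np) = O(n)$ tokens — but with $O(\log n)$ bits each this is $O(n\log n)$ bits, fine for one machine, and routable by \Cref{lem:routing}. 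Each $w$ checks, for each $v$ whose structure it can determine, whether it is lucky for $v$; by step (2) every $v$ has some lucky $w_v$ w.h.p. Phase B, $w_v$ requests the $\InSize$-bit payloads of the $\le K$ vertices in $\NSp^{\tau}(v)$; each vertex $u$ is requested by at most (number of $v$'s for which it is relevant) $\times$ ... which we bound so that each vertex sends $O(n)$ payload-messages total. Then since $\InSize = O(n)$, $w_v$ receives $\le K \cdot \InSize$ bits; we need $K \cdot \InSize = O(S)$ — here is where the phrasing of hypothesis (i) matters, and I would tune constants so that a single distinguished machine can be designated per $v$ with enough memory, or distribute the simulation. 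I expect the real obstacle to be exactly this bookkeeping: showing that the random in-degrees and out-degrees are $O(n)$ simultaneously for all vertices w.h.p.\ (Chernoff + union bound, using that $p = O(1/\DeltaSp)$ so $np = O(n)$) and that the payload volume at each collector is within the routing budget of \Cref{lem:routing}. Everything else — correctness of the simulation given a lucky vertex, the exponential failure bound, the final union bound — is routine. Once the routing budget is verified, $w_v$ locally runs $\Algo$ on $\NSp^{\tau}(v)$ to produce $\Algo(v)$ and sends it back to $v$ in one more round, using hypothesis (iii) that $\OutSize = O(\log n)$ so the return messages are cheap, completing the $O(1)$-round simulation. $\qed$
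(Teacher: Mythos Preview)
Your opportunistic-gathering plan is the right idea, and steps (1), (2), and the final return-the-output step are fine. The genuine gap is your choice of the sampling probability $p$, which is why your routing accounting collapses.

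You set $p = \Theta(1/\DeltaSp)$ and then only use hypothesis (i) in the weakened form $K\log\DeltaSp = O(\log n)$. But the full hypothesis reads $\DeltaSp^{\tau}\log(\DeltaSp + \InSize/\log n) = O(\log n)$, and the $\InSize$ term is there for a reason: it lets you take $p = \Theta\bigl(1/(\DeltaSp + \InSize/\log n)\bigr)$, which is potentially much smaller than $1/\DeltaSp$, while still keeping $p^{K} \ge n^{-1/2}$ (since $K\log(1/p) = \DeltaSp^{\tau}\log(\DeltaSp + \InSize/\log n) \le \tfrac12\log n$ after adjusting constants). With this correct $p$, the routing budget works out in one shot with no pull phase: each vertex $u$ sends its full local packet---$\ID(u)$, the list of IDs in $\NSp(u)$, and its $\InSize$-bit input/randomness, altogether $\Theta(\DeltaSp\log n + \InSize)$ bits, i.e.\ $\Theta(\DeltaSp + \InSize/\log n)$ standard $O(\log n)$-bit messages---to each target it samples. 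The expected number of targets is $np$, so the expected number of standard messages sourced at $u$ is $np\cdot\Theta(\DeltaSp + \InSize/\log n) = \Theta(n)$, and symmetrically on the receiving side; a Chernoff bound (using $np = \Omega(\log n)$, which follows from (i) and (ii)) makes this $O(n)$ w.h.p., so a single application of \Cref{lem:routing} suffices.

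Your attempted two-phase ``pull'' repair does not work as written: in Phase~A the receiver $w$ only sees a set of IDs, so it cannot reconstruct $\NSp^{\tau}(v)$ for any $v$ and hence cannot decide whether it is lucky. Sending the $\NSp$-list along with the ID already requires $\Theta(\DeltaSp\log n)$ bits per packet, and once you account for that you are back to needing the smaller $p$ above. In short, the fix is not a second phase but the right $p$; with it, the paper's one-phase push goes through directly.
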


\subsection{Our New Technical Ingredients, In a Nutshell}
\label{subsec:ingredients}

The results in our paper are based on the following two novel technical ingredients, which are used in combination with the known tools mentioned above: (i) a new graph partitioning algorithm for coloring and (ii) a sparsification of the CLP coloring algorithm~\cite{ChangLP18}. 
We note that the first ingredient suffices for our $\CLIQUE$ result for graphs with maximum degree at least $\poly(\log n)$, and also for our $\MPC$ result. This ingredient is presented in \Cref{sect-high-deg-decomp}. The second ingredient, which is also more involved technically, is used for extending our $\CLIQUE$ result to graphs with smaller maximum degree, as well as for our $\LCA$ result. This ingredient is presented in \Cref{sec:sparseCLP}. Here, we provide a brief overview of these ingredients and how they get used in our results. 


\paragraph{Ingredient 1 --- Graph Partitioning for Coloring.} We provide a simple random partitioning that significantly simplifies and extends
the one in~\cite{Parter18,ParterS18}. The main change will be that, besides partitioning the vertices randomly, we also partition
the colors randomly.
In particular, this new procedure partitions the vertices and colors in a way that allows us to easily
apply CLP in a black box manner. 

Concretely, our partitioning breaks the graph as well as the
respective palettes randomly into many subgraphs $B_1, \ldots, B_k$ of maximum degree $O(\sqrt{n})$ and size $O(\sqrt{n})$, while ensuring that each vertex
in these subgraphs receives a random part of its palette with size close to the maximum degree of
the subgraph. The palettes for each part are disjoint, which allows us to color all parts in parallel. There will be one left-over subgraph $L$, with maximum degree  $\tilde{O}(\Delta^{3/4})$, as well as sufficiently large remaining palettes for each vertex in this left-over subgraph.
\begin{description}
\item[Application in $\CLIQUE$:] Since each subgraph has $O(n)$ edges, all of $B_1, \ldots, B_k$ can be colored, in parallel, in $O(1)$ rounds, using Lenzen's routing (\cref{lem:routing}). The left-over part $L$ is handled by recursion. We show that when $\Delta > \log^{4.1} n$, we are done after $O(1)$ levels of recursion.
\item[Application in Low-memory MPC:] We perform recursive calls on not only on $L$ but also on $B_1, \ldots, B_k$.  After $O(1)$ levels of recursion, the maximum degree can be made $O(n^{\beta})$, for any given constant $\beta > 0$, which enables us to run the CLP algorithm on a low memory $\MPC$.
\end{description}

We note that the previous partitioning approach~\cite{Parter18,ParterS18} is unable to reduce the maximum degree to below $\sqrt{n}$; this is a significant limitation that our partitioning overcomes.



\paragraph{Ingredient 2 --- Sparsification of the CLP Algorithm.}
In general, to calculate the output of a vertex $v$ in a $\tau$-round $\LOCAL$ algorithm $\Algo$, the output may depend on all of the $\tau$-hop neighborhood of $v$ and we may need to query  
 $\Delta^{\tau}$ vertices. To efficiently simulate $\Algo$ in $\CLIQUE$ or to transform $\Algo$ to an $\LCA$, a strategy is to ``sparsify'' the  algorithm  $\Algo$ so that the number of vertices a vertex has to explore to decide its output is sufficiently small. This notion of sparsification is a key idea behind some recent algorithms~\cite{Ghaffari2017,GhaffariU2018}. In the present paper, a key technical ingredient is providing such a sparsification for the $(\Delta+1)$ coloring algorithm of CLP~\cite{ChangLP18}.

The pre-shattering phase of the CLP algorithm~\cite{ChangLP18} consists of three parts: (i) initial coloring, (ii) dense coloring, and (iii) color bidding. Parts (i) and (ii) take $O(1)$ rounds;\footnote{In the preliminary versions (arXiv:1711.01361v1 and STOC'18) of~\cite{ChangLP18},  dense coloring takes $O(\log^\ast \Delta)$ time. This time complexity has been later improved to $O(1)$ in a revised full version of~\cite{ChangLP18} (arXiv:1711.01361v2).  
} part (iii) takes $\tau = O(\log^\ast \Delta)$ rounds.
In this paper, we sparsify the  color bidding part of the CLP algorithm.
We let each vertex $v$ sample $O(\poly \log \Delta)$ colors from its palette at the beginning of this procedure, and we show that with probability $1 - 1/\poly(\Delta)$, these colors are enough for $v$ to correctly execute the algorithm.
Based on the sampled colors, we can do an $O(1)$-round pre-processing step to let each vertex $v$ identify a subset of neighbors $\NSp(v) \subseteq N(v)$ of size $\DeltaSp = O(\poly \log \Delta)$ neighbors $\NSp(v) \subseteq N(v)$, and $v$ only needs to receive messages from neighbors in $\NSp(v)$ in the subsequent steps of the algorithm. 
\begin{description}
\item[Application in $\CLIQUE$:] For the case $\Delta = O(\poly \log n)$, the parameters $\tau = O(\log^\ast \Delta)$ and $\DeltaSp = O(\poly \log \Delta) = O(\poly (\log \log n))$ satisfy the condition for applying the opportunistic speedup lemma (\cref{lem:speedup}), and so the pre-shattering phase of the CLP algorithm can be simulated in $O(1)$ rounds in $\CLIQUE$.
\item[Application in Centralized Local Computation:] With sparsification, the pre-shattering phase of the CLP algorithm can be transformed into an $\LCA$ with $\Delta^{O(1)} \cdot \DeltaSp^\tau = \Delta^{O(1)}$ queries.
\end{description}

The recent work~\cite{AssadiCK18} on $(\Delta+1)$-coloring in $\MPC$ is also based on some form of palette sparsification, as follows. They showed that if  each vertex samples $O(\log n)$ colors uniformly at random, then w.h.p., the graph still admits a proper coloring using the sampled colors. 
Since we only need to consider the edges $\{u,v\}$ where $u$ and $v$ share a sampled color, this effectively reduces the degree to $O(\log^2 n)$. 
For an $\MPC$ algorithm with $\tilde{O}(n)$ memory per processor, the entire sparsified graph can be sent to one processor, and a coloring can be computed there, using any coloring algorithm, local or not. This sparsification is not applicable for our setting. In particular, in our sparsified CLP algorithm, we need to ensure that the coloring can be computed by a $\LOCAL$ algorithm with a small locality volume; this is because the final coloring is constructed distributedly via the opportunistic speedup lemma (\cref{lem:speedup}).

\newcommand{\epss}[1]{\epsilon_{#1}}

\newcommand{\aaaa}{\gamma}
\newcommand{\bbbbb}{\lambda}

\section{Coloring of High-degree Graphs via Graph Partitioning \label{sect-high-deg-decomp}}

In this section, we describe our graph partitioning algorithm, which is the first new technical ingredient in our results. As mentioned in \Cref{subsec:ingredients}, this ingredient on its own leads to our $\CLIQUE$ result for graphs with $\Delta=\Omega(\poly(\log n))$ and also our $\MPC$ result, as we will explain in \Cref{subsec:CongestedClique} and \Cref{subsec:MPC}, respectively. 
The algorithm will be applied recursively, but it is required that the failure probability is at most $1 - 1/\poly(n)$ in all recursive calls, where $n$ is the number of vertices in the original graph. Thus, in this section, $n$ does not refer to the number of vertices in the current subgraph $G = (V,E)$ under consideration.

\subsection{Graph Partitioning}
\label{subsec:partition}
\paragraph{The Graph Partitioning Algorithm.} The graph partitioning is parameterized by two constants $\aaaa$ and $\bbbbb$ satisfying $\aaaa \geq 2$ and $\bbbbb = \frac12 + \frac{2}{3\aaaa+2}$. Consider a graph $G = (V, E)$  with maximum degree $\Delta$. Note that $G$ is a subgraph of the $n$-vertex original graph, and so $n \geq |V|$.
Each vertex $v \in V$ has a palette $\Psi(v)$ of size $|\Psi(v)| \geq \max\{\deg_G(v), \Delta'\} + 1$, where  $\Delta' = \Delta - \Delta^{\bbbbb}$.
Denote  $G[S]$ as the subgraph induced by the vertices $S \subseteq V$. For each vertex $v \in V$, denote $\deg_S(v)$ as $|N(v) \cap S|$.
The algorithm is as follows, where we set $k = \sqrt{\Delta}$.
\begin{description}
    \item[Vertex Set:] The partition $V = B_1 \cup \dots \cup B_{k} \cup L$ is defined by the following procedure. Including each $v \in V$ to the set $L$ with probability $q = \Theta\left( \sqrt{\frac{\log n }{\Delta^{1/\Cdecomp}}}\right)$. Each remaining vertex joins one of   $B_1 , \ldots, B_{k}$ uniformly at random. Note that $\Prob[v \in B_i] = p(1-q)$, where $p = 1/k = 1/\sqrt{\Delta}$.
    \item[Palette:] Denote $C = \bigcup_{v \in V} \Psi(v)$ as the set of all colors. The partition $C = C_1 \cup \dots \cup C_{k}$ is defined by having each color $c \in C$ joins one of $C_1 , \ldots, C_{k}$ uniformly at random. Note that $\Prob[c \in C_i] = p$.
\end{description}

We require that with probability $1 - 1/\poly(n)$, the output of the partitioning algorithm satisfies the following properties, assuming that 
$\Delta = \omega(\log^{\aaaa} n)$.

\begin{description}
\item[i) Size of Each Part:] It is required that $|E(G[B_i])| = O(|V|)$, for each $i \in [k]$. Also,  it is required that  $|L| = O(q |V|) = O(\frac{\sqrt{\log n}}{\Delta^{1/\Cdecomp}}) \cdot |V|$. 


\item[ii) Available Colors in $B_i$:] For each $i\in \{1, \ldots,k\}$ and $v \in B_i$, the number of available colors in $v$ in the subgraph $B_i$ is $g_i(v) := |\Psi(v) \cap C_i|$.
It is required that $g_i(v) \geq \max\{\deg_{B_i}(v), \Delta_i - \Delta_i^{\bbbbb}\}+1$, where 
$\Delta_i := \max_{v \in B_i}\deg_{B_i}(v)$. 

\item[iii) Available Colors in $L$:] For each $v \in L$, define $g_L(v) := |\Psi(v)| - (\deg_G(v) - \deg_L(v))$. It is required that $g_L(v) \geq \max\{\deg_L(v), \Delta_L - \Delta_L^{\bbbbb}\}+1$ for each $v \in L$, where $\Delta_L := \max_{v \in L}\deg_{L}(v)$. Note that $g_L(v)$ represents a lower bound on the number of available color in $v$ {\em after } all of $B_1, \ldots, B_k$ have been colored.

\item[iv) Remaining Degrees:] The maximum degrees of $B_i$ and $L$ are $\deg_{B_i}(v)\leq \Delta_i = O(\sqrt{\Delta})$ and $\deg_{L}(v) \leq \Delta_L = O(q\Delta) =  O(\frac{\sqrt{\log n}}{\Delta^{1/\Cdecomp}}) \cdot \Delta$. For each vertex individually, we have $\deg_{B_i}(v)\leq \max\{O(\log n), O(1/\sqrt{\Delta}) \cdot \deg(v)\}$ and $\deg_{L}(v)\leq \max\{O(\log n), O(q) \cdot \deg(v)\}$.
\end{description}

Intuitively, we will use this graph partitioning in the following way. First compute the decomposition of the vertex set and the palette, and then color each $B_i$ using colors in $C_i$. 
Since $|E(G[B_i])| = O(|V|) = O(n)$, in the $\CLIQUE$ model we are able to send the entire graph $G[B_i]$ to a single distinguished vertex $v_i^\star$, and then $v_i^\star$ can compute a proper coloring of $G[B_i]$ locally. This procedure can be done in parallel for all $i$.
If $|E(G[L])| = O(n)$, then similarly we can let a vertex to compute a proper coloring of $G[L]$; otherwise we apply the graph partitioning recursively on $G[L]$, with the \emph{same} parameter $n$.

\begin{lemma}\label{lem:partition}
 Suppose $|\Psi(v)| \geq \max\{\deg_G(v), \Delta'\}+1$ with $\Delta' = \Delta - \Delta^{\bbbbb}$, and
  $|V| > \Delta = \omega(\log^{\aaaa}n)$, where $\aaaa$ and $\bbbbb$ are two constants satisfying $\aaaa \geq 2$ and $\bbbbb = \frac12 + \frac{2}{3\aaaa+2}$. 
  The two partitions $V = B_1 \cup \dots \cup B_{k} \cup L$ and  $C = \bigcup_{v \in V} \Psi(v) = C_1 \cup \dots \cup C_{k}$ satisfy the required properties, with probability $1 - 1/\poly(n)$. 
\end{lemma}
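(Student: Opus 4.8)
The plan is to verify each of the four required properties by a union bound over standard concentration inequalities (Chernoff/Hoeffding), observing that all the relevant quantities are sums of independent indicator random variables and that the number of ``bad events'' is at most $\poly(n)$, so that per-event failure probability $n^{-\Omega(1)}$ (with a large enough hidden constant, tunable via the $\Theta(\cdot)$ in $q$ and the constants in the $O(\cdot)$'s) suffices. Throughout, the key parameter regime is $\Delta = \omega(\log^{\aaaa} n)$ with $\aaaa \geq 2$, which guarantees that expectations of the quantities we care about are $\omega(\log n)$, so that a multiplicative Chernoff bound with a constant relative deviation already gives failure probability $n^{-\omega(1)}$.

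First I would set up notation and compute the relevant expectations. For a fixed $i$, $\mathbb{E}[\deg_{B_i}(v)] = p(1-q)\deg_G(v) \le p\,\Delta = \sqrt{\Delta}$, and $\mathbb{E}[g_i(v)] = |\Psi(v) \cap C_i| = p|\Psi(v)| \ge p(\Delta'+1) \approx \sqrt{\Delta}(1 - \Delta^{\bbbbb - 1})$. For property (iv) I would apply Chernoff to $\deg_{B_i}(v)$: if $\deg_G(v) \ge \sqrt{\Delta}\log n$ then a constant relative error bound applies and gives $\deg_{B_i}(v) = O(\deg_G(v)/\sqrt\Delta)$ w.h.p.; if $\deg_G(v)$ is smaller, then $\mathbb{E}[\deg_{B_i}(v)] = O(\log n)$ and a Chernoff bound in the ``small mean'' regime gives $\deg_{B_i}(v) = O(\log n)$ w.h.p. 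Taking the max over $v \in B_i$ and then over $i$ costs only a $|V|\cdot k \le n^2$ union bound. The bound on $\Delta_L$ and $|L|$ is entirely analogous: $\mathbb{E}[|L|] = q|V|$ and $\mathbb{E}[\deg_L(v)] = q\deg_G(v)$, and $q\Delta = \Theta(\sqrt{\log n}\,\Delta^{1 - 1/\Cdecomp}) = \omega(\log n)$ since $\Delta = \omega(\log^{\aaaa}n)$, so Chernoff again applies with a constant relative deviation.

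The crux — and the step I expect to be the main obstacle — is property (iii), controlling the available colors of each $v \in L$ \emph{after} all the $B_i$'s are colored, since this quantity is $g_L(v) = |\Psi(v)| - (\deg_G(v) - \deg_L(v))$, i.e. it charges $v$ for \emph{every} neighbor not landing in $L$, whether or not that neighbor ends up with a color that collides with one of $v$'s colors. The point is that $g_L(v)$ as defined is a deterministic-looking lower bound on the true number of available colors, so I only need to lower bound this combinatorial quantity; and $\deg_G(v) - \deg_L(v) = \deg_G(v)(1-q) + (\text{fluctuation})$, so $g_L(v) \ge |\Psi(v)| - \deg_G(v) + \deg_L(v) \ge (\Delta - \Delta^{\bbbbb}) - \Delta + \deg_L(v) + (\text{slack})$, and I must show this is at least $\max\{\deg_L(v), \Delta_L - \Delta_L^{\bbbbb}\} + 1$. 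This reduces to a deterministic inequality between $\Delta^{\bbbbb}$, $q\Delta$, and $\Delta_L^{\bbbbb}$ once the concentration of $\deg_L(v)$ and $\Delta_L$ around their means is in hand, and it is here that the precise choice $\bbbbb = \tfrac12 + \tfrac{2}{3\aaaa+2}$ and $q = \Theta(\sqrt{\log n}/\Delta^{1/\Cdecomp})$ with $\Cdecomp = 4$ must be checked to make the arithmetic close; I would carry out this calculation carefully, splitting into the cases $\deg_G(v)$ large vs. small as above to handle the per-vertex deviation terms.

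Finally, property (ii) is the in-part analogue of (iii) but is easier because the palette $C_i$ is sub-sampled at the same rate $p$ as the vertices, so $g_i(v) = |\Psi(v)\cap C_i|$ has mean $\approx p\Delta'$ while $\deg_{B_i}(v)$ has mean $\le p\Delta$, and the gap $\Delta - \Delta' = \Delta^{\bbbbb}$ scales down to $p\Delta^{\bbbbb} = \Delta^{\bbbbb - 1/2}$, which is $\omega(\log n)$ and dominates the Chernoff fluctuations of both quantities; one then checks $g_i(v) \ge \deg_{B_i}(v) + 1$ and $g_i(v) \ge \Delta_i - \Delta_i^{\bbbbb} + 1$ using $\Delta_i = O(\sqrt\Delta)$ from (iv). For property (i), $|E(G[B_i])| = \tfrac12\sum_{v\in B_i}\deg_{B_i}(v) \le \tfrac12|V|\cdot\Delta_i = O(|V|\sqrt\Delta)$ is not good enough directly, so instead I would bound $\mathbb{E}[|E(G[B_i])|] = \binom{}{}$-style: each edge survives into $G[B_i]$ with probability $p^2(1-q)^2 = O(1/\Delta)$, so $\mathbb{E}[|E(G[B_i])|] \le |E(G)|/\Delta \le |V|/2$, and since $|E(G[B_i])|$ is a sum of $\{0,1\}$ edge-indicators that are \emph{not} independent — but are negatively associated, or one can expose the vertex choices and apply a bounded-difference / Azuma argument, changing one vertex's part changes $|E(G[B_i])|$ by at most $\Delta$ — concentration around the $O(|V|)$ mean follows, again with room to spare since $|V| > \Delta = \omega(\log^{\aaaa}n)$. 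The bound $|L| = O(q|V|)$ is the Chernoff bound already mentioned. A final union bound over the $O(k)$ parts and the $O(|V|)$ vertices within each completes the argument.
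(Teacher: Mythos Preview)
Your overall plan (Chernoff on each quantity, union bound) matches the paper's, but there is a real gap in your handling of property~(ii), and a smaller one in~(i).

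\textbf{Property (ii).} You write that $g_i(v)=|\Psi(v)\cap C_i|$ has mean $\approx p\Delta'$ while $\deg_{B_i}(v)$ has mean $\le p\Delta$, and then assert that ``the gap $\Delta-\Delta'=\Delta^{\bbbbb}$ scales down to $p\Delta^{\bbbbb}=\Delta^{\bbbbb-1/2}$, which \ldots\ dominates the Chernoff fluctuations.'' This is backwards: with those two bounds the palette mean is \emph{smaller} than the degree mean, so there is no positive gap at all. The actual source of slack is the $(1-q)$ factor in the vertex sampling: $\Expect[\deg_{B_i}(v)]\le (1-q)p\Delta$, while $\Expect[|\Psi(v)\cap C_i|]\ge p\Delta'$, and the usable gap is
\[
p\Delta' - (1-q)p\Delta \;=\; p\Delta\bigl(q-\Delta^{-(1-\bbbbb)}\bigr)\;\approx\; pq\Delta \;=\;\Theta\bigl(\Delta^{1/4}\sqrt{\log n}\bigr),
\]
since $q\gg\Delta^{-(1-\bbbbb)}$. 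But the Chernoff fluctuation on either quantity is also $\Theta(\sqrt{p\Delta\log n})=\Theta(\Delta^{1/4}\sqrt{\log n})$, i.e.\ of the \emph{same} order, not dominated. This is precisely why $q$ is set to $\Theta(\sqrt{\log n}/\Delta^{1/4})$; in the paper the constant is pinned down by requiring $q\ge 3\epsilon_1$ where $\epsilon_1=\Theta(\sqrt{\log n/(p\Delta)})$, so that $(1-\epsilon_1)p\Delta'\ge(1+\epsilon_1)(1-q)p\Delta+1$. Your argument, which tries to extract the gap from $\Delta'$ versus $\Delta$ alone, cannot close.

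\textbf{Property (i).} Your two suggestions for bounding $|E(G[B_i])|$ both have issues. The edge indicators $\mathbf{1}[e\subseteq B_i]$ are \emph{positively} correlated when they share an endpoint, so negative association does not apply. And Azuma with vertex-exposure and Lipschitz constant $\Delta$ gives failure probability $\exp(-\Theta(|V|/\Delta^2))$, which is not $1/\poly(n)$ under the mere hypothesis $|V|>\Delta$. The paper avoids this entirely: it bounds $|B_i|\le O(p|V|)$ and $\Delta_i\le O(p\Delta)$ separately by Chernoff on independent vertex indicators, and then $|E(G[B_i])|\le|B_i|\cdot\Delta_i=O(p^2|V|\Delta)=O(|V|)$ since $p=1/\sqrt\Delta$.
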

\begin{proof}
We prove that the properties i), ii), iii), and iv) hold with high probability. Note that for some of the bounds, it is straightforward to observe that they  hold in expectation.

\medskip

\noindent\textbf{i) Size of Each Part:} We first show that $|E(G[B_i])| = O(|V|)$, for each $i \in [k]$, with probability $1 - 1/\poly(n)$. 
To have  $|E(G[B_i])| = O(|V|)$, it suffices to have $\deg_{B_i}(v) = O(p \Delta)$ for each $v$, and $|B_i| = O(p |V|)$, since $p = 1/ \sqrt{\Delta}$.
Recall that we already have  $\Expect[\deg_{B_i}(v)] \leq (1-q)p \Delta < p \Delta$ and $\Expect[|B_i|] =(1-q)p|V| < p |V|$, so we only need to show that these parameters concentrate at their expected values with high probability. This can be established by a Chernoff bound, as follows. Note that we have $\epss{1} < 1$ and $\epss{2} < 1$. In particular, the inequality $\epss{1} < 1$ holds because of the assumption $\Delta = \omega(\log^{\aaaa}n) \geq \omega(\log^{2}n)$.
\begin{align*}
    \Prob[ \deg_{B_i}(v) \leq (1+\epss{1}) (1-q)p\Delta] &= 1 -  \exp(-\Omega(\epss{1}^2 (1-q)p\Delta)) = 1 - O(1/\poly(n)), \\
    \text{where }  \epss{1} &= 
    \Theta\left(\sqrt{\frac{\log n}{(1-q) p \Delta}}\right) = \Theta\left(\sqrt{\frac{\log n}{p \Delta}}\right).\\
    \\
    \Prob[ |B_i| \leq (1+\epss{2}) (1-q)p|V|] &= 1 - \exp(-\Omega(\epss{2}^2 (1-q)p|V|)) = 1 - O(1/\poly(n)), \\
    \text{where }  \epss{2} &= 
    \Theta\left(\sqrt{\frac{\log n}{(1-q) p |V|}}\right) = \Theta\left(\sqrt{\frac{\log n}{p |V|}}\right).    
\end{align*}

Next, we show the analogous results for $L$, i.e., with probability $1 - 1/\poly(n)$, both $|L|/|V|$ and $\Delta_L/\Delta$ are $O(q) = O\left(\frac{\sqrt{\log n}}{\Delta^{1/\Cdecomp}}\right)$, where $\Delta_L = \max_{v \in L}\deg_{L}(v)$.
Similarly, we already have 
 $\Expect[\deg_{L}(v)] \leq q \Delta$ and $\Expect[|L|] = q|V|$, and remember that $q = O(\frac{\sqrt{\log n}}{\Delta^{1/\Cdecomp}})$, so we only need to show that these parameters concentrate at their expected values with high probability, 
 by a Chernoff bound. 
\begin{align*}
    \Prob[ \deg_{L}(v) \leq (1+\epss{3}) q\Delta] &= 1 - \exp(-\Omega(\epss{3}^2 q\Delta)) = 
    1 - O(1/\poly(n)), \\
    \text{where }  \epss{3} &= 
    \Theta\left(\sqrt{\frac{\log n}{q \Delta}}\right).\\
    \\
    \Prob[ |L| \leq (1+\epss{4}) q|V|] &= 
    1 - \exp(-\Omega(\epss{4}^2 q|V|)) =
    1 - O(1/\poly(n)), \\    
    \text{where }  \epss{4} &= 
    \Theta\left(\sqrt{\frac{\log n}{q |V|}}\right). 
\end{align*}

Similarly, we have $\epss{3} < 1$ and $\epss{4} < 1$. In particular,  $\epss{3} < 1$ because $\Delta = \omega(\log^{\aaaa}n) \geq \omega(\log^2 n)$.

\medskip

\noindent\textbf{ii) Available Colors in $B_i$:} 
Now we analyze the number of available color for each set $B_i$.
Recall that for each $v \in B_i$, the number of available colors in $v$ in the subgraph $B_i$ is $g_i(v) := |\Psi(v) \cap C_i|$. We need to prove the following holds  with probability $1 - 1/\poly(n)$:
(i) $|\Psi(v) \cap C_i| \geq \deg_{B_i}(v) + 1$, and
(ii) $|\Psi(v) \cap C_i|  \geq  \Delta_i - \Delta_i^{\bbbbb} +1$, where $\Delta_i := \max_{v \in B_i}\deg_{B_i}(v)$. 
We will show
that with probability $1 - 1/\poly(n)$, we have $|\Psi(v) \cap C_i| \geq \Delta_i + 1$ for each $B_i$ and each $v \in B_i$, and this implies the above (i) and (ii).

Recall that $\Delta' = \Delta\left(1 - \Delta^{-(1-\bbbbb)}\right)$, $q = \Theta\left(\frac{\sqrt{\log n}}{\Delta^{1/\Cdecomp}}\right) \gg \Delta^{-(1-\bbbbb)}$,\footnote{The assumptions $\aaaa \geq 2$ and $\bbbbb = \frac12 + \frac{2}{3\aaaa+2}$ imply that $\bbbbb \in (1/2,3/4]$, and so $\Delta^{-(1-\bbbbb)} \leq \Delta^{-1/4} \ll q$.} and   $\epss{1} =   \Theta\left(\frac{\sqrt{\log n}}{\Delta^{1/\Cdecomp}}\right)$.
By selecting $q \geq 3\epss{1} = \Theta\left(\frac{\sqrt{\log n}}{\Delta^{1/\Cdecomp}}\right)$, we have 
\[
(1-\epss{1})p\Delta' = (1-\epss{1})\left(1-\Delta^{-(1-\bbbbb)}\right)p\Delta 
\geq (1+\epss{1})(1-q)p\Delta + 1.
\]
We already know that $\Delta_i \leq (1+\epss{1})(1-q)p\Delta $ with probability $1 - 1/\poly(n)$. In order to have $|\Psi(v) \cap C_i| \geq \Delta_i  + 1$, we only need to show that 
$|\Psi(v) \cap C_i| \leq (1-\epss{1})p\Delta'$ with probability $1 - 1/\poly(n)$. For the expected value, we know that $\Expect[|\Psi(v) \cap C_i|] = p |\Psi(v)| \geq p \Delta'$. By a Chernoff bound, we have
\[
\Prob[ |\Psi(v) \cap C_i| 
\leq (1-\epss{1})p\Delta'] = 1 - \exp(-\Omega(\epss{1}^2 p \Delta')) = 1 - O(1/\poly(n)).
\]

\medskip

\noindent\textbf{iii) Available Colors in $L$:}
Next, we consider the number of available colors in $L$. We show that with probability $1 - 1/\poly(n)$, for each $v \in L$, we have $g_L(v) \geq \max\{\deg_L(v), \Delta_L - \Delta_L^{\bbbbb}\}+1$, where
 $g_L(v) = |\Psi(v)| - (\deg_G(v) - \deg_L(v))$.
It is straightforward to see that $g_L(v) \geq \deg_L(v)+1$, since $g_L(v) = (|\Psi(v)| - \deg_G(v)) + \deg_L(v) \geq 1 + \deg_L(v)$. Thus, we only need to show that $g_L(v) \geq \Delta_L - \Delta_L^{\bbbbb} +1$.

In this proof, without loss of generality we assume $\deg_G(v) = |\Psi(v)| - 1 \geq \Delta'$.\footnote{If this is not the case, we can increase the degree of $v$ in a vacuous way by adding dummy neighbors to it. For
instance, we can add a clique of size $\Delta$ next to $v$ (to be simulated by $v$), remove a large enough matching from this clique
and instead connect the endpoints to $v$.}
Since $\Expect[\deg_L(v)] = q \deg_G(v) \geq q \Delta'$, by a Chernoff bound, we have
\begin{align*}
    \Prob[ \deg_{L}(v) \geq (1-\epss{3}) q \Delta'] &= 1 - \exp(-\Omega(\epss{3}^2 q \Delta')) =
    1 - O(1/\poly(n))
    \end{align*}
Remember that  $\epss{3} = 
    \Theta\left(\sqrt{\frac{\log n}{q \Delta}}\right) =    \Theta\left(\sqrt{\frac{\log n}{q \Delta'}}\right)$, and we already know that  $\epss{3} < 1$.
Using this concentration bound, the following calculation holds with probability $1 - 1/\poly(n)$. 
\begin{align*}
    g_L(v)
    &\geq (1 - \epss{3})q\Delta'\\
    &\geq q \Delta' - O\left(\sqrt{q \Delta' \log n}\right)\\
    &\geq q \Delta - q \Delta^{\bbbbb}  - O\left(\sqrt{q \Delta \log n}\right).
\end{align*}
 Combining this with $\Delta_L \leq (1+\epss{3}) q\Delta = q\Delta + O(\sqrt{q \Delta \log n})$, we obtain $g_L(v) \geq \Delta_L - q\Delta^{\bbbbb} - O(\sqrt{q \Delta \log n})$. Note that $q\Delta^{\bbbbb} + O(\sqrt{q \Delta \log n}) = o\left( (q\Delta)^{\bbbbb}\right) =  o\left(\Delta_L^{\bbbbb}\right)$,\footnote{The bound $\sqrt{q \Delta \log n} \ll (q\Delta)^{\bbbbb}$ can be derived from the assumptions $\bbbbb = \frac12 + \frac{2}{3\aaaa+2}$ and  $\Delta = \omega(\log^{\aaaa}n)$, as follows: $q\Delta = \Theta (\Delta^{\frac34} \log^{\frac12} n) = \omega(\log^{\frac34 \aaaa+\frac12}n) \implies \sqrt{q \Delta \log n} = (q \Delta)^{\frac12} \log^{1/2} n \ll (q \Delta)^{\frac12} (q \Delta)^{\frac{1}{2}\left(\frac34 \aaaa+\frac12\right)^{-1}} = (q \Delta)^{\bbbbb}$.
 } and so we finally obtain $g_L(v) \geq \Delta_L - \Delta_L^{\bbbbb} + 1$.

\medskip

\noindent\textbf{iv) Remaining Degrees:} The degree upper bounds of $\Delta_i$ and $\Delta_L$ follow immediately from the concentration bounds on $\deg_{B_i}(v)$ and $\deg_{L}(v)$ calculated in the proof of i). The bounds $\deg_{B_i}(v)\leq \max\{O(\log n), O(1/\sqrt{\Delta}) \cdot \deg(v)\}$ and $\deg_{L}(v)\leq \max\{O(\log n), O(q) \cdot \deg(v)\}$ can be derived by a straightforward application of Chernoff bound.
  \end{proof}

\subsection{Congested Clique Algorithm for High-Degree Graphs}
In this section, we show that the $(\Delta + 1)$-list coloring problem can be solved in $O(1)$ rounds in the $\CLIQUE$ model when the degrees are assumed to be sufficiently high.
The formal statement is captured in \Cref{thm:largedegree}.
First, we show that the partitioning algorithm can indeed be implemented in the $\CLIQUE$ model.
Then, we show how to color the parts resulting from the graph partitioning efficiently.
The proof of \Cref{thm:largedegree} is completed by showing that only $O(1)$ recursive applications of the partitioning are required.

\label{subsec:CongestedClique}
    \paragraph{Implementation of the Graph Partitioning.} 
    The partitions can be computed in $O(1)$ rounds on $\CLIQUE$.
 Partitioning the vertex set $V$ is straightforward, as every vertex can make the decision independently and locally, whereas it is not obvious how to partition $C$ to make all vertices agree on the same partition. Note that we can assume $|C| \leq (\Delta+1)|V|$; if $|C|$ is greater than $(\Delta+1)|V|$ initially, then we can let each vertex  decrease its palette size to $\Delta+1$ by removing some colors in its palette, and we will have $|C| \leq (\Delta+1)|V|$ after removing these colors.
 
A straightforward way of partitioning $C$ is to generate $\Theta(|C|\log n)$ random bits at a vertex $v$ locally, and then $v$ broadcasts this information to all other vertices. Note that it takes $O(\log k) = O(\log |V|) = O(\log n)$ bits to encode which part of $C_1 \cup \cdots \cup C_k$  each $c \in C$ is in.
A direct implementation of the approach cannot be done in $O(1)$ rounds, due to the message size constraint of $\CLIQUE$, as each vertex can send at most $\Theta(n \log n)$ bits in each round.

To solve this issue,  observe that it is not necessary to use total independent random bits for each $c \in C$, and $\Theta(\log n)$-wise independence suffices.
More precisely, suppose $X$ is the summation of $n$ $K$-wise independent
0-1 random variables with mean $p$, and so $\mu = \Expect[X] = np$. A Chernoff bound with $K$-wise Independence~\cite{SchmidtSS95} guarantees that
\[
\Prob[X \geq (1+q)\mu] \leq  \exp\left(-\min\{ K, q^2 \mu \}\right).
\]
In order to guarantee a failure probability of $1/ \poly(n)$ in all applications of Chernoff bound in \Cref{lem:partition}, it suffices that $K = \Theta(\log n)$. Therefore, to compute the decomposition $C = C_1 \cup \cdots \cup C_k$ with $K$-wise independent random bits, we only need $O(K \cdot \log ( |C| \log k)) = O(\log^2 n)$ total independent random bits. Broadcasting $O(\log^2 n)$ bits of information to all vertices can be done in $O(1)$ rounds via Lenzen's routing  (\Cref{lem:routing}).


\paragraph{The Algorithm of $(\Delta + 1)$-list coloring on High-degree Graphs.} We next present our $\CLIQUE$-model coloring algorithm for high-degree graphs, using the partitioning explained above.
\begin{theorem} \label{thm:largedegree}
    Suppose $\Delta = \Omega(\log^{4+\epsilon}n)$ for some constant $\epsilon > 0$.
    There is an algorithm that solves $(\Delta+1)$-list coloring in $\CLIQUE$ in $O(1)$ rounds.
\end{theorem}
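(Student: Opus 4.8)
The plan is to iterate the graph partitioning of \Cref{lem:partition} and argue that after $O(1)$ levels of recursion, every remaining piece is small enough to be colored in $O(1)$ rounds by information gathering. First I would set up the recursion: given the input graph with maximum degree $\Delta = \Omega(\log^{4+\epsilon} n)$, pick the constants $\aaaa, \bbbbb$ of the partitioning (with $\aaaa \geq 2$ chosen so that $\log^\aaaa n$ stays below the current degree bound — here $\aaaa$ slightly below $4+\epsilon$ works, using $\bbbbb = \frac12 + \frac{2}{3\aaaa+2}$), and apply \Cref{lem:partition}. By property (i) each $B_i$ has $O(|V|) = O(n)$ edges, so using Lenzen's routing (\Cref{lem:routing}) we can ship each $G[B_i]$ together with the restricted palettes $\{\Psi(v) \cap C_i\}_{v \in B_i}$ to a distinguished vertex $v_i^\star$; since the $C_i$ are disjoint, all the $B_i$ can be colored \emph{in parallel} in $O(1)$ rounds, and property (ii) guarantees $v_i^\star$ actually has a valid $(\deg{+}1)$-list-coloring instance (indeed $|\Psi(v)\cap C_i| \geq \Delta_i+1$), so a greedy coloring suffices locally. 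That leaves only the left-over graph $G[L]$.

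Next I would handle $L$ by recursion, crucially keeping the \emph{same} parameter $n$ (the original vertex count) throughout, exactly as the section preamble stresses, so that the $1-1/\poly(n)$ failure probability is meaningful at every level. Property (iv) says $\Delta_L = O\!\big(q\Delta\big) = O\!\big(\Delta^{1-1/\Cdecomp}\sqrt{\log n}\,\big)$, i.e. the maximum degree drops by roughly a $\Delta^{1/4}$ factor (up to a $\sqrt{\log n}$ factor) at each level, while property (iii) certifies that the surviving palettes in $L$ still satisfy the hypothesis $|\Psi(v)| \geq \max\{\deg_{G[L]}(v), \Delta_L - \Delta_L^{\bbbbb}\}+1$ needed to reapply \Cref{lem:partition}. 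Since the degree is reduced by a polynomial factor per level, after $O(1)$ levels the maximum degree of the current left-over graph falls below the threshold $\omega(\log^\aaaa n)$ at which the lemma ceases to apply; at that point (or as soon as $|E(G[L])| = O(n)$, which property (i) combined with the degree bound eventually forces) we simply gather the whole remaining graph to one vertex via Lenzen's routing and color it greedily. One must also recall the implementation remark above: the palette partition $C = C_1 \cup \dots \cup C_k$ is generated with $\Theta(\log n)$-wise independence using only $O(\log^2 n)$ shared random bits, broadcast in $O(1)$ rounds, and the Chernoff bound with limited independence still gives the $1/\poly(n)$ failure bounds — so each partitioning level costs $O(1)$ rounds and succeeds w.h.p.

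Finally I would take a union bound over the $O(1)$ recursion levels: each level fails with probability $1/\poly(n)$, so the whole algorithm succeeds with probability $1 - 1/\poly(n)$, and the total round complexity is $O(1) \cdot O(1) = O(1)$. The main obstacle, and the point deserving the most care, is verifying that the recursion genuinely terminates in a constant number of steps: one needs to track the degree bound $\Delta_L$ across levels and check that the $\sqrt{\log n}$ factors accumulated in $q$ never overwhelm the $\Delta^{1/4}$-per-level polynomial decrease before the degree crosses below $\poly(\log n)$ — this is exactly where the hypothesis $\Delta = \Omega(\log^{4+\epsilon} n)$ (rather than merely $\Delta = \omega(\log^4 n)$) is used, since it gives a constant multiplicative slack $\Delta^{\epsilon'}$ that survives each of the $O(1)$ levels. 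A secondary point is confirming that the palette bookkeeping is consistent under recursion: the "available colors" bound $g_L(v)$ from property (iii) is a bound on colors left \emph{after} all $B_i$ are colored, so one must color $B_1, \dots, B_k$ first and only then treat $L$ — which is fine since the recursion on $L$ is sequential after the parallel coloring of the $B_i$'s at each level.
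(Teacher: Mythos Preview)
Your overall plan---iterate \Cref{lem:partition}, color each $B_i$ in parallel via Lenzen's routing, recurse on $L$ with the same $n$, and stop once the remaining graph has $O(n)$ edges---matches the paper's proof. Two points in your termination argument need correcting, though.

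First, the parameter $\aaaa$: you propose $\aaaa$ ``slightly below $4+\epsilon$,'' but the paper takes $\aaaa=2$ (hence $\bbbbb=3/4$). Your choice is actually harmful: if $\Delta=\Theta(\log^{4+\epsilon}n)$ and $\aaaa\approx 4+\epsilon$, then after one step $\Delta_L\approx\Delta^{3/4}\sqrt{\log n}=\Theta(\log^{3.5+3\epsilon/4}n)$ already falls below $\log^{\aaaa}n$, so \Cref{lem:partition} no longer applies while $G[L]$ is still too dense to gather. Taking $\aaaa=2$ (the smallest allowed) is what keeps the lemma applicable at every level.

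Second, your primary stopping criterion---the degree dropping below $\omega(\log^{\aaaa}n)$---never fires. With $\aaaa=2$ and $\Delta=(\log n)^{2+\alpha}$ one gets $\Delta_i=\Theta\bigl((\log n)^{2+\alpha(3/4)^{i-1}}\bigr)$, which stays above $\log^2 n$ for every finite $i$. The real termination, which you mention only parenthetically, is $\Delta_i\,|V_i|=O(n)$, and verifying it requires tracking $|V_i|$ as well: property~(i) gives $|V_i|=n\cdot\Theta\bigl((\log n)^{\alpha((3/4)^{i-1}-1)}\bigr)$, so the product is $O(n)$ once $2\alpha\bigl(1-(3/4)^{i-1}\bigr)\geq 2+\alpha$. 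Writing $\alpha=2+\beta$ with $\beta=\Omega(1)$ (this is exactly where the hypothesis $\Delta=\Omega(\log^{4+\epsilon}n)$ enters), this solves to $i=O(1)$. The simultaneous shrinkage of \emph{both} $\Delta_i$ and $|V_i|$ by the factor $q$ each round is the crux of the constant-depth bound; tracking only the degree, as your last paragraph does, does not yield termination.
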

\begin{proof}[Proof]
We show that a constant-depth recursive applications of  \Cref{lem:partition} suffices to give an $O(1)$-round $\CLIQUE$ $(\Delta + 1)$-list coloring algorithm for graphs with $\Delta = \Omega(\log^{4+\epsilon}n)$, for any constant $\epsilon > 0$.
Consider the graph $G = (V, E)$. First, we apply the graph partitioning algorithm of \Cref{lem:partition} to partition vertices $V$ into subsets $B_1, \ldots, B_{k}, L$ with parameter $n = |V|$, and $k = \sqrt{\Delta}$. After that, let arbitrary $k = \sqrt{\Delta}$ vertices to 
be responsible for coloring each $G[B_i]$. Each of these $k$ vertices, in parallel,  gathers all information of $G[B_i]$ from vertices $B_i$, and then computes a proper coloring of $G[B_i]$, where each vertex $v \in B_i$ uses only the palette $\Psi(v) \cap C_i$. The existence of such a proper coloring is guaranteed  by Property (ii). Using this approach, we can color all vertices in $V \setminus L$ in  $O(1)$ rounds using Lenzen's routing. Note that Property (i) guarantees that $|E(G[B_i])| = O(n)$. Finally, each vertex $v \in L$ removes the colors that have been taken by its neighbors in $V\backslash L$  from its palette $\Psi(v)$. In view of Property (iii),
after this operation, the number of available colors for each $v \in L$ is at least 
$g_L(v) \geq \max\{\deg_L(v), \Delta_L - \Delta_L^{\bbbbb}\}+1$.
Now the subgraph $G[L]$ satisfies all conditions required to apply \Cref{lem:partition}, so long as $\Delta_L = \omega(\log^{\aaaa} n)$. We will see that this condition is always met in  our application.

We then recursively apply the algorithm of the lemma on the subgraph induced by vertices $L$ \emph{with the same parameter} $n$. The recursion stops once we reach a point that $|E(G[L])| = O(n)$, and so we can apply Lenzen's routing to let one vertex $v$ gather all information of $G[L]$ and compute its proper coloring. 

Now we analyze the number of iterations needed to reach a point that $|E(G[L])| = O(n)$.
Here we use $\aaaa = 2$ and $\bbbbb = 3/4$.\footnote{We choose $\aaaa = 2$ (the smallest possible)  to minimize the degree requirement in \cref{thm:largedegree}.}
Define $V_1 = V$ and $\Delta_1 = \Delta$ as the vertex set and the maximum degree for the first iteration.
Let $V = B_1 \cup \dots \cup B_{k} \cup L$ be the outcome of the first iteration, and
define $V_2 = L$ and $\Delta_2 = \Delta_L$. Similarly, for $i > 2$, we define $V_i$ and  $\Delta_i$ based on the set $L$ in the outcome of the graph partitioning algorithm for the $(i-1)$th iteration. We have the following formulas.
\begin{align*}
    \Delta_1 &= \Delta\\
    \Delta_i &= \Delta_{i-1} \cdot O\left(\frac{\sqrt{\log n}}{\Delta_{i-1}^{1/\Cdecomp}}\right) & \text{ by Property iv) }\\
    |V_1| &= n\\
    |V_i| &= |V_{i-1}| \cdot O\left(\frac{\sqrt{\log n}}{\Delta_{i-1}^{1/\Cdecomp}}\right) & \text{ by Property i) }
 \intertext{
Let $\alpha > 0$ be chosen such that $\Delta = \Delta_1 = (\log n)^{2+\alpha}$, and assume $\alpha = \Omega(1)$ and $i = O(1)$. We can calculate the value of $\Delta_i$ and $|V_i|$ as follows.}
    \Delta_i &= O\left((\log n)^{2 + \alpha \cdot (\bbbbb)^{i-1}}\right)\\
    |V_i| &= n \cdot O\left((\log n)^{\alpha \left( (\bbbbb)^{i-1} - 1\right)}\right)
\end{align*}
Thus, given that $\alpha = \Omega(1)$ and $i = O(1)$, the condition of $\Delta_i = \omega(\log^{\aaaa} n) = \omega(\log^2 n)$ for applying \Cref{lem:partition} must be met. 

Next, we analyze the number of iterations it takes to make $\Delta_i |V_i|$ sufficiently small. In the $\CLIQUE$ model, if $\Delta_i |V_i| = O(n)$, then we are able to compute a proper coloring of $V_i$ in $O(1)$ rounds by information gathering.
%
%
%
  Let us write $\Delta = \log^{2 + \alpha}n$, where $\alpha = 2 + \beta$. The lemma statement implies that $\beta = \Omega(1)$. 
   Note that the condition for $\Delta_i |V_i| = O(n)$ can be re-written as 
   \[2 \alpha\left(1 - (\bbbbb)^{i-1}\right) \geq 2+\alpha.\]
   Combining this with $\alpha = 2 + \beta$, a simple calculation shows that this condition is met when 
   \[
   i \geq \log \left(\frac{8(\beta+2)}{3 \beta}\right) / \log \left(4/3\right).\]
 Since $\beta = \Omega(1)$, we have $\log \left(\frac{8(\beta+2)}{3 \beta}\right) / \log \left(4/3\right) = O(1)$, and so our algorithm takes only $O(1)$ iterations. In particular, when $\beta \geq 10.8$, i.e., $\Delta = \Omega(\log^{12.8}n)$, we have $\Delta_4 |V_4| = O(n)$, and so 3 iterations suffice. Since each iteration can be implemented in $\CLIQUE$ in $O(1)$ rounds, overall we get an algorithm with round complexity $O(1)$.
\end{proof}

\begin{remark}
Similar to the proof of \cref{{thm:largedegree}}, the graph partitioning algorithm also leads to an $O(1)$-round $\MPC$ coloring algorithm with $S = \widetilde{O}(n)$ memory per processor and $\widetilde{O}(m)$ total memory. This gives an simple alternate proof (with a slightly worse memory size) of the main result of~\cite{AssadiCK18} that $(\Delta+1)$-coloring can be solved with  $S = \widetilde{O}(n)$ memory per processor.
\end{remark}



\ignore{
$$
        \begin{tabu}{|c|c|c|c|}
            \hline
            \text{Iteration} & |L| & \Delta_L & |E(G[L])| \\ \hline \hline
            1 & \Delta^{-1/\Cdecomp}|V| & \Delta^{\bbbbb} & \leq \frac{1}{2}\Delta^{1/2}|V| \\ \hline
            2 & \Delta^{-7/16}|V| & \Delta^{9/16} & \leq \frac{1}{2}\Delta^{1/8}|V| \\ \hline
            3 & \Delta^{-37/64}|V| & \Delta^{27/64} & \leq \frac{1}{2}\Delta^{-5/32}|V| \\ \hline
        \end{tabu}
$$
}


\subsection{Massively Parallel Computation with Strongly Sublinear Memory}
\label{subsec:MPC}

We now show how to apply \Cref{lem:partition} as well as the CLP algorithm of \cite{ChangLP18}, as summarized in the following lemma, to prove \Cref{thm:MPCCol}. 

\begin{lemma}[\cite{ChangLP18, Parter18}]\label{lemma:CLP}
Let $G$ be an $n$-vertex graph with $m$ edges and maximum degree $\Delta$. Suppose any vertex $v$ has a palette $|\Psi(v)|$ that satisfies  $|\Psi(v)| \geq \max\left\{\deg_G(v)+1,\Delta-\Delta^{3/5}\right\}$. 
Then the list-coloring problem can be solved w.h.p.\ 
in $O(\sqrt{\log \log n})$ rounds of low-memory MPC with local memory $O(n^{\alpha})$ for an arbitrary constant $\alpha\in (0,1)$ and total memory $\widetilde{O}\left( \sum_v \deg_G(v)^2 \right)$ if  $\Delta^2 = O \left( n^{\alpha} \right)$.
\end{lemma}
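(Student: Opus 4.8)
The plan is to run the $\LOCAL$ coloring algorithm of CLP~\cite{ChangLP18} inside low-memory $\MPC$, simulating its two phases separately. Recall the CLP algorithm consists of (a) a \emph{pre-shattering} phase --- initial coloring, dense coloring, and color bidding --- running in $\tau = O(\log^\ast\Delta)$ rounds of $\LOCAL$ and leaving a set $B\subseteq V$ of uncolored vertices, and (b) a \emph{post-shattering} phase solving $(\deg+1)$-list coloring on $G[B]$ via the deterministic algorithm of~\cite{PanconesiS96}, whose complexity on $n'$-vertex graphs is $\Detd(n') = 2^{O(\sqrt{\log\log n'})}$. The precondition $|\Psi(v)| \ge \max\{\deg_G(v)+1, \Delta-\Delta^{3/5}\}$ is exactly the slack hypothesis under which the pre-shattering analysis of~\cite{ChangLP18} applies, and it also ensures that every $v\in B$ retains $|\Psi(v)| \ge \deg_{G[B]}(v)+1$ available colors, so the post-shattering instance is genuinely a $(\deg+1)$-list-coloring instance.

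\textbf{Simulating the pre-shattering phase.} I would simulate the $\tau = O(\log^\ast\Delta)$ $\LOCAL$ rounds one at a time, spending $O(1)$ $\MPC$ rounds per $\LOCAL$ round. The only data a vertex $v$ needs to carry out one pre-shattering round --- its part of the dense/sparse (almost-clique) decomposition and its color bid --- is the subgraph induced on its radius-$2$ ball $N^2(v)$ together with the colors committed so far in that ball; this can be assembled in $O(1)$ $\MPC$ rounds by standard sorting/aggregation. Since $N^2(v)$ contains at most $\Delta^2 = O(n^{\alpha})$ vertices and $\widetilde{O}(n^{\alpha})$ bits of data, this obeys the local-memory bound $O(n^{\alpha})$ (after shrinking $\alpha$ slightly). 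The total memory is $\widetilde{O}\!\left(\sum_v \sum_{u\in N(v)}\deg_G(u)\right) = \widetilde{O}\!\left(\sum_v \deg_G(v)^2\right)$, and re-assembling this in each of the $O(\log^\ast\Delta)$ rounds only multiplies it by $O(\log^\ast\Delta)$, which the $\widetilde{O}(\cdot)$ absorbs. Thus this phase costs $O(\log^\ast\Delta) = O(\sqrt{\log\log n})$ $\MPC$ rounds.

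\textbf{Shattering and the post-shattering phase.} Next I invoke \Cref{lem:shatter}: the CLP analysis gives $\Prob[v\in B]\le\Delta^{-\Omega(1)}$ even when the random bits outside a constant-radius ball of $v$ are set adversarially, so for a suitably large constant $c$, with probability $1-1/\poly(n)$ every connected component of $G[B]$ has at most $N' := \Delta^{O(1)}\cdot O(\log n)$ vertices and $G[B]$ has $O(n)$ edges. It remains to $(\deg+1)$-list color each component. I would run the deterministic algorithm of~\cite{PanconesiS96}, which takes $T := \Detd(N') = 2^{O(\sqrt{\log\log N'})}$ $\LOCAL$ rounds --- and since $N' = \poly(n)$, $\log\log N' = O(\log\log n)$, so $T = 2^{O(\sqrt{\log\log n})}$ --- and accelerate it in $\MPC$ by graph exponentiation: in $O(\log T) = O(\sqrt{\log\log n})$ $\MPC$ rounds every $v\in B$ learns its whole component of $G[B]$ and then simulates the deterministic algorithm locally. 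This stays within budget because every ball gathered during exponentiation lies inside a component of $\le N'$ vertices; calibrating the hidden exponent in $N' = \Delta^{O(1)}\log n$ against $\alpha$ (using $\Delta^2 = O(n^{\alpha})$) keeps each such ball within $O(n^{\alpha})$ words, while $|E(G[B])| = O(n)$ keeps the total within $\widetilde{O}(\sum_v\deg_G(v)^2)$. Summing the two phases yields $O(\log^\ast\Delta + \sqrt{\log\log n}) = O(\sqrt{\log\log n})$ $\MPC$ rounds, the claimed bound.

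\textbf{The main obstacle.} The delicate point is the joint control of round count and memory in the post-shattering phase: $\log\Detd(N')$ must be only $O(\sqrt{\log\log n})$ \emph{and} the radius-$\Detd(N')$ balls that graph exponentiation forces each vertex to materialize must fit in $O(n^{\alpha})$ local memory. Making these compatible is where the quantitative relationship among $\Delta$, the shattering exponent, and $\alpha$ has to be set carefully, and where the hypothesis $\Delta^2 = O(n^{\alpha})$ is consumed (it is also what makes the radius-$2$ information needed by the dense-coloring step fit on a single machine). The remaining steps --- implementing each $\LOCAL$ round by $O(1)$ $\MPC$ aggregation rounds, and checking that every failure probability arising from \Cref{lem:shatter} and from~\cite{ChangLP18} is $1/\poly(n)$ with respect to the \emph{original} $n$ --- are routine.
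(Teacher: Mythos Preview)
Your overall architecture---simulate the pre-shattering phase round by round using $2$-hop neighborhoods, then accelerate the deterministic post-shattering algorithm by graph exponentiation---matches the paper's proof. But the post-shattering arithmetic has a genuine gap.

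You write $T = \Detd(N') = 2^{O(\sqrt{\log\log N'})}$. This is not the Panconesi--Srinivasan bound: their algorithm runs in $2^{O(\sqrt{\log N'})}$ rounds on $N'$-vertex graphs (the expression $2^{O(\sqrt{\log\log n})}$ in the paper's introduction refers to $\Detd(\polylog n)$, not $\Detd(n)$). With the correct exponent and your component-size bound $N' = \Delta^{O(1)}\cdot O(\log n)$, which under $\Delta^2 = O(n^{\alpha})$ is only $\poly(n)$, you get $T = 2^{O(\sqrt{\log n})}$ and hence $O(\sqrt{\log n})$ $\MPC$ rounds after exponentiation---far too slow. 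Your two inaccuracies (wrong exponent in $\Detd$, weak component bound) happened to cancel.

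The missing ingredient is that the CLP analysis yields components of size $N' = O(\polylog n)$, \emph{independent of} $\Delta$; this is stronger than what the generic shattering lemma (\Cref{lem:shatter}) gives and is exactly what the paper invokes. With $N' = \polylog n$ one has $\log N' = O(\log\log n)$, so $T = 2^{O(\sqrt{\log\log n})}$ and graph exponentiation costs $O(\sqrt{\log\log n})$ rounds as required. This stronger bound also makes your local-memory argument clean: each component has $\polylog n \ll n^{\alpha}$ vertices, so gathering an entire component onto one machine is immediate, and you do not need the vague ``calibrating the hidden exponent'' step.
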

The proof of \Cref{lemma:CLP} almost immediately follows from \cite{ChangLP18, Parter18}; there are only few changes that have to be made in order to turn their $\CLIQUE$~algorithm into a low-memory $\MPC$ algorithm. The details are deferred to \Cref{app:LowMemMPC}.

%

\begin{proof}[Proof of \Cref{thm:MPCCol}]
We present a recursive algorithm based on the randomized partitioning algorithm of \Cref{lem:partition}. 
If $\Delta= \poly (\log n)$ then the conditions of \Cref{lemma:CLP} are satisfied trivially; we can solve the problem in  $O(\log^*\Delta+\sqrt{\log \log n}) = O(\sqrt{\log \log n})$ rounds of low-memory MPC with total memory $\widetilde{O} ( n \cdot \Delta^2 ) = \widetilde{O}(m)$. Otherwise, we execute the following algorithm. 

\paragraph{Randomized Partitioning:}
Let $G$ be the graph that we want to color. We apply the randomized partitioning algorithm of \Cref{lem:partition} to $G$, which gives us sets $B_1, \dotsc, B_k$ and $L$, as well as color sets $C_1, \dotsc, C_k$. The goal is now to first color $B_1,\ldots,B_k$ with colors from $C_1, \ldots,C_k$, respectively. Since the colors in the sets $C_i$ are disjoint, this gives a proper coloring of $B:=\bigcup_{i=1}^k B_i$. Then, for every vertex in $L$, we remove all colors already used by neighbors in $B$ from the palettes, leaving us with a list-coloring problem of the graph induced by $L$ with maximum degree $\Delta_L$.

In the following, we first describe how to color each set $B_i$ with colors in $C_i$, and then how to solve the remaining list-coloring problem in $L$. For the parameters in \Cref{lem:partition}, we use $\aaaa = 6$ and $\bbbbb = 3/5$.\footnote{The choice  $\bbbbb = 3/5$ is to ensure that the number of available colors for each vertex in each subgraph meets the palette size constraint specified in \cref{lemma:CLP}.}

\paragraph{List-Coloring Problem in $B_i$:}
If the maximum degree $\Delta_{i}$ in $B_i$ satisfies $\Delta_i^2=O(n^{\alpha})$, then, by \Cref{lem:partition} ii), $B_i$ satisfies the conditions of \Cref{lemma:CLP}
We thus can apply the algorithm of \Cref{lemma:CLP} to $B_i$. Otherwise, we recurse on $B_i$. Note that this is possible since, by \Cref{lem:partition} ii) applied to $G$, $B_i$ satisfies the conditions of \Cref{lem:partition}. 

\paragraph{List-Coloring Problem in $L$:}
If the maximum degree $\Delta_{L}$ in $L$ satisfies $\Delta_L^2=O(n^{\alpha})$, then, by \Cref{lem:partition} iii) applied to $G$, $L$ satisfies the conditions of \Cref{lemma:CLP}. 
We thus can apply the algorithm of \Cref{lemma:CLP} to $L$. Otherwise, we recurse on $L$. Note that this is possible since by \Cref{lem:partition} iii), $L$ satisfies the conditions of \Cref{lem:partition}.
%

\paragraph{Number of Iterations:}
Since the maximum degree in $L$ reduces by a polynomial factor in every step, after at most $O(1/\alpha)$ steps, the resulting graph has maximum degree at most $O(n^{\alpha/2})$, where we satisfy the conditions of  \Cref{lemma:CLP}, and hence do not recurse further. Note that when recursing on sets $B_i$, the degree drop is even larger, and hence the same reasoning applies to bound the number of iterations.

\paragraph{Memory Requirements:}
It is obvious that the recursive partitioning of the input graph $G$ does not incur any overhead in the memory, neither local nor global. Now, let $\mathcal{H}$ be the set of all graphs $H$ on which we apply the algorithm of \Cref{lemma:CLP}. As we only apply this algorithm when the maximum degree $\Delta_H$ of $H$ is $O(n^{\alpha/2})$ or $\poly ( \log n)$, we clearly have $\Delta_H^2=O(n^{\alpha})$, so the algorithm \Cref{lemma:CLP} is guaranteed to run with local memory $O(n^{\alpha})$. 

It remains to show how to guarantee the total memory requirement of $\widetilde{O}(m)$, where $m$ is the number of edges in the input graph $G$, as promised in \Cref{thm:MPCCol}.
First, observe that due to the specifications of \Cref{lemma:CLP}, we can write the total memory requirement as
$\sum_{H\in\mathcal{H}}\sum_{v\in H} (\deg_H(v))^2$. 
First, assume that the graph $G$ has been partitioned at least three times to get to $H$.  By \Cref{lem:partition}~iv), the degree of any vertex $v$ in $H$ is either $\tilde{O}(1)$ or at most 
\[\deg_G(v) \cdot \tilde{O}\left(\Delta^{-\frac{1}{4}}\right) \cdot \tilde{O}\left(\Delta^{-\frac{1}{4} \cdot \frac{3}{4}}\right) \cdot
\tilde{O}\left(\Delta^{-\frac{1}{4} \cdot (\frac{3}{4})^2}\right) = \deg_G(v) \cdot \tilde{O}\left(\Delta^{-37/64}\right) < \tilde{O}\left(\sqrt{\deg_G(v)}\right).\]
Note that in the above calculation we assume $v$ always goes to the left-over part $L$ in all three iterations. If $v$ goes to $B_i$, then the degree shrinks faster. Remember that  we set $q = \tilde{O}(\Delta^{-1/4})$.
Hence, we require a total memory of 
\[
    \widetilde{O}\left(\sum_{H\in\mathcal{H}} \sum_{v\in H} (\deg_H (v))^2 \right) = \widetilde{O}\left( \sum_{H\in\mathcal{H}} \sum_{v\in H} \deg_G(v) \right) = \widetilde{O} \left( \sum_{v \in G} \deg_G (v) \right) = \widetilde{O}(m) \ .
\]
Note that the algorithm can be easily adapted to always perform at least three partitioning steps if $\Delta_H$ is bounded from below by a sufficiently large $\poly (\log n)$, because then the conditions of \Cref{lem:partition} are satisfied. On the other hand, if $\Delta_H = \poly (\log n)$, it is follows immediately that $\widetilde{O}\left( \sum_v (\deg_H(v))^2 \right) = \poly (\log n)=\widetilde{O}(1)$. 
Put together, we have $ \sum_{H\in\mathcal{H}}\sum_{v\in H} (\deg_H(v))^2 = \widetilde{O}(m)$.




\end{proof}

    \SetKwProg{Fn}{Function}{}{end}
    \SetKwProg{Proc}{Procedure}{}{end}
    \SetKwFor{Forsim}{for}{do simultaneously}{}
    \SetKwFor{For}{for}{do}{}
    \SetKwIF{If}{ElseIf}{Else}{if}{then}{else if}{else}{}

    \SetKwFunction{generatecolor}{GenerateColor}
    \SetKwFunction{samplecolors}{SampleColors}
    \SetKwFunction{sparsifiedcolorbidding}{SparsifiedColorBidding}
    \SetKwFunction{sparsifiedcoloring}{SparsifiedColoring}
    
    \newcommand{\MyAnd}{\textbf{\upshape and}}
    \newcommand{\MyError}{\textbf{\upshape error}}

    \newcommand{\plog}{\;\textcolor{red}{\polylog}\;}

    \newcommand{\Eirich}{E_{u_r}^{\mathrm{rich}}}
    \newcommand{\Eilazy}{E_{u_r}^{\mathrm{lazy}}}
    \newcommand{\Eilucky}{E_{u_r}^{\mathrm{lucky}}}
    \newcommand{\Eiactive}{E_{u_r}^{\mathrm{active}}}
    \newcommand{\Eicolored}{E_{u_r}^{\mathrm{colored}}}
    
\section{Distributed Coloring with Palette Sparsification \label{sec:sparseCLP}}

In this section, we present our sparsification for the $\LOCAL$-model coloring algorithm of CLP~\cite{ChangLP18}, which is the second novel technical ingredient in our results. As a consequence, this sparsification gives us (i) an $\LCA$ solving $(\Delta+1)$ list coloring  with query complexity $\Delta^{O(1)} \cdot O(\log n)$ and (ii) an $O(1)$-round $\CLIQUE$ algorithm solving $(\Delta+1)$ list coloring  for the case $\Delta = O(\poly \log n)$, using the speedup lemma (\cref{lem:speedup}).


\paragraph{The Chang-Li-Pettie Coloring Algorithm.} We will not sparsify the entire algorithm of~\cite{ChangLP18}. 
The algorithm of~\cite{ChangLP18} is based on the graph shattering framework.
Each vertex successfully colors itself with probability $1 - 1/\poly(\Delta)$ during the pre-shattering phase of~\cite{ChangLP18}, and so by the shattering lemma (\cref{lem:shatter}), the remaining uncolored vertices $V_{\bbbb}$ form connected components of size $\Delta^{O(1)} O(\poly \log n)$.\footnote{In the analysis of~\cite{ChangLP18}, this can also be made $O(\poly \log n)$, regardless of $\Delta$.} The post-shattering phase then applies a deterministic $(\deg+1)$-list coloring algorithm to color them. \cref{lem:shatter} guarantees that the number of edges within  $V_\bbbb$ is $O(n)$, and so they can be colored in $O(1)$ rounds in the $\CLIQUE$ model. Similarly, dealing with   $V_\bbbb$  only adds an $\Delta^{O(1)} \cdot O(\log n)$-factor overhead for $\LCA$. Thus, we only need to focus on the pre-shattering phase, which consists of the following three steps.

\begin{description}
    \item[Initial Coloring Step:] This step is an $O(1)$-round procedure that generates excess colors at vertices that are locally sparse.
    \item[Dense Coloring Step:] This step is an $O(1)$-round procedure that colors most of the locally dense vertices. 
    \item[Color Bidding Step:] This step is an $O(\log^\ast \Delta)$-round procedure that colors most of the remaining uncolored vertices, using the property that these vertices have large number of excess colors.  
\end{description}

For our $\LCA$ and $\CLIQUE$ algorithms, the plan is to run the initial coloring step and the dense coloring step by a direct simulation, which costs $O(1)$ rounds. 
Then, we will give a sparsified version of the color bidding step where each vertex $v$ only need to receive the information from $O(\poly \log \Delta)$ of its neighbors to decide its output.


\paragraph{A Black Box Coloring Algorithm.} In view of the above, we will use {\em part of} the algorithm of~\cite{ChangLP18} as a black box. The  specification of this black box is as follows. Consider an instance of the $(\Delta+1)$-list coloring on the graph $G=(V,E)$. The black box algorithm colors  a subset of $V$ such that the remaining uncolored vertices are partitioned into three subsets $V_{\Good}$, $V_{\bbbb}$, and $R$ meeting the following conditions.
\begin{description}
\item[Good Vertices:] The edges within $V_{\Good}$  are oriented as a DAG, and each vertex $v \in V_{\Good}$ is associated with a parameter $p_v  \leq |\Psi(v)| - \deg(v)$ satisfying the conditions $p^\star = \min_{v \in V} p_v \geq \Delta / \log \Delta$ and $\sum_{u \in  \Nout(v)} 1/ p_u \leq 1/C$, where $C > 0$ can be any specified constant.\footnote{Here $\Psi(v)$ is the set of available colors at $v$, i.e., the colors in the palette of $v$ that have not been taken by $v$'s neighbors. Here $\deg(v)$ refers to the number of uncolored neighbors of $v$ in $V_{\Good}$. We use $\outdeg(v)$ to refer to the number of out-neighbors of $v$. Intuitively, $p_v  \leq |\Psi(v)| - \deg(v)$ is a lower bound on the number of {\em excess colors} at $v$.} Recall that $\Nout(v)$ refers to the set of out-neighbors of $v$.
\item[Bad Vertices:] The probability that a vertex $v \in V$ joins $V_\bbbb$  is $1 - 1/\poly(\Delta)$. In particular, in view of \cref{lem:shatter}, with probability $1 - 1/ \poly(n)$, they form connected components of size $\Delta^{O(1)} \cdot O( \log n)$, and the number of edges within the bad vertices is $O(n)$.
\item[Remaining Vertices:] The subgraph induced by $R$ has a constant maximum degree.
\end{description}

\cref{lem:CLP-summary} follows from~\cite{ChangLP18}, after some minor modifications. 
For the sake of completeness we show the details of how we obtain \cref{lem:CLP-summary} from the results in~\cite{ChangLP18} in Appendix~\ref{sect:CLP-details}. Note that for the case of $\CLIQUE$, as long as  $\Delta = O(\sqrt{n})$, \cref{lem:CLP-summary} can be implemented in $O(1)$ rounds.

\begin{lemma}[\cite{ChangLP18}]\label{lem:CLP-summary}
Consider an instance of the $(\Delta+1)$-list coloring on the graph $G=(V,E)$.
 There is an $O(1)$-round  $\LOCAL$ algorithm that colors a subset of vertices such that the remaining uncolored vertices are partitioned into three subsets $V_{\Good}$, $V_{\bbbb}$, and $R$ meeting the above conditions, and the algorithm uses $O(\Delta^2 \log n)$-bit messages.
\end{lemma}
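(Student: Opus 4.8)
The plan is to derive \Cref{lem:CLP-summary} by carefully tracking the pre-shattering phase of the CLP algorithm~\cite{ChangLP18} and checking that, after its $O(1)$-round portion, the stated tripartition $V_{\Good} \cup V_{\bbbb} \cup R$ arises with the claimed properties. I would not re-prove anything from~\cite{ChangLP18}; instead I would invoke its main structural results and repackage them. Concretely, the CLP pre-shattering phase runs (i) an initial coloring step and (ii) a dense coloring step, both $O(1)$ rounds, after which the still-uncolored vertices are classified according to the local sparsity decomposition of~\cite{ChangLP18}: the locally sparse vertices that survive accumulate a $\Omega(\Delta/\log\Delta)$ pool of excess colors, the ``easy'' dense vertices that survive do as well, and a constant-degree remnant $R$ is left. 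The key point is that each vertex fails to be colored in these $O(1)$ rounds only with probability $1 - 1/\poly(\Delta)$ in the sense of the shattering lemma (the failure event depends only on a constant-radius neighborhood), so the vertices we are \emph{not} willing to certify as ``good'' form the set $V_{\bbbb}$, and \Cref{lem:shatter} gives the component-size and edge-count bounds.

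The main steps, in order, would be: \textbf{(1)} Recall the sparsity/density decomposition of~\cite{ChangLP18} and quote that after initial coloring, each surviving sparse vertex has $\geq \Delta/\log\Delta$ excess colors w.h.p.\ in $\Delta$. \textbf{(2)} Quote that after the dense coloring step, the surviving dense vertices in each ``block'' also have a comparable surplus of excess colors, and that these two families together can be given the parameter $p_v$ with $p^\star \geq \Delta/\log\Delta$. \textbf{(3)} Orient the edges among these vertices as a DAG (e.g.\ by the order in which blocks/layers are processed, or simply by ID), and verify the out-neighborhood condition $\sum_{u \in \Nout(v)} 1/p_u \le 1/C$: this follows because each $p_u = \Omega(\Delta/\log\Delta)$ while $\outdeg(v) = O(\Delta)$ gives $O(\log\Delta)$, which is too large, so one must instead use that the relevant out-neighbors are only those still uncolored and in the same good class, whose count is controlled by the excess-color slack — this is exactly the ``excess color'' accounting in~\cite{ChangLP18} and is the delicate point. \textbf{(4)} Define $V_{\bbbb}$ as all uncolored vertices not placed in $V_{\Good}$ or $R$, bound $\Prob[v \in V_{\bbbb}] \le 1/\poly(\Delta)$ with the locality property, and apply \Cref{lem:shatter} with a suitable constant $c$. \textbf{(5)} Observe that $R$ has $O(1)$ maximum degree by construction of the decomposition. \textbf{(6)} Account for message size: every step is local with radius $O(1)$, and a vertex only ever needs to learn colors and palettes within its $O(1)$-neighborhood, i.e.\ $O(\Delta^2\log n)$ bits, giving the final clause. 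For the $\CLIQUE$ remark, note each such message round with $\Delta = O(\sqrt n)$ sends $O(n\log n)$ bits per vertex, so Lenzen's routing (\Cref{lem:routing}) implements it in $O(1)$ rounds.

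I expect step (3) — verifying $\sum_{u\in\Nout(v)}1/p_u \le 1/C$ for the good vertices — to be the main obstacle, since the naive bound is off by a $\log\Delta$ factor and one must instead use that the out-degree \emph{into the set of still-uncolored good vertices} is small relative to the excess, which requires reopening the precise statements of~\cite{ChangLP18} about how much slack the initial and dense coloring steps create; getting the constant $C$ to be an arbitrary prescribed constant (rather than just some fixed constant) may require re-running the initial step with adjusted parameters. The secondary nuisance is bookkeeping the locality of the ``bad'' event so that \Cref{lem:shatter} applies with the adversarial-randomness clause; this is routine but must be stated. The remaining items (DAG orientation, $R$ being $O(1)$-degree, message size, and the $\CLIQUE$ implementation via \Cref{lem:routing}) are straightforward. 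Since the bulk of the work is a faithful citation and repackaging, I would defer the detailed verification to \Cref{sect:CLP-details} as the paper already indicates.
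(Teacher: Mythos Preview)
Your overall strategy (cite~\cite{ChangLP18}, repackage the output of its $O(1)$-round initial and dense coloring steps) is right, but two of your concrete claims are off and would leave real gaps.

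First, your step~(2) asserts that after the initial and dense coloring steps one already has $p^\star \geq \Delta/\log\Delta$. This is false: in the CLP hierarchy $(V_1,\ldots,V_\ell,V_{\sparse})$ with $\epsilon_1 = \Delta^{-1/10}$, a surviving vertex in layer $V_2$ only has $p_v = \Theta(\epsilon_1^2\Delta) = \Theta(\Delta^{0.8})$ excess colors, which is far below $\Delta/\log\Delta$. The paper closes this gap by a separate $O(1)$-round step: it runs $O(1)$ iterations of color bidding restricted to the vertices with $p_v < \Delta/\log\Delta$ (for these vertices $\sum_{u\in\Nout(v)}1/p_u = O(\log^{-1/4}\Delta)$, so color bidding finishes them in $O(1)$ rounds), and any that survive are dumped into $V_{\bbbb}$.

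Second, your account of step~(3) misses the mechanism that makes $\sum_{u\in\Nout(v)}1/p_u \leq 1/C$ hold. The orientation is not ``by ID'' but specifically \emph{by layer}: edges point from higher layers to lower layers (and within a layer by ID). The point is that a vertex $v\in V_i$ has only $O(\epsilon_{j-1}^{2.5}\Delta)$ out-neighbors in each lower layer $V_j$, while those neighbors have $p_u = \Theta(\epsilon_{j-1}^2\Delta)$; the layer-$j$ contribution is thus $O(\epsilon_{j-1}^{0.5})$, and summing over $j\leq i$ gives a geometric series bounded by $O(\epsilon_{i-1}^{0.5}) < 1/C$. This is not an ``out-degree into the uncolored set is small'' argument; it is a layered accounting that you need to state explicitly. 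Finally, boosting $C$ to an arbitrary prescribed constant is done not by ``re-running the initial step with adjusted parameters'' but by applying a one-round shrinking lemma (each application replaces $C$ by roughly $\exp(C/6)\cdot C$) a constant number of times; each application moves the failing vertices to $V_{\bbbb}$ with the required $1/\poly(\Delta)$ probability bound.
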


\subsection{A Sparsified Color Bidding Algorithm \label{sect-color-bidding-main}}
In view of \cref{lem:CLP-summary},
 we focus on the subgraph induced by $V_{\Good}$, and denote it as
$G_0=(V_0,E_0)$.  The graph $G_0$ is a directed acyclic graph.
The set of available colors for $v$ is denoted as $\Psi_0(v)$.
Our goal is to give a proper coloring of $G_0$. An important property of  $G_0$ is that
each vertex $v \in V$ is associated with a parameter $p_v \leq |\Psi_0(v)| - \deg_{G_0}(v)$ such that $\sum_{u \in  \Nout(v)} 1/ p_u \leq 1/C_0$, where $C_0$ can be any specified large constant.
Intuitively, $p_v$ gives the lower bound of the number of \emph{excess colors} at  vertex $v$.
It is guaranteed that $p^\star  = \min_{v\in V_0} p_v \geq \Delta / \log \Delta$. Parameters $C_0$ and $p^\star$ are initially known to all vertices in $V_0$.


\paragraph{Review of the Color Bidding Algorithm.}
The above conditions might look a bit strange, but it allows us to find a proper coloring in $O(\log^\ast \Delta)$ rounds in the $\LOCAL$ model by applying $O(\log^\ast \Delta)$ iterations of the procedure \trial~\cite{ChangLP18}, as follows.
\begin{enumerate}
\item Each color $c \in \Psi(v)$ is added to $S_v$ with probability $\frac{C}{2  |\Psi(v)|}$ independently.
\item If there exists a color $c^\star \in S_v$ that is not selected by any vertex in $\Nout(v)$, $v$ colors itself $c^\star$.
\end{enumerate}
We give a very high-level explanation about how this works.
For the first iteration we use $C = C_0$. 
Intuitively, for each color $c \in S_v$, the probability that $c$ is selected by an out-neighbor of $v$ is  \[\sum_{u \in  \Nout(v)} C/ (2|\Psi(u)|)  \leq \sum_{u \in  \Nout(v)} C/ (2p_u) \leq 1/2.\]
In the calculation we use the inequality  $\sum_{u \in  \Nout(v)} 1/ p_u \leq 1/C_0$ that is guaranteed by \cref{lem:CLP-summary}. The probability that $v$ fails to  color itself is roughly $1/2^{|S_v|}$, which is exponentially small in $C_0$, as in expectation $|S_v| = C_0/2$.
Thus, for the next iteration we may use a parameter $C$ that is exponentially small in $C_0$, and so after $O(\log^\ast \Delta)$ iterations, we are done.

\paragraph{Parameters.} Let $\Csamples > 0$ be a constant to be determined. Let $p^\star \in [\Delta/ \log \Delta, \Delta]$ be the parameter specified in the conditions for  \cref{lem:CLP-summary}. The $C$-parameters used in the algorithms $C_0, \ldots, C_{k-1}$ are defined as follows.
For the base case, $C_0$ is the parameter $C$ specified in the conditions for \cref{lem:CLP-summary}. Given that $C_{i}$ has been defined, we set 
\[C_{i+1} = 
2\left\lceil
\left( \min\left\{ \frac12 \exp(C_i/6)  C_i, \ \log^{\Csamples} p^\star\right\}
\right) / 2\right\rceil - 2.
\]
In other words,  $C_{i+1}$ is the result of rounding $ \min\left\{ \frac12 \exp(C_i/6)  C_i, \ \log^{\Csamples} p^\star\right\}$  down to the nearest even number.
The number of iterations $k$ is chosen as the smallest index such that $C_{k-1} = 2\left\lceil\log^{\Csamples} p^\star / 2\right\rceil - 2$. It is clear that $k = O(\log^\ast \Delta)$, as $p^\star  \leq \Delta+1$.

We will use this sequence $C_0, \ldots, C_{k-1}$ in our sparsified color bidding algorithm. This sequence is  slightly different than the one used in~\cite{ChangLP18}. The last number in the sequence used  in~\cite{ChangLP18}   is set to be $\sqrt{p^\star}$, but here we set it to be $O(\poly \log p^\star)$. Having a larger $C$-parameter leads to a smaller failure probability, but it comes at a cost that we have to sample more colors, and this means that each vertex needs to communicate with more neighbors to check for conflict.

\paragraph{Overview of the Proof.} We first review the analysis of the multiple iterations of \trial\ in~\cite{ChangLP18}, and then we discuss how we sparsify this algorithm. The proof in~\cite{ChangLP18} maintains an invariant $\Invariant_i(v)$ for each vertex $v$ that is uncolored at the beginning of each iteration $i$, as follows.\footnote{In this section, $G$ refers the current graph under consideration, i.e., it excludes all vertices that have been colored or removed in previous iterations. We use $G_0$ to refer to the original graph.}
\begin{align*}
    &\Invariant_i(v):  \sum_{u \in  \Nout_G(v)} 1/ p_u \leq 1/C_i.
\end{align*}
We will use the \emph{same} $p_u$ because the number of excess colors of a vertex never decreases. 
By  \cref{lem:CLP-summary}, this invariant is met for $i=0$. The vertices $u$ not satisfying the invariant $\Invariant_{i}(v)$ are considered {\em bad}, and are removed from consideration.
The analysis of~\cite{ChangLP18} shows that
\begin{enumerate}
    \item  Suppose  all vertices $u$ in $G$ at the beginning of the $i$th iteration satisfy the $\Invariant_{i}(u)$. Then at end of this iteration, for each vertex $u$,  with probability $1 - 1/\poly(\Delta)$, either $u$ has been successfully colored, or $\Invariant_{i+1}(u)$ is satisfied.
    \item  For the last iteration, Given that all vertices $u$ in $G$ satisfy  $\Invariant_{k-1}(u)$, then $v$ is successfully colored at iteration $k$ with probability $1 - 1/\poly(\Delta)$. 
\end{enumerate}
By the shattering lemma (\cref{lem:shatter}), all vertices that remain uncolored at the end of the algorithm induce a subgraph with $O(n)$ edges. In particular, in $\CLIQUE$ we are able to color them in $O(1)$ additional rounds.

To sparsify the algorithm, our strategy is to let each vertex sample the colors needed in all iterations at the beginning of the algorithm. It is straightforward to see that each vertex only needs to use $O(\poly \log \Delta)$ colors throughout the algorithm, with probability $1 - 1/\poly(\Delta)$. After sampling the colors, if $u$ finds that $v \in \Nout(u)$ do not share any sampled color, then there is no need for $u$ to communicate with $v$. This effectively reduces the maximum degree to $\Delta' = O(\poly \log \Delta)$. If $\Delta = O(\poly \log n)$, then  $\Delta' = O(\poly (\log \log n))$, which is enough to apply the opportunistic speedup lemma (\cref{lem:speedup}).

There is one issue needed to be overcome. That is, verifying whether $\Invariant_i(u)$ is met has to be done on the original graph $G$, as we have to go over all vertices $v \in \Nout_G(u)$, regardless of whether $u$ and $v$ have shared sampled colors.
One way to deal with this issue is to simply not remove the vertices $u$ violating $\Invariant_i(u)$, but if we do it this way, then when we calculate the failure probability of a vertex $v$, we have to apply a union bound over all vertices $u$ within radius $\tau = O(\log^\ast \Delta)$ to $v$  that $u$ does not violate the invariant for each iteration. Due to this union bound, we can only upper bound the size of the connected components of bad vertices by $\Delta^{O(\log^\ast \Delta)} \cdot O(\log n)$, so this does not lead to an improved $\LCA$.\footnote{We remark that this is only an issue for $\LCA$, and this is not an issue for application in $\CLIQUE$. In the shattering lemma (\cref{lem:shatter}), for the parameters $\Delta = O(\poly \log n)$  and  $c = O(\log^\ast \Delta)$, we can still bound the number of edges within the bad vertices $B$ by $O(n)$.}
To resolve this issue, we observe that  the invariant  $\Invariant_i(u)$  might be too strong for our purpose, since intuitively if  $v \in \Nout(u)$ does not share any  sampled colors with $u$, then $v$ \emph{should not be able to affect} $u$ throughout the algorithm. 

In this paper, we will consider an alternate invariant $\Invariant_i'(u)$ that can be checked in the sparsified graph.
More precisely, in each iteration, each vertex $v$ will do a {\em two-stage} sampling to  obtain two color sets $S_v \subseteq T_v \subseteq \Psi(v)$. The set $S_v$ has size  $C/2$, and the set $T_v$ has size $\log^{\Csamples} \Delta$, where $\Csamples >0$ is a constant to be determined. The alternate invariant $\Invariant_i'(v)$ is defined as
\[
\Invariant_i'(v): \left|T_v \setminus \bigcup_{u \in \Nout_G(v)}S_u \right| \geq |T_v|/3.\]
This invariant $\Invariant_i'(v)$ can be checked by having $v$ communicating only with its neighbors that share a sampled color with $v$.
Intuitively, if $\Invariant_i(v)$ holds, then  $\Invariant_i'(v)$ holds with probability $1 - 1/\poly(\Delta)$. It is also straightforward to see that $\Invariant_i'(v)$ implies that $v$ has a high probability of successfully coloring itself in this iteration, as $S_v$ is a size-$(C_i/2)$ uniformly random subset of $T_v$. In subsequent discussion, we say that $v$ is {\em rich} if $\Invariant_i'(v)$ is met.
Other than not satisfying $\Invariant_i'(v)$, there are two other bad events that we need to consider: 
\begin{itemize}
    \item (Informally) $v$ has too many neighbors that share a sampled color with $v$; in this case, we say that $v$ is {\em overloaded}. This is a bad event since the goal of the palette sparsification is to reduce the number of neighbors that $v$ needs to receive information from.
    \item  Most of the sampled colors of $v$ reserved for iteration $i$ have already be taken by the neighbors of $v$ during the previous iterations $1, \ldots, i-1$, so $v$ does not have enough colors to correctly run the algorithm for the $i$th iteration; in this case,  we say that $v$ is {\em lazy}.
\end{itemize}




\paragraph{The Sparsified Color Bidding Algorithm.} We are now in a position to describe the sparsified version of \trial.
For the sake of clarity we use the following notations to describe the  palette of a vertex $u$. Recall that $\Psi_0(u)$ refers to the palette of $u$ initially in the original graph $G_0$.
At the beginning of an iteration, we write $\Psi^+(u)$ to denote the set of available colors at $u$, and write $\Psi^-(u)$ to denote the set of colors already taken by vertices in $N_{G_0}(u)$. Note that $\Psi^+(u) = \Psi_0(u) \setminus \Psi^-(u)$.

The function \samplecolors\ describe how we sample the colors $S_u$ and $T_u$ in an iteration. 
Intuitively, we use $k_1 = C/2$ and $k_2 = \log^\Csamples p^\star$ as the target set sizes.
The set $\randcolor$ represents a length-$K$ sequence of colors that $u$ {\em pre-sampled} for the $i$th iteration, where $K = \log^{\Ctries+\Csamples} p^\star$, and $\randcolor(j)$ represents the $j$th color of  $\randcolor$.
We will later see that  $\randcolor$ is generated in such a way that each  $\randcolor(j)$ is a uniformly random color chosen from $\Psi_0(u)$, where $\Psi_0(u)$ is the set of available colors of $v$ initially in $G_0$. The set $S^{-}$ represents the set $\Psi^-(u)$ which consists of the colors already taken by the vertices in $N_{G_0}(u)$ before iteration $i$.

\begin{algorithm}[H]
    \Fn{\samplecolors{$k_1$, $k_2$, $S^-$, $\randcolor$}}{
        $T \leftarrow \varnothing$\;
        \For{$j \leftarrow 1$ \KwTo $k_2 \log^{\Ctries} p^\star$}{
            $c \leftarrow {\randcolor}(j)$\;
            \If{$c \notin S^-$}{
                $T \leftarrow T \cup \{c\}$\;
            }
            \If{$|T| = k_1$}{
                $T_1 \leftarrow T$\;
            }
            \If{$|T| = k_2$}{
                \KwRet{$(T_1, T)$}\; 
            }
        }
        \KwRet{$(\varnothing, \varnothing)$}\;
    }
\end{algorithm}

The procedure \sparsifiedcolorbidding\ is the sparsified version of \trial.
In this procedure, it is straightforward to verify that the outcome $S_v \leftarrow T_1$ and  $T_v \leftarrow T$ of   \samplecolors{$C/2$, $\log^{\Csamples} p^\star$, $\Psi^-(v), {\randcolor}_v$} satisfies either one of the following:
\begin{itemize}
    \item $S_v = \varnothing$ and  $T_v = \varnothing$. This happens when most of the pre-sampled colors for this iteration have been taken by the neighboring vertices. We will later show that this occurs with probability $1/\poly(\Delta)$.
    \item Given that each ${\randcolor}_v(j)$ is a uniformly random color of $\Psi_0(v)$, we have: (i) $S_v$ is a size-$(C/2)$ uniformly random subset of $T_v$, and (ii)  $T_v$ is a size-$\left(\log^{\Csamples}p^\star\right)$ uniformly random subset of $\Psi_0(v)\setminus \Psi^-(v)$. That is, these two sets $S_v$ and $T_v$ are sampled uniformly randomly from the set of available colors of $v$, i.e., $\Psi_0(v) \setminus  \Psi^-(v)$.
\end{itemize}

The condition for $v$ to be {\em overloaded} is defined in the procedure \sparsifiedcoloring. Intuitively, $v$ is said to be overloaded at iteration $i$ if the colors in  $\randcolor_v^{(i)}$ have appeared in $\bigcup_{0 \leq i' \leq i} \randcolor_u^{(i')}$, for too many neighbors $u \in N_{G_0}(v)$; this is undesirable as we want the degree of the sparsified graph to be small.

\begin{algorithm}[H]
    \Proc{\sparsifiedcolorbidding{$G$, $C$, $\Psi^-$, $\{\randcolor_v\}_{v \in V_0}$}}{
        \Forsim{each vertex $v \in V(G)$}{
            1. $(S_v, T_v) \leftarrow$ \samplecolors{$C/2$, $\log^{\Csamples} p^\star$, $\Psi^-(v), {\randcolor}_v$}. 
            If $v$ is {\em overloaded}, reset $(S_v, T_v) \leftarrow (\varnothing, \varnothing)$. We call $v$ \emph{lazy} if $S_v = \varnothing$.\\
            2. $v$ collects information about $S_u$ from all neighbors $u \in \Nout_G(v)$. \\
            3. If $\left|T_v \setminus \bigcup_{u \in \Nout_G(v)}S_u \right| \geq |T_v|/3$, i.e., at most $2/3$ of colors $v$ sampled in $T_v$ are selected in $S_u$ of some neighbors $u \in \Nout_G(v)$, then we call $v$ \emph{rich}. If (i) $v$ is  not rich or (ii) $v$ is lazy, then $v$ marks itself \badvertex\ and it skips the next step. \\
            4. If there is a color $c \in S_v$ that is not in $\bigcup_{u \in \Nout_G(v)}S_u$, we call $v$ \emph{lucky} with color $c$. If $v$ is lucky with $c$, $v$ colors itself $c$.  Tie is broken arbitrarily.
        }
    }
\end{algorithm}

The procedure \sparsifiedcoloring\ represents the entire coloring algorithm, which consists of $k = O(\log^\ast \Delta)$ iterations of \sparsifiedcolorbidding. The notation $G[U]$ refers to the subgraph induced by $U$. Note that the set $U$ does not include the vertices that are marked  \badvertex, i.e., once a vertex $v$ marked itself \badvertex, it stops attempting to color itself; but a \badvertex\ vertex might still need to provide information to other vertices in subsequent iterations.


\begin{algorithm}[H]
    \Proc{\sparsifiedcoloring{}}{
        $G \leftarrow G_0$\;
        $K \leftarrow \log^{\Ctries + \Csamples} \Delta$\;
        \For(\tcc*[h]{Obviously $k = O(\log^* p^\star - \log^* C_0) = O(\log^* \Delta)$.}){$i \leftarrow 0$ \KwTo $k-1$}{
            1. $G \leftarrow G[U]$, where $U$ consists of the yet uncolored vertices in $G$ that are not \badvertex.\\
            2. Each vertex $v \in V(G_0)$ generates a color sequence ${\randcolor}^{(i)}_v(1), \ldots, {\randcolor}^{(i)}_v(K)$ by the following rule: for each $j = 1,\ldots,K$, ${\randcolor}^{(i)}_v(j)$ is a color in $\Psi_0(v)$, chosen uniformly at random, independently.  \\
            3. Each vertex $v \in V(G)$ gathers the information about $\{{\randcolor}^{(i')}_u\}_{\substack{0 \leq i' \leq i}}$ from each neighbor $u \in N_{G_0}(v)$. \\
            4. If there exist three indices $i' \in [0, i]$, $j \in [1, K]$, and  $j' \in [1, K]$ such that ${\randcolor}^{(i)}_v(j) = {\randcolor}^{(i')}_u(j')$, we say $u \in N_{G_0}(v)$ is a \emph{significant} neighbor of $v \in V(G)$. If $v$ has more than $K^2 \log \Delta$ significant neighbors, we call $v$ \emph{overloaded}. \\
            5. Each vertex $v \in V(G)$ gathers the information about the colors that have been taken by the vertices in $N_{G_0}(v)$. Let $\Psi^-(v)$ be the set of these colors. \\
            6. Call \sparsifiedcolorbidding{$G$, $C_i$, $\Psi^-$, $\{\randcolor_v^{(i)}\}_{v \in V_0}$}. \\
        }
    }
\end{algorithm}

It is straightforward to see that \sparsifiedcoloring\ can be implemented in such a way that after an $O(1)$-round pre-processing step, each vertex $v$ is able to identify $O(\poly  \log \Delta)$ neighbors such that $v$ only need to receive information from these vertices during \sparsifiedcoloring. 
In the pre-processing step, we 
 let each vertex $v$ sample the color sequences $\randcolor_v^{(i)}$ for each $0 \leq i \leq k-1$, and let each vertex $v$ learn the set of colors sampled by $N_{G_0}(v)$. 
 Based on this information, before the first iteration begins, $v$ is able to identify at most $K^2 \log \Delta = O(\poly \log \Delta)$ neighbors of $v$ for each iteration $i$ such that $v$ is sure that $v$ does not need to receive information from all other neighbors during the $i$th iteration. See \cref{sect-implement-color-bidding} for details.
 
 For the rest of \cref{sect-color-bidding-main}, we focus on the analysis of \sparsifiedcoloring.
For each iteration $i$, recall that $\Psi^+(v) = \Psi_0(v) \setminus \Psi^-(v)$ is the set of available colors at $v$ at the beginning of this iteration.
For a vertex $v \in V_0$, and its neighbor $u \in N_{G_0}(v)$, we say that $u$ is a \emph{$c$-significant} neighbor of $v$  in iteration $i$ if $c  = {\randcolor}_u^{(i')}(j)$ for some $i' \in [1, i]$ and $j \in [1,k]$.

Consider the beginning of the $i$th iteration of the for-loop in \sparsifiedcoloring.
In the graph $G=(V,E) \leftarrow G[U]$ under consideration in this iteration, we say that a vertex $v \in V$ is \emph{$(C, D)$-honest} if the following two conditions are met.
\begin{itemize}
    \item [(i)] $\sum_{u \in \Nout_G(v)} 1/p_u \leq 1/C$.
    \item [(ii)] For each color $c \in \Psi_0(v)$, $v$ has at most $D$ $c$-significant neighbors $u \in N_{G_0}(v)$ in the previous iteration.
\end{itemize}

Clearly all vertices are $(C_0,0)$-honest in $G = G_0$ at the beginning of iteration $i = 0$.
\cref{lem:shrinkii} shows that $(C, D)$-honest vertices are well-behaved. 






\begin{lemma} \label{lem:shrinkii}
    Consider the $i$th iteration of \sparsifiedcolorbidding\ in  \sparsifiedcoloring.
    Let $U$ be the set of yet uncolored vertices after this iteration. 
    Suppose a vertex $v$ is $(C, D)$-honest, with $C \leq \log^{\beta} p^\star$ and $D \leq 2K \cdot k = O(K \log^\ast \Delta)$, at the beginning of this iteration, then
    The following holds.
    \begin{enumerate}[label=\roman*)]
        \item $\Prob[v \text{\normalfont \ does not successfully color itself}]  \leq \exp(-C/6) + \exp(-\Omega(\log^{\Csamples} \Delta))$.
        \item $\Prob[v \mathrm{\ marks\ itself\ \badvertex}] \leq \exp(-\Omega(\log^{\Csamples} \Delta))$.
        \item $\Prob[\mathrm{at \ the \ beginning \ of \ the \ next \ iteration,} \ v \in U \ \mathrm{ or } \  v \ \mathrm{ is \ not} \  (C', D')\text{-}\mathrm{honest}] \\ \leq \exp(-\Omega(\log^{\Csamples} \Delta))$, where $C' = \min\left\{ \frac12 \exp(C/6)  C, \ \log^{\Csamples} p^\star\right\}$ and $D' = D + 2K$.
    \end{enumerate}
    The probability calculation only relies on the distribution of random bits generated in $\Ninc_{G_0}^2(v)$ in this iteration, i.e., $\{{\randcolor}^{(i)}_u\}_{u \in \Ninc_{G_0}^2(v)}$. In particular, the result holds even if random bits generated outside $\Ninc_{G_0}^2(v)$ are determined adversarially.
\end{lemma}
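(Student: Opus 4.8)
The plan is to charge any failure of $v$ in this iteration to one of four elementary events: $v$ is \emph{overloaded}; the invocation of \samplecolors\ at $v$ returns $(\varnothing,\varnothing)$ (call this a \emph{vacuous} sample); $v$ is \emph{not rich}, i.e.\ the sparsified invariant $\left|T_v\setminus\bigcup_{u\in\Nout_G(v)}S_u\right|\ge |T_v|/3$ fails; or $v$ is rich but not lucky. These cover every way $v$ can fail to colour itself, and $v$ is marked \badvertex\ exactly when one of the first three happens. Throughout I would condition on the entire execution up to the start of this iteration --- this fixes the current graph $G$, all the sets $\Psi^-(u)$, and all previously planted sequences --- so the only fresh randomness is $\{\randcolor^{(i)}_u\}_u$; reading off the definitions shows that each event below is a function of $\{\randcolor^{(i)}_u\}_{u\in\Ninc_{G_0}^2(v)}$ only (the richness and luckiness of $v$ involve the sets $S_u$ of out-neighbours $u$, whose current palettes depend only on vertices within distance $2$ of $v$), which gives both the locality claim and its robustness to adversarial randomness outside $\Ninc_{G_0}^2(v)$.

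First I would dispose of the vacuous sample. Since $p_v$ lower-bounds the excess colours, $|\Psi^+(v)|=|\Psi_0(v)|-|\Psi^-(v)|\ge p_v\ge p^\star\ge\Delta/\log\Delta$ while $|\Psi_0(v)|\le\Delta+1$, so each pre-sampled colour lands in the available set $\Psi^+(v)$ with probability at least $1/(2\log\Delta)$. Out of the $K=\log^{\Ctries+\Csamples}p^\star$ pre-sampled colours the expected number in $\Psi^+(v)$ is then $\Omega(K/\log\Delta)\gg k_2=\log^{\Csamples}p^\star$ (and since $|\Psi^+(v)|\gg K$, repetitions among the planted colours are negligible), so a Chernoff bound bounds the probability that fewer than $k_2$ of them are available --- i.e.\ the probability of a vacuous sample --- by $\exp(-\Omega(\log^{\Csamples}\Delta))$. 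For overload I would split the significant neighbours of $v$ into those made significant by a colour $u$ planted in an \emph{earlier} iteration --- at most $K\cdot D=O(K^2\log^\ast\Delta)$ of them, deterministically, by clause (ii) of honesty --- and those with $\randcolor^{(i)}_v(j)=\randcolor^{(i)}_u(j')$ for some $j,j'$. For the latter the key estimate is $\Prob[\randcolor^{(i)}_v(j)=\randcolor^{(i)}_u(j')]\le 1/\max\{|\Psi_0(v)|,|\Psi_0(u)|\}\le 1/|\Psi_0(v)|$, which makes the expected count at most $\deg_{G_0}(v)\cdot K^2/|\Psi_0(v)|<K^2$ (using $\deg_{G_0}(v)<|\Psi_0(v)|$). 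Conditioning on $\randcolor^{(i)}_v$ makes the indicators ``$u$ is significant via iteration $i$'' independent over $u$, and a two-step estimate --- a Bernstein bound on $\sum_{c\in\randcolor^{(i)}_v}f(c)$, where $f(c):=\sum_{u\,:\,c\in\Psi_0(u)}1/|\Psi_0(u)|\le\log\Delta$, followed by a Chernoff bound on the conditionally independent sum --- gives $\Prob[v\ \text{overloaded}]\le\exp(-\Omega(K))=\exp(-\Omega(\log^{\Ctries+\Csamples}\Delta))$, comfortably below $\exp(-\Omega(\log^{\Csamples}\Delta))$.

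Now condition on $v$ having neither a vacuous sample nor an overload, so $|T_v|=k_2$ and, modulo a routine coupling with an idealized sampler (needed because \samplecolors\ stops its sequence at the first index reaching the target size), $T_v$ is a uniformly random $k_2$-subset of $\Psi^+(v)$ and $S_v$ a uniformly random $(C/2)$-subset of $T_v$. Each $c\in T_v$ lies in $\bigcup_{u\in\Nout_G(v)}S_u$ with probability at most $\sum_{u\in\Nout_G(v)}(C/2)/p_u\le 1/2$, using clause (i) of honesty and $\Prob[c\in S_u]\le (C/2)/|\Psi^+(u)|\le (C/2)/p_u$; as these events are negatively associated over $c\in T_v$, a Chernoff bound gives $\Prob[v\ \text{not rich}]\le\exp(-\Omega(k_2))=\exp(-\Omega(\log^{\Csamples}\Delta))$. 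Together with the previous paragraph this proves part ii). Conditioning further on $v$ being rich, at least $|T_v|/3$ colours of $T_v$ avoid $\bigcup_u S_u$, and since $S_v$ is a uniform $(C/2)$-subset of $T_v$, $\Prob[v\ \text{not lucky}\mid v\ \text{rich}]\le (2/3)^{C/2}\le\exp(-C/6)$; summing the four bad-event probabilities proves part i).

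For part iii) I would show that, conditioned on $v$ surviving into the next iteration's graph, $v$ is $(C',D')$-honest except with probability $\exp(-\Omega(\log^{\Csamples}\Delta))$. Clause (ii) is maintained by a counting argument: in this iteration each neighbour contributes only $O(K)$ fresh planted colours, so the per-colour count of significant neighbours of $v$ rises by the allotted $2K$ (giving $D'=D+2K$), possibly after a concentration step of the same flavour as the overload bound. Clause (i) is the sparsified analogue of the CLP invariant step: writing $G'$ for the graph of the next iteration, $\sum_{u\in\Nout_{G'}(v)}1/p_u=\sum_{u\in\Nout_G(v)}(1/p_u)\,\mathbf{1}[u\in G']$ has expectation at most $(1/C)\bigl(\exp(-C/6)+\exp(-\Omega(\log^{\Csamples}\Delta))\bigr)\le 1/(2C')$ by part i) applied to each out-neighbour and the definition $C'=\min\{\tfrac12\exp(C/6)C,\ \log^{\Csamples}p^\star\}$, and since every summand is at most $1/p^\star\le\log\Delta/\Delta$, a Bernstein bound makes the probability of exceeding $1/C'$ at most $\exp(-\Omega(\Delta/\polylog\Delta))\le\exp(-\Omega(\log^{\Csamples}\Delta))$; a union bound over the two clauses finishes part iii). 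The step I expect to be the main obstacle is the overload bound --- obtaining the $O(K^2)$ rather than $O(K^2\log\Delta)$ bound on the expected number of iteration-$i$ significant neighbours, which forces estimating the collision probability through the \emph{larger} of the two palettes and then neutralizing the dependence among the ``$u$ significant'' events (all sharing $\randcolor^{(i)}_v$) via the two-step conditioning; a secondary, pervasive nuisance is that $S_v$ and $T_v$ output by \samplecolors\ are not literally uniform subsets, so every use of ``treat $S_v,T_v$ as uniform'' must be justified by an explicit coupling.
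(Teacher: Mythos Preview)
Your decomposition into the four elementary events (overloaded, vacuous, not rich, not lucky) and the treatment of parts i) and ii) match the paper's proof closely. There are, however, two genuine gaps in your argument for part iii), and one place where you make your own life unnecessarily hard.

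\textbf{Gap 1: you cannot apply part i) to out-neighbours.} In part iii) you bound $\Expect\bigl[\sum_{u\in\Nout_G(v)}(1/p_u)\mathbf{1}[u\in G']\bigr]$ by applying part i) to each out-neighbour $u$. But part i) is stated only for $(C,D)$-honest vertices, and the out-neighbours of $v$ are \emph{not} assumed honest. The paper avoids this trap by observing that $u$ survives into $G'$ exactly when $u$ is rich, not lazy, and not lucky, and then using only the conditional bound $\Prob[\overline{\Eulucky}\mid \Eurich\cap\overline{\Eulazy}]\le (2/3)^{C/2}\le\exp(-C/6)$, which you already proved for $v$ and which holds for \emph{every} vertex regardless of honesty (it uses nothing but that $S_u$ is a uniform $(C/2)$-subset of $T_u$). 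This gives $\Prob[u\in G']\le\exp(-C/6)$ directly, with no extra $\exp(-\Omega(\log^\beta\Delta))$ term and no honesty hypothesis.

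\textbf{Gap 2: the concentration in part iii) needs a dependence argument.} Your Bernstein bound on $\sum_u(1/p_u)\mathbf{1}[u\in G']$ treats the indicators as independent, but they are not: whether $u_r$ is rich or lucky depends on $\{S_w:w\in\Nout_G(u_r)\}$, and these sets overlap among different $u_r$. The paper handles this by a reveal-order argument: sort $\Nout_G(v)=(u_1,\ldots,u_s)$ in reverse topological order (so $\Nout(u_r)\cap\{u_r,\ldots,u_s\}=\varnothing$), first reveal all $T_w$, then for $r=1,\ldots,s$ reveal $\{S_w:w=u_r\text{ or }w\in\Nout(u_r)\}$. At step $r$, $S_{u_r}$ is still a fresh uniform $(C/2)$-subset of $T_{u_r}$, so the conditional survival probability is $\le\exp(-C/6)$ regardless of the history, and Hoeffding's inequality applies to the resulting supermartingale-type sum. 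You need an analogous argument; ``Bernstein'' alone does not go through.

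\textbf{The overload bound: a simpler route.} You flag the overload estimate as the main obstacle and propose a two-step conditioning on $\randcolor^{(i)}_v$. The paper's approach is much cleaner and also does double duty for clause (ii) of $(C',D')$-honesty: fix a colour $c\in\Psi_0(v)$ and count the number of $u\in N_{G_0}(v)$ with $c$ appearing in $\randcolor^{(i)}_u$. This uses only the (mutually independent) randomness of the \emph{neighbours}, has expectation $\le K\Delta/(\Delta+1)<K$, and by Chernoff is $\le 2K$ except with probability $\exp(-\Omega(K))$; a union bound over $c\in\Psi_0(v)$ gives that every colour has at most $D'=D+2K$ $c$-significant neighbours, and then the total number of significant neighbours is $\le K\cdot D'\ll K^2\log\Delta$. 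This single per-colour estimate replaces both your overload argument and the separate ``per-colour count rises by $2K$'' step you sketch for clause (ii).
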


Note that \cref{lem:shrinkii}  only relies on the assumption that the vertex $v$ under consideration is $(C,D)$-honest, and the lemma works even many of neighboring of $v$ are not $(C,D)$-honest. This is in contrast to most of the analysis of graph shattering algorithms where the analysis relies on the assumption that all vertices at the beginning of each iteration to satisfy certain invariants. 
Based on \cref{lem:shrinkii}, we show that \sparsifiedcoloring\ colors a vertex with a sufficiently high probability that enables us to apply the shattering lemma.

\begin{lemma} \label{lem:spcolor}
    The algorithm \sparsifiedcoloring\ gives a partial  coloring  of $G_0$ such that the probability that a vertex $v$ does not successfully color itself with a color in $\Psi_0(v)$ is \[O(k) \cdot \exp\left(-\Omega(\log^{\Csamples} \Delta)\right) \ll 1 /\poly(\Delta),\] and this holds even if the random bits generated outside $\Ninc^2_{G_0}(v)$ are determined adversarially. 
\end{lemma}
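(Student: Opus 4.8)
\textbf{Proof proposal for \cref{lem:spcolor}.}
The plan is to chain together the three conclusions of \cref{lem:shrinkii} across the $k = O(\log^\ast \Delta)$ iterations of \sparsifiedcolorbidding. Define, for each $0 \le i \le k-1$, the event $\mathcal{E}_i$ that at the beginning of iteration $i$ the vertex $v$ is still uncolored and is $(C_i, D_i)$-honest, where $C_i$ is the sequence of $C$-parameters specified before the lemma and $D_i = 2Ki$. The base case $\mathcal{E}_0$ holds deterministically, since every vertex is $(C_0,0)$-honest in $G_0$ and $v$ is uncolored at the start. The key observation is that the parameters chosen in the algorithm are exactly the ones that make \cref{lem:shrinkii} iii) applicable at each step: the recurrence $C_{i+1} = 2\lceil \min\{\frac12 \exp(C_i/6) C_i,\ \log^{\Csamples} p^\star\}/2\rceil - 2$ is (up to rounding down to an even number) the quantity $C' = \min\{\frac12\exp(C/6)C,\ \log^{\Csamples}p^\star\}$ from the lemma, and $D_{i+1} = D_i + 2K$ matches $D' = D+2K$. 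One must also check that the hypotheses $C_i \le \log^\beta p^\star$ and $D_i \le 2Kk = O(K\log^\ast\Delta)$ of \cref{lem:shrinkii} hold for every $i < k$; the first is immediate from the definition of the sequence (it is capped at $\log^{\Csamples}p^\star$), and the second holds because $D_i = 2Ki \le 2Kk$.

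\medskip
\noindent
Now I would run the union bound over iterations. Conditioned on $\mathcal{E}_i$, \cref{lem:shrinkii} iii) says that with probability at least $1 - \exp(-\Omega(\log^{\Csamples}\Delta))$, either $v$ has been colored during iteration $i$ or $\mathcal{E}_{i+1}$ holds; and \cref{lem:shrinkii} i) says $v$ fails to color itself in iteration $i$ with probability at most $\exp(-C_i/6) + \exp(-\Omega(\log^{\Csamples}\Delta))$. Summing the failure probabilities of the "honesty is preserved or $v$ is colored" event over $i = 0, \ldots, k-1$ contributes $O(k)\cdot\exp(-\Omega(\log^{\Csamples}\Delta))$. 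It then remains to bound the probability that $v$ survives to iteration $k-1$ still honest yet remains uncolored after the final iteration: by \cref{lem:shrinkii} i) applied with $C = C_{k-1} = 2\lceil \log^{\Csamples}p^\star/2\rceil - 2 = \Theta(\log^{\Csamples}p^\star)$, this is at most $\exp(-C_{k-1}/6) + \exp(-\Omega(\log^{\Csamples}\Delta)) = \exp(-\Omega(\log^{\Csamples}\Delta))$, using $p^\star \ge \Delta/\log\Delta$ so that $\log p^\star = \Theta(\log\Delta)$. Adding all these pieces gives total failure probability $O(k)\cdot\exp(-\Omega(\log^{\Csamples}\Delta))$, and since $k = O(\log^\ast\Delta)$ is tiny compared to $\exp(\Omega(\log^{\Csamples}\Delta))$ whenever $\Csamples$ is a sufficiently large constant, this is $\ll 1/\poly(\Delta)$, as claimed.

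\medskip
\noindent
The locality claim — that the whole calculation depends only on $\{\randcolor^{(i)}_u\}_{u \in \Ninc^2_{G_0}(v)}$ — follows by composing the per-iteration locality guarantees of \cref{lem:shrinkii}, each of which is stated to depend only on random bits generated in $\Ninc^2_{G_0}(v)$ in that iteration. Since across iterations these bit-sets are over the same radius-$2$ ball (just different iterations' samples), the union of all bits consulted still lies in $\{\randcolor^{(i)}_u\}_{0 \le i \le k-1,\ u \in \Ninc^2_{G_0}(v)}$, and adversarial control of bits outside this ball cannot affect any of the invoked sub-statements; hence the conditional bound holds as stated. The main obstacle I anticipate is purely bookkeeping rather than conceptual: one must be careful that the event $\mathcal{E}_i$ is conditioned correctly so that \cref{lem:shrinkii} is applied to a vertex that genuinely satisfies its $(C,D)$-honesty hypothesis at the start of iteration $i$ — this is exactly where the unusual "one-sided" nature of \cref{lem:shrinkii} (it needs only $v$ itself to be honest, not its neighbors) is essential, because we cannot condition on the entire neighborhood being honest without blowing up the union bound to radius $\tau = O(\log^\ast\Delta)$, which is precisely the defect of the naive approach discussed in the text.
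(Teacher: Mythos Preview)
Your proof is correct and follows essentially the same approach as the paper: both track the honesty invariant via the sequence $D_i = 2Ki$, union-bound the per-iteration failure probabilities from \cref{lem:shrinkii} over the $k$ iterations, and close with part~i) at iteration $k-1$ where $C_{k-1} = \Theta(\log^{\Csamples} p^\star)$. One minor point: your claim that part~iii) alone yields ``$v$ is colored or $\mathcal{E}_{i+1}$ holds'' also implicitly needs part~ii), since $\mathcal{E}_{i+1}$ requires $v$ to still be in $U$ (i.e., not \badvertex), and iii) does not by itself rule out $v$ marking itself \badvertex; the paper's proof handles this by a separate case invoking~ii), which contributes the same $\exp(-\Omega(\log^{\Csamples}\Delta))$ term and leaves your final bound unchanged.
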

\begin{proof}
We consider the sequence $D_0 = 0$ and $D_{i+1} = D_i + 2K$.
Suppose the algorithm does not color a vertex $v$, then $v$ must falls into one of the following cases.
    \begin{itemize}
        \item There is an index $i \in [0, k-2]$ such that $v$ is $(C_i, D_i)$-honest at the beginning of iteration $i$, but $v$ is not $(C_{i+1}, D_{i+1})$-honest at the beginning of iteration $i+1$. By \cref{lem:shrinkii}~(iii), this occurs with probability at most $(k-1) \cdot \exp\left(-\Omega(\log^{\Csamples} \Delta)\right)$.
        \item There is an index $i \in [0, k-1]$ such that $v$ is $(C_i, D_i)$-honest at the beginning of iteration $i$, but $v$ marks itself \badvertex\ in iteration $i$.  By \cref{lem:shrinkii}~(ii), this occurs with probability at most $k \cdot \exp\left(-\Omega(\log^{\Csamples} \Delta)\right)$.
        \item For the last iteration $i = k-1$, the vertex $v$ is $(C_{k-1}. D_{k-1})$-honest at the beginning of iteration $k-1$, but $v$ does not successfully colors itself by a color in its palette. in iteration $k-1$.  By \cref{lem:shrinkii}~(iii), this occurs with probability at most $ \exp\left(-\Omega(\log^{\Csamples} \Delta)\right)$.
    \end{itemize}
 Note that our analysis only relies on the distribution of random bits generated in $\Ninc_{G_0}^2(v)$, which is guaranteed by  \cref{lem:shrinkii}. That is, even if the adversary is able to decide the random bits of vertices outside of $\Ninc_{G_0}^2(v)$ throughout the algorithm \sparsifiedcoloring, the probability that $v$ does not successfully color itself is still at most $O(k) \cdot \exp\left(-\Omega(\log^{\Csamples} \Delta)\right)$.
\end{proof}

\subsection{Analysis for the Sparsified Color Bidding Algorithm \label{sect-analysis-color-bidding}}
In this section, we prove \cref{lem:shrinkii}.
We focus on the $i$th iteration of the algorithm, where the vertex $v$ is  $(C,D)$-honest, and there is no guarantee about the $(C,D)$-honesty of all other vertices.
For this vertex $v$, we write $\Evoverloaded$, $\Evlazy$, $\Evrich$, and $\Evlucky$ to denote the event that $v$ is overloaded, lazy, rich, and lucky.
Note that a lucky vertex must be rich and not lazy, and an overloaded vertex must be lazy.
In this proof we frequently use this inequality
$\Delta + 1 \geq |\Psi_0(v)| \geq |\Psi^+(v)| \geq p_v \geq p^\star = \Omega(\Delta / \log \Delta)$.  Our analysis only considers the random bits generated by vertices within $\Ninc_{G_0}^2(v)$ in this iteration.
    
\begin{claim}\label{claim-11111}
    The probability that $v$  has more than $D'=D+2K$ $c$-significant neighbors $u \in N_{G_0}(v)$ for some color $c  \in \Psi_0(v)$ in this iteration $i$ is at most $\exp(-\Omega(\log^{\Ctries+\Csamples} \Delta))$, and this implies that $\Prob[\Evoverloaded] \leq \exp(-\Omega(\log^{\Ctries+\Csamples} \Delta))$.
\end{claim}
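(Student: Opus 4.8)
The plan is a union bound over the colours of $\Psi_0(v)$ together with a Chernoff estimate on how many \emph{new} $c$-significant neighbours are created during iteration $i$. Fix a colour $c\in\Psi_0(v)$. Since $v$ is $(C,D)$-honest at the start of iteration $i$, condition (ii) guarantees that at most $D$ neighbours $u\in N_{G_0}(v)$ are already $c$-significant because of the sequences $\randcolor_u^{(i')}$ drawn in iterations $i'<i$. Hence it suffices to bound the number $X_c$ of neighbours $u\in N_{G_0}(v)$ with $c\in\{\randcolor_u^{(i)}(1),\dots,\randcolor_u^{(i)}(K)\}$, where $K=\log^{\Ctries+\Csamples}\Delta$, and to show $\Prob[X_c\ge 2K]\le\exp(-\Omega(K))$; a union bound over the at most $\Delta+1$ colours of $\Psi_0(v)$ then proves the first assertion, the factor $\Delta+1$ being swallowed because $K=\omega(\log\Delta)$, so $(\Delta+1)\exp(-\Omega(K))=\exp(-\Omega(\log^{\Ctries+\Csamples}\Delta))$.

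For the Chernoff step, recall that in iteration $i$ the sequences $\{\randcolor_u^{(i)}\}_{u\in N_{G_0}(v)}$ are generated mutually independently, each entry being a uniformly random colour of $\Psi_0(u)$; thus $X_c=\sum_{u}Y_u$ is a sum of independent indicators with $\Prob[Y_u=1]=1-(1-\mathbf{1}[c\in\Psi_0(u)]/|\Psi_0(u)|)^{K}\le K/|\Psi_0(u)|$. Using the palette lower bound $|\Psi_0(u)|\ge p^\star=\Omega(\Delta/\log\Delta)$ — and, where needed, the honesty bound $\sum_{u\in\Nout_G(v)}1/p_u\le 1/C$ together with $|\Psi_0(u)|\ge p_u$ to control the contribution of neighbours with small palettes — one bounds $\Expect[X_c]=O(K)$ and applies a multiplicative Chernoff bound (or the binomial-tail estimate $\Prob[X_c\ge t]\le\binom{|N_{G_0}(v)|}{t}(K/p^\star)^{t}$) to obtain $\Prob[X_c\ge 2K]\le\exp(-\Omega(K))$. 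Only the fresh randomness $\{\randcolor_u^{(i)}\}_{u\in N_{G_0}(v)}\subseteq\{\randcolor_u^{(i)}\}_{u\in\Ninc_{G_0}^2(v)}$ of this iteration is consulted, so the estimate is unaffected by adversarial choices of random bits outside $\Ninc_{G_0}^2(v)$ (the honesty hypothesis itself is treated as part of the adversarially fixed history).

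Finally, for the implication to $\Prob[\Evoverloaded]$: on the complement of the bad event, every colour $c\in\Psi_0(v)$ has at most $D'=D+2K$ $c$-significant neighbours. A neighbour of $v$ is \emph{significant} in the sense of Step~4 of \sparsifiedcoloring\ only if it is $c$-significant for one of the at most $K$ colours $\randcolor_v^{(i)}(1),\dots,\randcolor_v^{(i)}(K)$, all of which lie in $\Psi_0(v)$; hence $v$ has at most $K\cdot D'=K(D+2K)$ significant neighbours. Since $D\le 2Kk=O(K\log^\ast\Delta)$, this is $O(K^2\log^\ast\Delta)$, which is below the overload threshold $K^2\log\Delta$ once $\Delta$ is large, so $v$ is not overloaded whenever the first event fails. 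I expect the delicate point to be the middle step: calibrating $\Expect[X_c]$ and the exact Chernoff bound so the tail at the threshold $2K$ is as small as $\exp(-\Omega(K))$, which forces one to treat carefully the neighbours of $v$ whose palettes are as small as $p^\star$ rather than to use the crude bound $|N_{G_0}(v)|\le\Delta$ in isolation.
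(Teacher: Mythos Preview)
Your proposal is correct and essentially identical to the paper's proof: a Chernoff bound on the independent indicators $X_r=\mathbf{1}[c\in\randcolor_{u_r}^{(i)}]$ for each fixed $c\in\Psi_0(v)$, followed by a union bound over the at most $\Delta+1$ colours, together with the count $K\cdot D'=K(D+2K)\ll K^2\log\Delta$ for the overload implication. The ``delicate point'' you flag is handled more crudely than you fear: the paper does not invoke the honesty sum $\sum_{u\in\Nout_G(v)}1/p_u$ (which, as you note, would only cover the current out-neighbours), but simply uses $s=|N_{G_0}(v)|\le\Delta$ and bounds each term by $K/(\Delta+1)$ to obtain $\Expect[Y]\le s\cdot K/(\Delta+1)<K$, after which Chernoff at threshold $2K$ gives $\exp(-\Omega(K))$ immediately.
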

\begin{proof}
    Since $v$ is $(C, D)$-honest, our plan is to show that for each color $c \in \Psi_0(v)$, the number of {\em new} $c$-significant neighbor $u \in N_{G_0}(v)$   brought by the color sequences in the $i$th iteration ${\randcolor}_u^{({i})}$, 
    is  at most $2K$ with probability $1 - \exp(-\Omega(\log^{\Ctries+\Csamples} \Delta))$.
    
    Write $N_{G_0}(v) = \{u_1, \ldots, u_s\}$, and let $X_r = \boldsymbol{1} \{ \text{color $c$ appears in ${\randcolor}_{u_r}^{{(i)}}$} \}$, $Y = \sum_{1 \leq r \leq s} X_j$.
    Then $Y$ is an upper bound on the number of new $c$-significant neighbors.
    Since $v$ is $(C, D)$-honest, the total number of $c$-significant neighbors is at most $D + Y$. To prove this claim, it suffices to bound the probability of  $Y > 2K$.
    Note that $X_1, \ldots, X_s$ are independent, and 
    \[\Expect[Y] = \sum_{1 \leq r\leq s}\Expect[X_r]  \leq \sum_{1 \leq r \leq s} \left(1 - \left(1 - \frac{1}{|\Psi_0(u_r)|}\right)^K\right) \leq s \cdot \frac{K}{\Delta+1} 
    \leq \frac{K \Delta}{\Delta+1} < K.\]
    By a Chernoff bound, we have $\Prob[Y \geq 2 K] \leq \exp(-\Omega(K)) = \exp(-\Omega(\log^{\Ctries+\Csamples} \Delta))$.
    By a union bound over all $c \in \Psi_0(v)$ we are done.

    Given that $v$ has no more than $D'$ $c$-significant neighbors in this iteration for every color $c \in \Psi_0(v)$, we infer that  $v$ has at most 
      \[|{\randcolor}_u^{({i})}| D' \leq K \cdot D' = K \cdot (D + 2K) \ll K^2 \log \Delta\] significant neighbors, which implies that $v$ is not overloaded. Hence we also have $\Prob[\Evoverloaded] \leq \exp(-\Omega(\log^{\Ctries+\Csamples} \Delta))$.
  \end{proof}
  
  \begin{claim}\label{claim-22222} $
    \Prob[\Evlazy] \leq \exp(-\Omega(\log^{2+\beta} \Delta)).
    $     
  \end{claim}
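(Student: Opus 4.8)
The plan is to split $\Evlazy$ into its two possible causes and bound each separately. In the procedure \sparsifiedcolorbidding\ a vertex $v$ ends up lazy (that is, $S_v=\varnothing$) exactly when either (a)~$v$ is overloaded, in which case step~1 resets $(S_v,T_v)$ to $(\varnothing,\varnothing)$, or (b)~the call to \samplecolors\ in step~1 already returns $(\varnothing,\varnothing)$. Case~(a) is handled by \cref{claim-11111}, which gives $\Prob[\Evoverloaded]\le\exp(-\Omega(\log^{\Ctries+\Csamples}\Delta))$; since $\Ctries=3$ and $\Csamples=\beta$, this is at most $\exp(-\Omega(\log^{2+\beta}\Delta))$, already matching the target. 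So it remains to bound the probability of case~(b).

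For case~(b) I would unwind \samplecolors: it scans the pre-sampled colors $\randcolor_v^{(i)}(1),\dots,\randcolor_v^{(i)}(K)$ with $K=\log^{\Ctries+\Csamples}p^\star$, retaining those outside $\Psi^-(v)$, and it returns $(\varnothing,\varnothing)$ precisely when strictly fewer than $k_2:=\log^{\Csamples}p^\star$ of these $K$ colors avoid $\Psi^-(v)$. Let $Z=\bigl|\{\,j\in[1,K]:\randcolor_v^{(i)}(j)\notin\Psi^-(v)\,\}\bigr|$. The entries $\randcolor_v^{(i)}(j)$ are independent and each is uniform over $\Psi_0(v)$, so conditioned on $\Psi^-(v)$ we have $Z\sim\binomial(K,q)$ with $q=|\Psi^+(v)|/|\Psi_0(v)|$. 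The invariant being maintained gives $|\Psi^+(v)|\ge p_v\ge p^\star=\Omega(\Delta/\log\Delta)$ and $|\Psi_0(v)|\le\Delta+1$, so $q=\Omega(1/\log\Delta)$; using $\log p^\star=(1-o(1))\log\Delta$ together with $\Ctries=3$ this yields $\Expect[Z]=Kq=\Omega(\log^{2+\beta}\Delta)$, which is $\omega(k_2)$ because $k_2=O(\log^{\beta}\Delta)$.

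The last step is a Chernoff bound. For $\Delta$ large enough $\Expect[Z]\ge 2k_2$, so $\{Z<k_2\}\subseteq\{Z\le\Expect[Z]/2\}$ and $\Prob[Z<k_2]\le\exp(-\Expect[Z]/8)=\exp(-\Omega(\log^{2+\beta}\Delta))$; a union bound over cases~(a) and~(b) then gives $\Prob[\Evlazy]\le\exp(-\Omega(\log^{2+\beta}\Delta))$, which is the claim. The one point that needs care is the conditioning demanded by \cref{lem:shrinkii}: the case~(b) estimate holds for every value of $\Psi^-(v)$ consistent with $|\Psi^+(v)|\ge p^\star$, and its only randomness is $\randcolor_v^{(i)}$, which is generated at $v\in\Ninc_{G_0}^2(v)$, so it survives fixing all random bits outside $\Ninc_{G_0}^2(v)$ adversarially; case~(a) was likewise argued in \cref{claim-11111} within the same adversarial model. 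That conditioning argument is the only delicate part; the probability estimate itself is a routine one-line Chernoff computation once the mean of $Z$ is identified.
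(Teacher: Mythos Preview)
Your overall approach matches the paper's: split $\Evlazy$ into the overloaded case (handled by \cref{claim-11111}) and the case where \samplecolors\ itself returns $(\varnothing,\varnothing)$, then bound the latter by a Chernoff computation on the number of successful draws. The one genuine gap is your characterization of when \samplecolors\ fails. You write that it returns $(\varnothing,\varnothing)$ ``precisely when strictly fewer than $k_2$ of these $K$ colors avoid $\Psi^-(v)$,'' and you define $Z$ as the number of \emph{indices} $j$ with $\randcolor_v^{(i)}(j)\notin\Psi^-(v)$. But $T$ is a \emph{set}: the update $T\leftarrow T\cup\{c\}$ does not increase $|T|$ when $c$ is a repeat. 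Hence \samplecolors\ fails exactly when fewer than $k_2$ \emph{distinct} colors of $\Psi^+(v)$ appear in the length-$K$ sequence, and this event strictly contains $\{Z<k_2\}$. Your bound on $\Prob[Z<k_2]$ therefore does not, as stated, upper-bound the failure probability.

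The fix is easy and is exactly what the paper does: instead of tracking membership in $\Psi^+(v)$ alone, track membership in $\Psi^+(v)\setminus T$. If $|T|=r<k_2$ so far, the next draw lands on a \emph{new} color of $\Psi^+(v)$ with probability at least
\[
\frac{|\Psi^+(v)|-r}{|\Psi_0(v)|}\;\ge\;\frac{|\Psi^+(v)|-\log^{\Csamples}p^\star}{|\Psi_0(v)|}\;=\;\Omega(1/\log\Delta),
\]
using $|\Psi^+(v)|\ge p^\star=\Omega(\Delta/\log\Delta)$ and $|\Psi_0(v)|\le\Delta+1$. Now the number of ``new-color'' successes among the $K=\log^{3+\beta}p^\star$ trials stochastically dominates a $\binomial(K,\Omega(1/\log\Delta))$ variable, whose mean is $\Omega(\log^{2+\beta}\Delta)\gg k_2$, and your Chernoff step goes through unchanged. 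Alternatively you could keep your $Z$ and separately bound the probability of any collision among the at most $K=O(\poly\log\Delta)$ draws landing in a set of size $\ge p^\star$; either route closes the gap.
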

   \begin{proof}

    Remember that $v$ is lazy if either (i) $v$ is overloaded, or (ii) \samplecolors gives $(\varnothing, \varnothing)$.
    In view of \cref{claim-11111}, we only need to show that with probability at most  $\exp(-\Omega(\log^{2+\beta} \Delta))$, \samplecolors gives $(\varnothing, \varnothing)$.
    
    Consider the sampling process in \samplecolors, and suppose that we are in the middle of the process, and  $T$ is the current set of colors we have obtained. Suppose $|T| = r$ currently, i.e., we have selected $r$ colors from $\Psi^+(v)$. The probability that the next color ${\randcolor}_{j}$ we consider is different from these $r$ colors in $T$ is at least
\[\frac{|\Psi^+(v)| - r}{|\Psi_0(v)|} \geq \frac{|\Psi^+(v)| - \log^{\Csamples} p^\star}{|\Psi_0(v)|} \geq \frac{\Omega(\Delta / \log \Delta)}{\Delta+1}= \Omega(1/\log \Delta).\]
Remember that $|\Psi^+(v)| \geq p_v = \Omega(\Delta / \log \Delta)$ and $\log^{\Csamples} p^\star = O(\log^{\Csamples}\Delta)$.
Also remember that \samplecolors gives $(\varnothing, \varnothing)$
if after we go over all $k_2 \log^3 p^\star = \log^{3+\beta} p^\star$ elements in the sequence ${\randcolor}$, the size of $T$ is still less than $k_2 = \log^{\beta} p^\star$. The probability that this event occurs is at most
\[
\Prob[\binomial(n',p') < t'] < \Prob[\binomial(n',p') < n'p'/2],
\]
where $n' = \log^{3+\beta} p^\star = \Theta(\log^{3+\beta}\Delta)$, $p' = \Omega(1/\log \Delta)$, and $t' =  \log^{\beta} p^\star = \Theta(\log^{\beta}\Delta) \ll n'p'$. By a Chernoff bound, this
event occurs with probability at most $\exp(-\Omega(n'p')) = \exp(-\Omega(\log^{2+\beta} \Delta))$.
   \end{proof}
    
\begin{claim}\label{claim-33333} 
    $\Prob[\overline{\Evrich}] \leq \exp(-\Omega(\log^{\Csamples} \Delta))$.
\end{claim}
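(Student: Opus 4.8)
The plan is to prove a per-color estimate — each color that $v$ places into $T_v$ ends up in $S_u$ for some out-neighbor $u\in\Nout_G(v)$ with probability at most $1/2$ — and then to upgrade this first-moment bound to the claimed exponential tail via negative association and a Chernoff bound, using that $|T_v| = \log^{\Csamples}p^\star = \Theta(\log^{\Csamples}\Delta)$ (since $p^\star \ge \Delta/\log\Delta$).

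First I would dispose of the degenerate cases. If $v$ is lazy — in particular if $v$ is overloaded — then $(S_v,T_v)=(\varnothing,\varnothing)$ and the richness test $\bigl|T_v \setminus \bigcup_{u\in\Nout_G(v)}S_u\bigr| \ge |T_v|/3$ holds vacuously; hence $\overline{\Evrich}$ forces $v$ to be non-lazy, and in that case $T_v$ is a uniformly random size-$(\log^{\Csamples}p^\star)$ subset of $\Psi^+(v)$. I then condition on all random bits of the iterations preceding $i$ and on the full color sequence $\randcolor_v^{(i)}$; this freezes $\Psi^-(\cdot)$, the current graph $G$ and the set $\Nout_G(v)$, the $(C,D)$-honesty of $v$, and the realized set $T_v$. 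To decouple $v$ from its neighbors' samples — whether a neighbor $u$ is overloaded may itself depend on $\randcolor_v^{(i)}$ — I replace each $S_u$ by the set $\hat S_u$ that \samplecolors\ outputs for $u$ \emph{before} the overload reset, so that $S_u \subseteq \hat S_u$ always and, after the conditioning, $\hat S_u$ is a function of the fresh sequence $\randcolor_u^{(i)}$ only; in particular the sets $\{\hat S_u\}_{u\in\Nout_G(v)}$ are mutually independent. It now suffices to bound $\Prob\bigl[\,|T_v \cap \hat B| > \tfrac23|T_v|\,\bigr]$ with $\hat B := \bigcup_{u\in\Nout_G(v)}\hat S_u$.

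Fix a color $c\in T_v$. For each $u$, conditioning additionally on whether $\hat S_u$ is empty: if $\hat S_u=\varnothing$ then $\Prob[c\in\hat S_u]=0$, and otherwise $\hat S_u$ is a uniformly random size-$(C/2)$ subset of $\Psi^+(u)$, so $\Prob[c\in\hat S_u]\le \tfrac{C/2}{|\Psi^+(u)|}\le \tfrac{C/2}{p_u}$, using $|\Psi^+(u)|\ge p_u$. Summing over $u$ and invoking condition (i) of $(C,D)$-honesty, $\sum_{u\in\Nout_G(v)}1/p_u\le 1/C$, gives $\Prob[c\in\hat B]\le \tfrac{C}{2}\sum_u\tfrac1{p_u}\le \tfrac12$. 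Moreover, given the pattern of empty/nonempty $\hat S_u$ (which is independent across $u$), each indicator vector $(\mathbf 1[c\in\hat S_u])_c$ is either identically zero or that of a uniform fixed-size subset — in both cases negatively associated — and these vectors are independent across $u$; since $\mathbf 1[c\in\hat B]=\max_u\mathbf 1[c\in\hat S_u]$ is nondecreasing in the block indexed by $c$, the family $(\mathbf 1[c\in\hat B])_{c\in T_v}$ is negatively associated. A Chernoff bound for negatively associated Bernoulli variables with means at most $1/2$ then yields $\Prob[\,|T_v\cap\hat B|\ge \tfrac23|T_v|\,]\le \exp(-\Omega(|T_v|))$ for every fixing of the emptiness pattern, hence unconditionally. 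Since $|T_v|=\log^{\Csamples}p^\star=\Theta(\log^{\Csamples}\Delta)$ and $S_u\subseteq\hat S_u$ implies $\bigl|T_v\setminus\bigcup_u S_u\bigr|\ge |T_v|-|T_v\cap\hat B|$, this bounds $\Prob[\overline{\Evrich}]$ by $\exp(-\Omega(\log^{\Csamples}\Delta))$, uniformly over the conditioned history; and as the only random bits consulted were $\{\randcolor_u^{(i)}\}_{u\in N_{G_0}(v)\cup\{v\}}\subseteq \Ninc_{G_0}^2(v)$, the bound withstands adversarial choices of all other random bits.

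I expect the main obstacle to be the dependency bookkeeping rather than any individual inequality. The sampled set $T_v$ and the blocking sets $S_u$ are not literally independent (the overload test for $u$ reads $v$'s own randomness, and the membership events within one $S_u$ are anti-correlated), so one must (a) choose the conditioning so that $v$'s sample is decoupled from its neighbors', (b) inflate $S_u$ to the overload-free $\hat S_u$ so that no dependence on $\randcolor_v^{(i)}$ remains, and (c) invoke negative association — together with the separate conditioning on the emptiness pattern — in place of full independence for the Chernoff step. After that, the argument is just a union bound plus the honesty hypothesis.
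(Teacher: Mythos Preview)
Your proof is correct, and it reaches the same bound as the paper, but the route is genuinely different. The paper fixes $T_v$ and then defines $Y=\sum_{r}|T_v\cap S_{u_r}|$, a sum over \emph{neighbor-and-sample-index} pairs: it argues that each of the $C/2$ draws for a neighbor $u$ lands in $T_v$ with probability at most $\tfrac{1.1|T_v|}{p_u}$, replaces the without-replacement draws by i.i.d.\ Bernoullis via stochastic domination, and applies a plain Chernoff bound to the resulting $s\cdot(C/2)$ independent indicators. You instead sum over \emph{colors in $T_v$}, bounding $\Prob[c\in\hat B]\le\tfrac12$ per color via the honesty hypothesis and then invoking negative association of $(\mathbf 1[c\in\hat B])_{c\in T_v}$ to run Chernoff on $|T_v|$ correlated indicators. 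Your approach is more careful about the dependency bookkeeping---you explicitly decouple $v$ from the overload tests of its neighbors by passing to $\hat S_u$, and you justify the concentration step via NA rather than an implicit domination argument---whereas the paper's version is shorter and avoids the NA machinery at the cost of being less explicit about why the Bernoullis can be treated as independent. Both arguments ultimately rest on the same first-moment computation $\sum_u (C/2)/p_u\le 1/2$ and the same scale $|T_v|=\Theta(\log^{\Csamples}\Delta)$.
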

    \begin{proof}
    Recall that $v$ is rich if $\left|T_v \setminus \bigcup_{u \in \Nout_G(v)}S_u \right| \geq |T_v|/3$. If $v$ is lazy, then $T_v = \varnothing$, so $v$ is automatically rich. Thus, in subsequent discussion we assume $v$ is not lazy.
 We write $\Nout_G(v) = \{u_1, \ldots, u_s\}$ and let $X_r = |T_v \cap S_{u_r}|$. 
    To prove the lemma, it suffices to  show that for $Y = \sum_{r = 1}^s X_r$, we have $\Prob[Y \geq \frac23 |T_v|] \leq \exp(-\Omega(\log^{\Csamples} p^{\star})))$.

    We consider the random variable $X_r = |T_v \cap S_{u_r}|$. For notational simplicity, we write $u = u_r$. If $u$ is lazy, then $X_r = 0$.
  Suppose $u$ is not lazy.
  The set $S_u \subseteq \Psi^+(u)$ is the result of randomly choosing distinct $C/2$ colors $c_1, \ldots, c_{C/2}$ from $\Psi^+(u)$, one by one.
For each $j \in [1, C/2]$,
define $Z_{u, j}$ as the indicator random variable that $c_j \in T_v$. Then $X_r = \sum_{j=1}^{C/2} Z_{u,j}$.
    We have the following observation. In the process, when we pick the $j$th color, regardless of the already chosen colors $c_1, \ldots, c_{j-1}$, the probability that the color picked is in $T_j$ is at most $\frac{|T_v|}{|\Psi^+(u)|-(j-1)} \leq \frac{|T_v|}{|\Psi^+(u)|-(C/2)}$. 
    Thus, we have 
    \[ \Expect[Z_{u,j}] \leq \frac{|T_v|}{|\Psi^+(u)|-(C/2)}
    \leq \frac{|T_v|}{p_u-(C/2)}
    \leq \frac{1.1|T_v|}{p_u},
    \]
    since $C/2 \leq (\log^{\beta} p^\star)/2 = O(\poly \log \Delta)$ and $p_u = \Omega(\Delta /\log \Delta)$.

    Therefore, in order to bound $Y = X_1 + \cdots + X_s$ from above, we can assume w.l.o.g.~each $X_r$ is the sum of $C/2$ i.i.d.~random variables, and each of them is a bernoulli random variable with $p = \frac{1.1|T_v|}{p_u}$, and so $Y$ is the summation of $s \cdot (C/2)$ independent 0-1 random variables.
    Since $v$ is $(C, D)$-honest, we have $\sum_{u \in  \Nout_G(v)} 1/ p_u \leq 1/C$. The expected value of $Y$ can be upper bounded as follows.
    \begin{align*}
        \Expect[Y] 
        &\leq  \frac{C}{2} \sum_{v \in \Nout_G(u)} \frac{1.1 |T_v|}{p_v} \leq \frac{1.1}{2} |T_v|.
    \end{align*}
    By a Chernoff bound, we obtain
    $$
        \Prob\left[\overline{\Evrich}\right] \leq \Prob\left[Y \geq \left(\frac{1}{1.1} \cdot \frac{4}{3}\right) \left( \frac{1.1}{2} |T_v| \right)\right] \leq \exp\left(-\Omega\left(  \frac{1.1}{2} |T_v|\right)\right) \leq \exp\left(-\Omega\left(\log^{\Csamples} p^{\star}\right)\right).
    $$
    \end{proof}
    
    Using the above three claims, we now prove the three conditions specified in \cref{lem:shrinkii}.

    \paragraph{Proof of {i)}.}
    Conditioning on $\Evrich \cap \overline{\Evlazy}$, $v$ is lucky with some color unless it fails to select \emph{any} of $|T_v|/3$ specific colors from $T_v \subseteq \Psi^+(v)$.  Remember that $\Evrich$ implies that  $\left|T_v \setminus \bigcup_{u \in \Nout_G(v)}S_u \right| \geq |T_v|/3$, and if any one of them is in $S_v$, then $v$ successfully colors itself. Also remember that $S_v$ is a size-$(C/2)$ subset of $T_v$ chosen uniformly at random.
    Thus,
    $$
        \Prob[\overline{\Evlucky} \;|\; \Evrich \cap \overline{\Evlazy}]
        \leq \frac{\binom{\frac{2}{3}|T_v|}{C/2}}{\binom{|T_v|}{C/2}} \leq \left( \frac{2}{3} \right)^{C/2} \leq \exp(-C/6).
    $$

By \cref{claim-22222} and \cref{claim-33333}, we have:
    \begin{align*}
        \Prob[\overline{\Evlucky}] &\leq \Prob[\overline{\Evlucky} \;|\; \Evrich \cap \overline{\Evlazy}] + \Prob[\Evlazy] + \Prob[\overline{\Evrich}] \\
        &\leq \exp(-C/6) +  \exp(-\Omega(\log^{\Csamples} p^\star)).
    \end{align*}
    
    \paragraph{Proof of {ii)}.}
This also follows from By \cref{claim-22222} and \cref{claim-33333}.
    \begin{align*}
        \Prob[v \mathrm{\ marks\ itself\ \badvertex}] &= \Prob[\overline{\Evrich} \cup \Evlazy] \\
        &\leq \Prob[\overline{\Evrich}] + \Prob[\Evlazy] \leq \exp(-\Omega(\log^{\Csamples} p^\star)).
    \end{align*}
    
    \paragraph{Proof of {iii)}.}
    Define  $Y$ as the summation of $1/p_u$ over all vertices $u \in \Nout_{G}(v)$ such that $u \notin U$ in the next iteration.
    We  prove that  the probabilities of (a) $Y \leq 1/C'$ and (b) $v$ has more than $D'$ $c$-significant neighbors $u \in N_{G_0}(v)$ in this iteration are both at most $\exp(-\Omega(\log^{\Csamples} \Delta))$.

    For (b), it follows from \cref{claim-11111}.
    For the rest of the proof, we deal with (a).
    Write $\Nout_{G}(v) = \{u_1, \ldots, u_s\}$.
    Consider the event $E_r^\ast = E_{u_r}^{\mathrm{lucky}} \cup \overline{E_{u_r}^{\mathrm{rich}}} \cup \Eilazy$  that $u_r$ does not join $U$ in the next iteration, i.e., $u_r$ successfully colors itself  or marks itself \badvertex.

    For each $r \in [1, s]$, define the random variable  $Z_r$ as follows. Let $Z_r = 0$ if the event $ E_{u_r}^{\mathrm{lucky}} \cup \overline{E_{u_r}^{\mathrm{rich}}} \cup \Eilazy$ occurs, and $Z_r = 1/{p_{u_r}}$ otherwise.
    Clearly we have $Y = \sum_{r=1}^s Z_r$.
   Note that 
   $\Expect[Y] \leq \exp(-C/6) \cdot (1/C)$, because \[\Prob[\overline{E_r^\ast}] = \Prob[\overline{\Eilucky} \cap \Eirich \cap \overline{\Eilazy}] \leq \Prob[\overline{\Eilucky} \; | \; \Eirich \cap \overline{\Eilazy}] \leq \exp(-C/6),\] as calculated above in the proof of Condition~(i). Since $v$ is $(C,D)$-honest, we have   $\sum_{u \in  \Nout_G(v)} 1/ p_u \leq 1/C$. Combining these two inequalities, we obtain that $\Expect[Y] \leq \exp(-C/6) \cdot (1/C)$.
   Recall that $C' = \min\left\{ \frac12 \exp(C/6)  C, \ \log^{\Csamples} p^\star\right\}$, and so $\Expect[Y] \leq 1/(2C')$.

   Next, we prove the desired concentration bound on $Y$.
    Each variable $Z_r$ is within the range $[a_r, b_r]$, where $a_r = 0$ and $b_r = 1/{p_{u_r}}$.
    We have 
    \[\sum_{r=1}^s (b_r - a_r)^2 \leq  \sum_{u \in  \Nout_G(v)} 1/ p_u^2
    \leq
     \sum_{u \in  \Nout_G(v)} 1/ (p_u \cdot p^\star)
     \leq 1/ (C p^\star).\]
    Recall $\Expect[Y] \leq 1/(2C')$. By Hoeffding's inequality, we obtain
    
    \begin{align*}
        \Prob[Y \geq 1/C']
        &\leq \exp\left(\frac{-2/(2C')^2}{\sum_{r=1}^s (b_r - a_r)^2}\right).
    \end{align*}    
    
By assumptions specified in the lemma, $(1/C')^2 = \Omega(1 / \log^{2\beta} p^\star)$ and 
$1 / \sum_{r=1}^s (b_r - a_r)^2 = \Omega(C p^\star) = \Omega( p^\star / \log^{\beta} p^\star)$. Thus, 
\[ \Prob[Y \geq 1/C'] \leq \exp(-\Omega(p^\star \log^{-3\beta} p^\star)) \leq \exp(-\Omega(\sqrt{p^\star})) \ll \exp(-\Omega(\log^{\beta} p^\star)).\]
    
There is a subtle issue regarding the applicability of Hoeffding's inequality. The variables $\{X_1, \ldots, X_k\}$ are not independent, but we argue that we are still able to apply Hoeffding's inequality. Assume that $\Nout(v) = (u_1, \ldots, u_s)$ is sorted in reverse topological order, and so for each $1 \leq a \leq s$, we have $\Nout(u_a)\cap \{u_{a}, \ldots, u_s\} = \varnothing$.
We reveal the random bits in the following manner. First of all, we reveal the set $T_u$ for all vertices $u$. Now the event regarding  whether a vertex is rich or is lazy has been determined.
Then, for $r = 1$ to $s$, we reveal the set $\{ S_u \ | \ u = u_r \text{ or } u \in \Nout(u_r)\}$. This information is enough for us to decide the outcome of $Z_r$. Note that in this process, conditioning on {\em arbitrary} outcome of $Z_1, \ldots, Z_{r-1}$ and all random bits revealed prior to revealing the set $S_{u_r}$, The probability that $\overline{E_r^\ast}$ occurs is still at most $\exp(-C/6)$.

\subsection{Implementation for the Sparsified Color Bidding Algorithm \label{sect-implement-color-bidding}}
In this section, we present an implementation of  \sparsifiedcoloring in the $\LOCAL$ model such that  after an $O(1)$-round pre-processing step, each vertex $v$ is able to identify a $O(\poly  \log \Delta)$-size subset 
$\NSp(v) \subseteq N_{G_0}(v)$ of neighboring vertices such that $v$ only needs to receive information from these vertices during \sparsifiedcoloring.



\paragraph{Fixing All Random Bits.}
Instead of having each vertex $v$ generate the color sequence ${\randcolor}_v^{(i)}$ at iteration $i$, we determined all of $\{{\randcolor}_v^{(i)}\}_{\substack{ 0 \leq i \leq k-1}}$ in the pre-processing step. 
After fixing these sequences, we can regard \sparsifiedcoloring\ as a \emph{deterministic} $\LOCAL$ algorithm, where  $\{{\randcolor}_v^{(i)}\}_{\substack{ v \in V_0,\ 0 \leq i \leq k-1}}$ can be seen as the input for the algorithm.
To gather this information, we need to use messages of $k \cdot K \cdot O(\log n) = O(\poly \log \Delta) \cdot O(\log n)$ bits, where $k =O(\log^\ast \Delta)$ is the number of iterations, and $K$ is the length of the color sequence ${\randcolor}_v^{(i)}$ for an iteration.

\paragraph{Determining the Set $\NSp(v)$.}
We show how to let each vertex $v$ determine a $O(\poly \log \Delta)$-size set $\NSp(v) \subseteq N_{G_0}(v)$ based on the following information
\[\{{\randcolor}_u^{(i)}\}_{\substack{ u \in \Ninc_{G_0}(v),\ 0 \leq i \leq k-1}}.\]
such that $v$ only needs to receive messages from $\NSp(v)$ during the execution of \sparsifiedcoloring.
 We make the following two observations.
\begin{enumerate}
    \item In order for $v$ to execute \sparsifiedcolorbidding at iteration $i$ correctly, $v$ does not need to receive information from $u \in N_{G_0}(v)$ if all colors in $\{{\randcolor}_u^{(i')}\}_{\substack{ u \in \Ninc_{G_0}(v),\ 0 \leq i' \leq i}}$ do not overlap with the colors in ${\randcolor}_u^{(i)}$. In other words, $v$ only needs information from its significant neighbors.
    \item If $v$ is overloaded at iteration $i$, then $v$ knows that it is lazy in this iteration, and so the outcome of \sparsifiedcolorbidding at iteration $i$ is that $v$  sets $S_v = T_v = \varnothing$, and $v$ marks itself \badvertex.
\end{enumerate}
The above two observations follow straightforwardly from the description of \sparsifiedcoloring.
Therefore, we can define the set  $\NSp(v)$ as follows. Add $u \in N_{G_0}(v)$ to  $\NSp(v)$ if there exists an index $i \in [0, k-1]$ such that (i) $u$ is a significant neighbor of $v$ at iteration $i$, and (ii) $v$ is not overloaded at iteration $i$.

By the definition of overloaded vertices, we know that if $v$ is not overloaded at iteration $i$, then $v$ has at most $K^2 \log \Delta = O(\poly \log \Delta)$ significant neighbors for iteration $i$. 
Thus, $|\NSp(v)| = O(\poly \log \Delta)$.
Note that the set $\NSp(v)$ can be locally calculated at $v$ during the pre-processing step. 

\paragraph{Summary.}
Algorithm \sparsifiedcoloring\ can be implemented in $\LOCAL$ in the following way.
\begin{description}
\item[Pre-Processing Step.] This step is randomized, and it takes one round and uses messages of $O(\poly \log \Delta) \cdot O(\log n)$ bits. After this step, each vertex has calculated a set  $\NSp(v)$ with $|\NSp(v)| = O(\poly \log \Delta)$.
\item[Main Steps.] This is a deterministic $O(\log^\ast \Delta)$-round procedure. During the procedure, each vertex $v$ only receives messages from  $\NSp(v)$. The  output of each vertex is a color (or a special symbol $\bot$ indicating that $v$ is uncolored), which can be represented by $\OutSize = O(\log n)$ buts. The input of each vertex consists of its color sequences for all iterations, which can be represented in $\InSize = O(\poly \log \Delta) \cdot O(\log n)$ bits.
\end{description}

 Using the above implementation of  \sparsifiedcoloring, we show that there is an $\LCA$ that solves $(\Delta+1)$-list coloring with $\Delta^{O(1)} \cdot O(\log n)$ queries.

\begin{proof}[Proof of \cref{thm:lca-main}]
Consider the following  algorithm for solving  $(\Delta+1)$-list coloring.
\begin{enumerate}
    \item Run the $O(1)$-round algorithm of \cref{lem:CLP-summary}. After that, each vertex  $v$ has four possible status: (i) $v$ has been colored, (ii) $v$ is in $V_{\Good}$,  (iii) $v$ is in $V_{\bbbb}$, or $v$ is in $R$. This can be done with $\Delta^{O(1)}$ queries.
    \item The set $R$ induces a subgraph with constant maximum degree. The $\LCA$ for $(\deg+1)$-list coloring in~\cite{FraigniaudHK16} implies that each $v \in R$ only needs $O(\log^\ast n)$ queries of vertices in $R$ to compute its color.
    \item By \cref{lem:CLP-summary}, each connected component in $V_{\bbbb}$ has size $\Delta^{O(1)} \cdot O(\log n)$. We let each vertex $v \in V_{\bbbb}$ learns the component $S$ it belongs to, and apply a deterministic algorithm to color $S$.
    \item All vertices in $V_{\Good}$ run the algorithm \sparsifiedcoloring. This adds an $(\Delta \cdot \DeltaSp^k)$-factor in the query complexity,  where  $\DeltaSp = \max_{v \in V_0} |\NSp(v)| = O(\poly \log \Delta)$, and $k = O(\log^\ast \Delta)$ is the number of iterations of \sparsifiedcoloring.  Note that in \sparsifiedcoloring, when we query a vertex $v$, the set $\NSp(v)$ can be calculated from the random bits in $N_{G_0}(v) \cup \{v\}$. 
    \item By  \cref{lem:spcolor} and \cref{lem:shatter},  the vertices left  uncolored after \sparsifiedcoloring  induces connected components of  size $\Delta^{O(1)} \cdot O(\log n)$. Similarly, we let each uncolored vertex $v$ learns the component $S$ it belongs to, and apply a deterministic algorithm to color $S$.
\end{enumerate}
By the standard procedure for converting an $\LOCAL$ algorithm to an $\LCA$, it is straightforward to implement the above algorithm as an $\LCA$ with query complexity 
\[\Delta^{O(1)} \cdot \DeltaSp^k \cdot  \left( \Delta^{O(1)} \cdot O(\log n) \right)= \Delta^{O(1)} \cdot O(\log n). \qed\]
\end{proof}

Next, we show that by applying \sparsifiedcoloring\ with the speedup lemma (\cref{lem:speedup}), we can solve $(\Delta+1)$-list coloring in $\CLIQUE$ in $O(1)$ rounds when  $\Delta = O(\poly \log n)$.

\begin{theorem} \label{thm:smalldegree}
    Suppose $\Delta = O(\poly \log n)$.
    There is an algorithm that solves $(\Delta+1)$-list coloring in $\CLIQUE$ in $O(1)$ rounds.
\end{theorem}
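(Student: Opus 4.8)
The plan is to assemble three $O(1)$-round $\CLIQUE$ tasks. We may assume $\Delta$ exceeds a sufficiently large constant, since otherwise $G$ has $O(n)$ edges and can be gathered to a single vertex and coloured locally in $O(1)$ rounds. First I would run the $O(1)$-round algorithm of \cref{lem:CLP-summary}: since $\Delta=O(\poly\log n)=O(\sqrt n)$, its $O(\Delta^2\log n)$-bit messages fit the $\CLIQUE$ budget, so it runs directly in $O(1)$ rounds, colours a subset of $V$, and leaves the uncolored vertices partitioned into $V_{\Good}$, $V_{\bbbb}$, and $R$ with the stated guarantees (in particular $V_{\bbbb}$ has components of size $\Delta^{O(1)}\cdot O(\log n)=\poly\log n$ and induces $O(n)$ edges, w.h.p.). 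On $G_0=G[V_{\Good}]$ I would then run \sparsifiedcoloring\ in the implementation of \cref{sect-implement-color-bidding}: an $O(1)$-round pre-processing step in which every vertex draws all its color sequences $\{\randcolor_v^{(i)}\}_{i<k}$, learns the colors sampled by $N_{G_0}(v)$, and locally computes $\NSp(v)$ with $|\NSp(v)|\le k\cdot K^2\log\Delta=O(\poly\log\Delta)$. This pre-processing moves only $\Delta\cdot O(\poly\log\Delta)=O(\poly\log n)=O(n)$ many $O(\log n)$-bit packets per vertex, so Lenzen's routing (\cref{lem:routing}) implements it in $O(1)$ rounds of $\CLIQUE$.

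The crux is the deterministic $\tau=O(\log^\ast\Delta)$-round main phase of \sparsifiedcoloring, in which each vertex listens only to $\NSp(v)$. I would simulate it with the opportunistic speed-up lemma (\cref{lem:speedup}), using all $n$ clique vertices as collectors so that its hypotheses are read with the original parameter $n$. Here $\DeltaSp=O(\poly\log\Delta)$, $\OutSize=O(\log n)$ (a color), and $\InSize=O(\poly\log\Delta)\cdot O(\log n)$ (the color sequences). Because $\Delta=O(\poly\log n)$ we have $\DeltaSp=(\log\log n)^{O(1)}$ and $\tau=O(\log^\ast\Delta)=O(\log^\ast n)$, so
\[
\DeltaSp^{\tau}\,\log\!\big(\DeltaSp+\InSize/\log n\big)=\exp\!\big(O(\log^\ast n\cdot\log\log\log n)\big)\cdot O(\log\log\log n)=(\log n)^{o(1)}=O(\log n),
\]
and $\InSize=o(n)$; thus hypotheses (i), (ii), and (iii) of \cref{lem:speedup} all hold and the phase runs in $O(1)$ $\CLIQUE$ rounds, w.h.p. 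I expect this parameter check --- pushing a non-constant number $\tau$ of rounds through the speed-up lemma --- to be the only genuinely non-routine step, and it works precisely because $\Delta=O(\poly\log n)$ forces the per-vertex locality volume $\DeltaSp$ down to $(\log\log n)^{O(1)}$.

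It remains to handle the vertices still uncolored. After the main phase, \cref{lem:spcolor} says a vertex of $V_{\Good}$ stays uncolored with probability $O(k)\exp(-\Omega(\log^{\Csamples}\Delta))\le\Delta^{-6}$, a bound robust to adversarial randomness outside $N^2_{G_0}(v)$, so \cref{lem:shatter} with $c=2$ gives that this remnant $V_{\bbbb}'$ has components of size $O(\Delta^4\log_\Delta n)=\poly\log n$ and induces $O(n)$ edges, w.h.p. The final uncolored set is $W=V_{\bbbb}\cup R\cup V_{\bbbb}'$. Since the gap between a vertex's number of available colors and its number of uncolored neighbours never decreases and is at least $1$ initially, every vertex of $W$ still has at least $\deg_W(v)+1$ available colors, so we may truncate each palette to size $\deg_W(v)+1$. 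Using that $G[R]$ has constant maximum degree and that a standard concentration argument (as in the proof of \cref{lem:shatter}) gives $|V_{\bbbb}|,|V_{\bbbb}'|=O(n/\poly(\Delta))$ --- so that $\sum_{v\in V_{\bbbb}}\deg_G(v)$ and $\sum_{v\in V_{\bbbb}'}\deg_G(v)$ are $O(n)$ --- the resulting $(\deg+1)$-list-coloring instance $G[W]$ induces $O(n)$ edges and has total description $O(n\log n)$ bits; by Lenzen's routing it is gathered at one distinguished vertex, which colours it locally in $O(1)$ rounds. A union bound over the $O(1)$ many failure events (the two applications of \cref{lem:shatter} and the speed-up simulation, each failing with probability $1/\poly(n)$) yields success probability $1-1/\poly(n)$, and the total round count is $O(1)$.
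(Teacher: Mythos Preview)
Your proposal is correct and follows essentially the same approach as the paper: run the black box of \cref{lem:CLP-summary} (its $O(\Delta^2\log n)$-bit messages fit since $\Delta=O(\poly\log n)$), execute the pre-processing step of \sparsifiedcoloring, push the $O(\log^\ast\Delta)$-round main phase through the opportunistic speed-up lemma (\cref{lem:speedup}) using exactly the parameter check you wrote, and clean up the residual $O(n)$-edge instance via Lenzen's routing. The only cosmetic difference is that the paper colors $V_{\bbbb}$ and $R$ \emph{before} running \sparsifiedcoloring\ on $V_{\Good}$ and then separately cleans up the post-\sparsifiedcoloring\ remnant, whereas you batch $V_{\bbbb}\cup R\cup V_{\bbbb}'$ into a single final instance; your batching forces you to argue the extra cross-edge and palette-truncation bounds, which you do correctly, while the paper's staging sidesteps that bookkeeping.
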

\begin{proof}
Recall from \cref{lem:routing} that one round in $\LOCAL$ with messages of at most $O(n \log n)$ bits can be simulated in $O(1)$ rounds in $\CLIQUE$.

The first step of the algorithm is to run the black box algorithm for \cref{lem:CLP-summary}, which takes $O(1)$ rounds in $\LOCAL$ with messages of size $O(\Delta^2 \log n) \ll O(n \log n)$. The set $R$ trivially induces  a subgraph with $O(n)$ edges. By \cref{lem:shatter}, $V_{\bbbb}$ induces a subgraph with $O(n)$ edges. We use  \cref{lem:routing} to color them in $O(1)$ rounds. 

Now we focus on $V_{\Good}$. We execute the algorithm \sparsifiedcoloring\ using the above implementation. The pre-processing step takes $O(1)$ rounds in $\LOCAL$ with messages of size $O(\poly \log \Delta) \cdot O(\log n) \ll O(n \log n)$.
For the main steps, we apply the speedup lemma (\cref{lem:speedup}) with $\tau = O(\log^\ast \Delta)$, $\OutSize = O(\log n)$, $\InSize = O(\poly \log \Delta) \cdot O(\log n)$, and $\DeltaSp = O(\poly \log \Delta)$. Since we assume  $\Delta = O(\poly \log n)$, this satisfies the criterion for \cref{lem:speedup}, and so this procedure can be executed on $\CLIQUE$ in $O(1)$ rounds. 

The algorithm \sparsifiedcoloring does not color all vertices in $V_{\Good}$. However, by \cref{lem:spcolor} and \cref{lem:shatter}, we know that these uncolored vertices induces a subgraph with $O(n)$ edges. We use  \cref{lem:routing} to color them in $O(1)$ rounds.
\end{proof}

\ignore{
\paragraph{Remark.} Some parts of  \sparsifiedcoloring can be simplified if we only want to obtain an $O(1)$-round $\CLIQUE$ algorithm and do not consider the application in $\LCA$. 
In \cref{lem:shrinkii}, the probability calculation for the $i$th iteration only depends on the randomness in the radius-2 neighborhood in this iteration, and so we are able to say in \cref{lem:spcolor} that the failure probability upper bound holds even if the random bits generated outside $\Ninc^2_{G_0}(v)$ are determined adversarially, and so we are able to apply the shattering lemma.
However, if we only need to 
 bound the number of edges within the bad vertices by $O(n)$, the condition that ``the random bits not in $\Ninc^{c}(v)$ are allowed to be determined adversarially'' can be relaxed to non-constant $c$, as long as $\Delta^c$ is sufficiently small.
If we allow the failure probability calculation for a vertex $v$ to depend on the vertices that are far away from $v$, then some parts of the algorithm and analysis can be simplified. In particular, we do not need to mark \badvertex\ a vertex that is not rich, and so the two-step sampling process is not needed, but we still need to mark  \badvertex\ those vertices that  have too many significant neighbors, since otherwise we cannot bound $|\NSp(v)| = O(\poly \log \Delta)$.
}

\ignore{
ssss

\begin{claim} \label{cla:sig}
    Given the information $\{{\randcolor}_u^{(i)}\}_{\substack{ u \in \Ninc_{G_0}(v),\ 0 \leq i \leq k-1}}$, one can determine the significant neighbors of $v$'s significant neighbors are, for each iteration.
    In order to simulate \sparsifiedcoloring, it suffices to let each vertex $v$ gather information from only significant neighbors in each iteration.
\end{claim}

There are three steps in which vertices need to gather information from neighbors.
Let us check them one by one to verify Claim \ref{cla:sig}.

\begin{framed}
    Step 2 in \sparsifiedcoloring. Each vertex $v \in V(G)$ gathers the information about $\{{\randcolor}^{(i')}_u\}_{\substack{0 \leq i' \leq i}}$ from each neighbor $u \in N_{G_0}(v)$.
\end{framed}

This step is for $v$ to determine who its significant neighbors are in the current iteration.
Thus, it trivially holds that gathering these information from only significant neighbors will not change  any other behavior of the algorithm.

\begin{framed}
    Step 4 in \sparsifiedcoloring. Each vertex $v \in V(G)$ gathers the information about their neighbors' colors in graph $G_0$. Let $\Psi^-(v)$ consist of these colors.
\end{framed}

Still, we can let $v$ only gather information from its significant neighbors, instead of all neighbors in $N_{G_0}(v)$, because this will not change any other behavior of the algorithm:
Recall that $u$ is a significant neighbor of $v$ in iteration $i$ iff there exist $j_1, i_2 \leq i, j_2$ such that ${\randcolor}^{(i)}_v(j_1) = {\randcolor}^{(i_2)}_u(j_2)$.
As a result, if $u$ is not a significant neighbor of $v$, $u$ cannot be colored in previous iterations with any color $c \in {\randcolor}^{(i)}_v$.
So it is totally irrelevant whether $\Psi^-(v)$ includes $u$'s color, since the only effect of $\Psi^-(v)$ is to be the set $S^-$ in \samplecolors\ and we already know that $u$'s color is not in ${\randcolor}^{(i)}_v$. 

\begin{framed}
    Step 2 in \sparsifiedcolorbidding. $v$ collects information about $S_u$ from all neighbors $u \in \Nout_G(v)$.
\end{framed}

Similarly, it suffices to let $v$ only gather information from its significant neighbors in $\Nout_G(v)$:
If $u \in \Nout_G(v)$ is not a significant neighbor of $v$, $S_u$ cannot include any color $c \in {\randcolor}^{(i)}_v$.
Therefore, $S_v \cap S_u = \varnothing$ and $T_v \cap S_u = \varnothing$, which means that it is irrelevant whether $v$ considers $u$ with $S_u$ in this iteration.

\paragraph{Gathering no information if overloaded.}
Note that if a vertex $v$ is overloaded, i.e., having more than $K^2 \log \Delta = \log^{2\Ctries+2\Csamples+1} \Delta$ significant neighbors in a iteration, then its behavior is highly predictable:
This iteration we must have (i) $S_v = T_v = \varnothing$ and (ii) that $v$ marks itself \badvertex\ according to the description of \sparsifiedcoloring.
Therefore, there is no need to let $v$ gather any information, as long as we directly dictate $v$ to act as predicted.
\begin{claim} \label{cla:overl}
    In order to simulate \sparsifiedcoloring, given parameters $\{{\randcolor}_u^{(i)}\}_{\substack{ u \in \Ninc_{G_0}(v),\ 0 \leq i \leq k-1}}$, one can determine at most $\log^{2\Ctries+2\Csamples+1} \Delta$ (significant) neighbors that $v$ only need to gather information from, for each iteration.
\end{claim}

\cref{cla:overl} gives us a way to simulate \sparsifiedcoloring in the $\LOCAL$ model with lesser communication while keeping the same round complexity.
The only overhead of the simulation is to (randomly) determine parameters $\{{\randcolor}_v^{(i)}\}_{\substack{ v \in V_0,\ 0 \leq i \leq k-1}}$ at the very beginning and let every vertex $v$ know $\{{\randcolor}_u^{(i)}\}_{\substack{ u \in \Ninc_{G_0}(v),\ 0 \leq i \leq k-1}}$.

\fyi{To Chang: Since our speed-up algorithm based on an LOCAL one, the key theorem you may use is no longer the theorem below but Claim 2 (and what follows Claim 2).}
}

\renewcommand{\bibfont}{\normalfont\small}

\bibliographystyle{alpha}
\bibliography{reference}

\appendix

\section{Probabilistic Tools}\label{sect:tools}

In this section we review some probabilistic tools used in this paper.

\paragraph{Chernoff Bound.} Let $X$ be the summation of $n$ independent
0-1 random variables with mean $p$.
Multiplicative Chernoff bounds give the following tail bound of $X$ with mean $\mu = np$.
\[
\Prob[X \geq (1+\delta)\mu] \leq
\begin{cases}
\exp(\frac{- \delta^2 \mu}{3})    & \text{if } \delta \in [0,1]\\
\exp(\frac{- \delta \mu}{3})    & \text{if } \delta > 1.
\end{cases}
\]
Note that these bounds hold even when $X$ is the summation of $n$ \emph{negatively correlated}
0-1 random variables~\cite{DubhashiR98,DubhashiPanconesi09} with mean $p$,
i.e., total independent is not required.
These bounds also hold when $\mu > np$ is an overestimate of $\Expect[X]$.

\paragraph{Chernoff Bound with $k$-wise Independence.}   Suppose $X$ is the summation of $n$ $k$-wise independent
0-1 random variables with mean $p$. We have  $\mu \geq \Expect[X] = np$ and the following tail bound~\cite{SchmidtSS95}. 
\[
\Prob[X \geq (1+\delta)\mu] \leq  \exp\left(-\min\{ k, \delta^2 \mu \}\right).
\]
In particular, when $k = \Omega(\delta^2 \mu)$, we obtain the same asymptotic tail bound as that of Chernoff bound with total independence.

\paragraph{Chernoff Bound with Bounded Independence.}
 Suppose $X$ is the summation of $n$  independent
0-1 random variables  with bounded dependency $d$, and let $\mu \geq \Expect[X]$, where $X = \sum_{i=1}^n X_i$. Then we have~\cite{Pemmaraju01}: \[\Prob[X \geq (1+\delta)\mu] \leq O(d) \cdot \exp(-\Omega(\delta^2 \mu / d)).\]

\paragraph{Hoeffding's Inequality.}  Consider the scenario where $X = \sum_{i=1}^n X_i$, and each $X_i$ is an independent random variable bounded by the interval $[a_i, b_i]$.
Let $\mu \geq \Expect[X]$. 
Then we have the following concentration bound~\cite{Hoeffding63}.
\[
\Prob[X \geq (1+\delta)\mu] \leq  \exp\left(\frac{-2(\delta \mu)^2}{\sum_{i=1}^n (b_i - a_i)^2}\right).
\]

\section{Proof of the Shattering Lemma \label{Sect:proof-shattering-lemma}}

In this section, we prove the shattering lemma (\cref{lem:shatter}).
Let $c \geq 1$. Consider a randomized procedure that generates a subset of vertices $B \subseteq V$.
Suppose that for each $v \in V$, we have $\Prob[v \in B] \leq \Delta^{-3c}$, and this holds even if the random bits not in
$\Ninc^{c}(v)$ are determined adversarially.
Then, the following is true.
\begin{enumerate}
\item With probability at least $1 - n^{- \Omega(c')}$, each connected component in the graph induced by
$B$ has size at most $(c'/c) \Delta^{2c} \log_{\Delta} n$.
\item With probability $1 - O(\Delta^c) \cdot \exp(-\Omega(n \Delta^{-c}))$, the number of edges induced by $B$ is $O(n)$.
\end{enumerate}

\begin{proof}
Statement~(1) is well-known; see e.g.,~\cite{BEPS16,FischerG17}. Here we provide a proof for Statement~(2).
For each edge $e = \{u,v\}$,  write $X_e$ to be the indicator random variable such that $X_e = 1$ if $u \in B$ and $v \in B$. Let $X = \sum_{e \in E} X_e$. It is clear that $\Prob[X_e] \leq 2 \Delta^{-3c}$, and so $\mu = \Expect[X] \leq n \Delta^{1 - 3c} \ll n$.
By a Chernoff bound with bounded dependence $d = 2 \Delta^{c}$ 
the probability that $X > n$ is $O(\Delta^c) \cdot \exp(-\Omega(n \Delta^{-c}))$.
\end{proof}

\section{Fast Simulation of $\LOCAL$ Algorithms in $\CLIQUE$} \label{sect:speed-up-alg}

In this section, we prove \cref{lem:speedup}.
    Let $\Algo$ be a $\tau$-round $\LOCAL$ algorithm on $G = (V, E)$. We show that there is an $O(1)$-round simulation of  $\Algo$ in  in $\CLIQUE$, given that (i) $\DeltaSp^{\tau} \log(\DeltaSp + \InSize /\log n) = O(\log n)$, (ii) $\InSize = O(n)$, and (iii) $\OutSize = O(\log n)$.

\begin{proof}
    Assume $\Algo$ is in the following canonical form.
   Each vertex first generates certain amount of local random bits, and then collects all information in its $\tau$-neighborhood. 
    The information includes not only the graph topology, but also IDs, inputs, and the random bits of these vertices.
    After gathering this information, each vertex locally computes its output based on the information it gathered. 
    
    Consider the following procedure in $\CLIQUE$ for simulating $\Algo$. In the first phase, for each ordered vertex pair $(u,v)$, with probability $p$ to be determined, $u$ sends all its local information to $v$. The local information can be encoded in $\Theta(\DeltaSp \log n + \InSize)$ bits. This includes the local input of $u$, the local random bits needed for $u$ to run $\Algo$, and the list of IDs in $\NSp(u) \cup \{u\}$.
    In the second phase, for each ordered vertex pair $(u,v)$, if $v$ has gathered all the required information to calculate the output of $\Algo$ at $u$, then $v$ sends  to $u$ the output of  $\Algo$ at $u$.

At first sight, the procedure seems to take $\omega(1)$ rounds because of the $O(\log n)$-bit message size constraint of $\CLIQUE$. However, if we set $p = \Theta\left( \frac{1}{\DeltaSp + \InSize /\log n}\right)$, the expected number of $O(\log n)$-bit messages  sent from or received by a vertex is $n p \cdot \Theta(\DeltaSp  + \InSize / \log n) = O(n)$.

More precisely, let $X_u$ be the number vertices $v \in V$ to which $u$ sends its local information in the first phase; similarly, let $Y_v$ be the number of vertices  $u \in V$ sending their local information to a $v$.
We have $\Expect[X_u] = np$, for each $u \in V$, and $\Expect[Y_v] = np$, for each $v \in V$. 
By a Chernoff bound, 
so long as $np = \Omega(\log n)$, with probability $1 - \exp(-\Omega(np)) = 1/ \poly(n)$, we have $X_u = O(np)$, for each $u \in V$, and $Y_v = O(np)$, for each $v \in V$. That is, the number of $O(\log n)$-bit messages  sent from or received by a vertex is at most $n p \cdot \Theta(\DeltaSp  + \InSize / \log n) = O(n)$, w.h.p. 

We verify that $np = \Omega(\log n)$. Condition (i) implicitly requires $\DeltaSp = O(\log n)$, and Condition (ii) requires  $\InSize = O(n)$. Therefore, $np = \Theta\left(\frac{n}{\Delta + \InSize/\log n}\right) = \Omega(\log n)$. Thus, we can route all messages in $O(1)$ rounds using Lenzen's routing (Lemma \ref{lem:routing}), and so the first phase can be done in $O(1)$ rounds.

Condition (iii) guarantees that $\OutSize = O(\log n)$, and so the messages in the second phase can be sent directly in $O(1)$ rounds.
What remains to do is to show that for each $u \in V$, w.h.p., there is a vertex $v \in V$ that receives messages from all vertices in $\NSp^{\tau}(u)$ during the first phase, and so $v$ is able to calculate the output of $u$ locally.

Denote $E_{u,v}$ as the event that $v \in V$ that receives messages from all vertices in $\NSp^{\tau}(u)$ during the first phase, and denote $E_{u}$ as the event that at least one of $\{ E_{u,v} \ | \ v \in V \}$ occurs.
 We have $\Prob[ E_{u,v} ] \geq p^{\DeltaSp^{\tau}}$, since $|\NSp^{\tau}(u)| \leq \DeltaSp^{\tau}$. Thus, $\Prob[E_u] \geq 1 -  ( 1 - p^{\DeltaSp^{\tau}} )^n$.
 
 Condition (i) guarantees that $\Delta^{\tau} \log(\DeltaSp + \InSize /\log n) = O(\log n)$. By setting $p = \epsilon / (\DeltaSp + \InSize /\log n)$ for some sufficiently small constant $\epsilon$, we haves $\DeltaSp^{\tau} \log p \geq -\frac{1}{2} \log n$, and it implies $p^{\DeltaSp^{\tau}} \geq 1/\sqrt{n}$.
 Therefore, $\Prob[E_u] \geq 1 -  ( 1 - p^{\DeltaSp^{\tau}} )^n = 1 -  \exp(-\Omega(\sqrt{n}))$.
 Thus, the simulation gives the correct output for all vertices w.h.p.
\end{proof}

We remark that the purpose of the condition $\OutSize = O(\log n)$ is only to allow the messages in the second phase to be sent directly in $O(1)$ rounds. 
With a more careful analysis and using  Lenzen's routing , 
the condition $\OutSize = O(\log n)$ can be relaxed to $\OutSize = O(n)$, though in our application we only need $\OutSize = O(\log n)$.

\section{Proof of \cref{lem:CLP-summary} \label{sect:CLP-details}}
In this section, we briefly review the algorithm of~\cite{ChangLP18} and show how to obtain \cref{lem:CLP-summary} from~\cite{ChangLP18}.
The algorithm uses a sparsity sequence  defined by
$\epsilon_1 = \Delta^{-1/10}$,
$\epsilon_i = \sqrt{\epsilon_{i-1}}$ for $i>1$, and
$\ell = \Theta(\log \log \Delta)$ is the largest index such that $\frac{1}{\epsilon_{\ell}} \geq  K$ for some sufficiently large constant $K$. The algorithm first do an $O(1)$-round procedure (initial coloring step) to color a fraction of the vertex set $V$, and denote $V^\star$ as the set of remaining uncolored vertices.
The set $V^\star$ is decomposed into $\ell+1$ subsets $(V_1,\ldots, V_{\ell}, V_{\sparse})$ according to local sparsity.
The algorithm then applies another $O(1)$-round procedure (dense coloring step) to color a fraction of vertices in $V_1 \cup \cdots \cup V_{\ell}$. 

The remaining uncolored vertices in $V^\star$ after the above procedure (initial coloring step and dense coloring step) are partitioned into three subsets: $U$, $R$, and  $V_{\bad}$.\footnote{The algorithm in~\cite{ChangLP18} for coloring layer-1 large blocks has two alternatives. Here we always use the one that puts the remaining uncolored vertices in one of $R$ or $V_{\bad}$, where each vertex is added to $V_{\bad}$ with probability $\Delta^{-\Omega(c)}$.} The set $R$ induces a constant-degree graph. The set  $V_{\bad}$ satisfies the property that each vertex is added to  $V_{\bad}$  with probability $\Delta^{-\Omega(c)}$, where $c$ can be any given constant, independent on the runtime. The  vertices in $U$ satisfy the following properties.
\begin{description}
    \item[Excess Colors:] We have $V_1 \cap U = \varnothing$. Each $v \in V_i \cap U$, with $i > 1$,  has $\Omega(\epsilon_{i-1}^2 \Delta)$ excess colors. Each $v \in V_{\sparse}\cap U$ has $\Omega(\epsilon_\ell^2 \Delta) = \Omega(\Delta)$ excess colors. The number of excess colors at a vertex $v$ is defined by the number of available colors of $v$ minus the number of uncolored neighbors of $v$.
    \item[Number of Neighbors:] For each $v \in U$, and for each $i \in [2, \ell]$, the number of uncolored neighbors of $v$ in $V_i\cap U$ is $O(\epsilon_i^5 \Delta) = O(\epsilon_{i-1}^{2.5} \Delta)$. The number of  uncolored neighbors of $v$ in $V_{\sparse} \cap U$ is of course at most $\Delta = O(\epsilon_{\ell}^{2.5} \Delta)$, since $\epsilon_{\ell}$ is a constant.
\end{description}

At this moment, the two sets  $V_{\bad}$ and $R$ satisfy the required condition specified in \cref{lem:CLP-summary}. In what follows,  we focus on $U$.

\paragraph{Orientation.}
We orient the graph induced by the uncolored vertices in $U$ as follows. For any edge $\{u,v\}$, we orient it as $(u,v)$ if one of the following is true: (i) $u \in V_{\sparse}$ but $v \notin V_{\sparse}$, (ii) $u \in V_i$ and $v \in V_j$ with $i > j$, (iii) $u$ and $v$ are within the same part in the partition $V^\star = V_1 \cup \ldots V_{\ell} \cup V_{\sparse}$ and $\ID(v) < \ID(u)$. This results in a directed acyclic graph. We write $\Nout(v)$ to denote the set of out-neighbors of $v$ in this graph.

\paragraph{Lower Bound of Excess Colors.}
In view of the above, there exist universal constants $\eta > 0$ and $C > 0$ such that the following is true. For each $i \in [2,\ell]$ and each uncolored vertex $v \in V_i \setminus V_{\bad}$, we set $p_v =  \eta \epsilon_{i-1}^2 \Delta$.
For each $v \in V_{\sparse}  \setminus V_{\bad}$, we set $p_v = \eta \epsilon_{\ell}^2 \Delta$.
By selecting a  sufficiently small $\eta$, the number $p_v$ is
always a lower bound on the number of excess colors at $v$.

\paragraph{The Number of Excess Colors is Large.} 
Recall that to color the graph quickly we need the number of excess colors to be sufficiently large with respect to out-degree.
If $v \in V_i \cap U$ with $i \geq 2$, it satisfies $|\Nout(v)| = \sum_{j=2}^{i} O(\epsilon_{j-1}^{2.5} \Delta) = O(\epsilon_{i-1}^{2.5} \Delta)$. In this case,  $p_v / |\Nout(v)| = \Omega(\epsilon_{i-1}^{-0.5})$.
If $v \in V_{\sparse} \cap U$, then of course  $|\Nout(v)| \leq \Delta = O(\epsilon_{\ell}^2 \Delta)$, since $\epsilon_{\ell}$ is a constant.  In this case,  $p_v / |\Nout(v)| = \Omega(\epsilon_{\ell}^{-0.5})$.

However, due to the high variation on the palette size in our setting, $p_v / |\Nout(v)|$ is not a good measurement for the gap between the number of excess colors and out-degree at $v$. The inverse of the expression $\sum_{u \in  \Nout(v)} 1 / p_u$ turns out to be a better measurement, as it takes into account the number of excess colors in each out-neighbor.

There is a constant $C > 0$ such that for each uncolored vertex $v \in V^\star \setminus (V_{\bad} \cup R)$, we have $\sum_{u \in  \Nout(v)} 1 / p_u \leq  1/C$.  The calculation is as follows.
\begin{align*}
&\text{If $v \in V_i  \cap U$ \ $(i>1)$,} \ & & \text{then }
\sum_{u \in  \Nout(v)} 1 / p_u 
=
\sum_{j=2}^{i} O\paren{\frac{\epsilon_{j-1}^{2.5} \Delta}{\epsilon_{j-1}^2 \Delta}} = \sum_{j=2}^{i} O({\epsilon_{j-1}^{0.5}}) = O(\epsilon_{i-1}^{0.5}) < 1/C.\\
&\text{If $v \in V_{\sparse} \cap U$, } \ &  & \text{then }
\sum_{u \in  \Nout(v)} 1 / p_u 
=
\sum_{j=2}^{\ell+1} O\paren{\frac{\epsilon_{j-1}^{2.5} \Delta}{\epsilon_{j-1}^2 \Delta}} = \sum_{j=2}^{\ell+1} O({\epsilon_{j-1}^{0.5}}) = O(\epsilon_{\ell}^{0.5}) < 1/C.
\end{align*}

For a specific example, if $v$ is an uncolored vertex in $V_2 \setminus V_{\bad}$, then $p_v = \eta \epsilon_1^2 \Delta = \eta \Delta^{0.8}$ is the lower bound on the number of excess colors at $v$, and $v$ has out-degree $|\Nout(v)| = O(\epsilon_1^{2.5} \Delta) =  O(\Delta^{0.75})$, and we have $\sum_{u \in  \Nout(v)} 1 / p_u = O(\epsilon_1^{0.5}) =  O(\Delta^{-0.05}) < 1/C$. Intuitively, this means that the gap between the number of excess colors and the out-degree at $v$ is $\Omega(\Delta^{0.05})$.

\paragraph{Summary.}
Currently the graph induced by  $U$ satisfies the following conditions. Each vertex $v$ is associated with a parameter $p_v = \eta \epsilon_j^2 \Delta$ (for some $j \in [1,\ell]$) such that the number of excess colors at $v$ is at least $p_v = \Omega(\epsilon_j^2 \Delta)$, but the number of out-neighbors of $v$ is at most $O(\epsilon_j^{2.5} \Delta)$. In particular, we always have $\sum_{u \in  \Nout(v)} 1/p_u = O(\epsilon_{j}^{0.5}) < 1/C$, where $C > 0$ is a universal constant. 
The current $p_v$-values for vertices in $U$ almost satisfy the required condition for $V_{\Good}$ specified in \cref{lem:CLP-summary}. 
\begin{description}
    \item[Lower Bound of $p^\star$.] Define   $p^\star$ as the minimum $p_v$-value among all uncolored vertices $v \in V^\star$. Currently we only have  $p^\star \geq \eta \epsilon_1^2 \Delta = \eta \Delta^{0.8}$, but in \cref{lem:CLP-summary} it is required that $p^\star \geq \Delta / \log \Delta$.
    \item[Lower Bound of $C$.] Currently we have $\sum_{u \in  \Nout(v)} 1/ p_u \leq 1/C$ for some universal constant $C$, but in \cref{lem:CLP-summary} it is required that $C > 0$ can be any given constant.
\end{description}

For the rest of the section, we show that there is an $O(1)$-round that is able to improve the lower bound of $p^\star$ to $p^\star \geq \Delta / \log \Delta$ and increase the parameter $C$ to any specified constant. The procedure will colors a fraction of vertices in $U$ and puts some vertices in $U$ to the set $V_{\bad}$.
We first consider improving the lower bound of $p^\star$. This is done by letting all vertices whose $p_v$-value are too small (i.e., less than $\Delta / \log \Delta$) to jointly run  \cref{lem:color-remain-copy}. For these vertices, we have 
$\sum_{u \in  \Nout(v)} 1/ p_u \leq 
O\left(\log^{-1/4} \Delta\right)$,\footnote{Each vertex $v$  is associated with a parameter $p_v = \eta \epsilon_j^2 \Delta$, and we have $\sum_{u \in  \Nout(v)} 1/p_u = O(\epsilon_{j}^{0.5}) = O(p_v^{1/4})$.} and so we can use $C = \Omega\left( \log^{1/4} \Delta \right)$ in \cref{lem:color-remain-copy}. The algorithm of  \cref{lem:color-remain-copy} takes only $O(1)$ rounds. All participating vertices  that still remain uncolored join $V_{\bad}$.

\begin{lemma}[\cite{ChangLP18}]\label{lem:color-remain-copy}
Consider a directed acyclic graph, where vertex $v$ is associated with a parameter $p_v \leq |\Psi(v)| -  \deg(v)$
We write $p^\star = \min_{v\in V} p_v$.
Suppose that there is a number $C = \Omega(1)$ such that all vertices $v$  satisfy $\sum_{u \in  \Nout(v)} 1/ p_u \leq 1/C$.
Let $d^\star$ be the maximum out-degree of the graph.
There is an $O(\log^\ast (p^\star) - \log^\ast (C))$-time algorithm achieving the following.
Each vertex $v$ remains uncolored with probability at most $\exp(-\Omega(\sqrt{p^\star})) + d^\star \exp(-\Omega(p^\star))$.
This is true even if the random bits generated outside a constant radius around $v$ are determined adversarially.
\end{lemma}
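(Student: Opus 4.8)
Since \cref{lem:color-remain-copy} is quoted from~\cite{ChangLP18}, the plan is to recapitulate the analysis of iterated color bidding. The algorithm runs the \trial\ procedure for $k = O(\logstar p^\star - \logstar C)$ rounds with a sequence of parameters $C = C_0 < C_1 < \cdots < C_{k-1}$ obtained by setting $C_{i+1} = \Theta\paren{\exp(C_i/6)\,C_i}$ and truncating at $C_{k-1} = \Theta(\sqrt{p^\star})$ --- the same recurrence as for the $C_i$'s in \cref{sect-color-bidding-main}, except that the last value is $\sqrt{p^\star}$ rather than $\polylog p^\star$. Along the way we maintain, for every still-uncolored vertex $v$, the invariant $\Invariant_i(v): \sum_{u\in\Nout(v)} 1/p_u \le 1/C_i$, where $\Nout(v)$ counts only currently-uncolored out-neighbors; note the $p_u$ never decrease, since coloring a neighbor removes at most one available color and one uncolored neighbor at once. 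By hypothesis $\Invariant_0$ holds everywhere, and any vertex that ever violates the current invariant is frozen and declared uncolored.

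The core is a one-iteration lemma, analogous to \cref{lem:shrinkii}: if $v$ satisfies $\Invariant_i$ at the start of iteration $i$ (no matter how the other vertices behave), then with probability $1 - \exp(-\Omega(C_i)) - d^\star\exp(-\Omega(p^\star))$ either $v$ is colored in this iteration or it survives and satisfies $\Invariant_{i+1}$, and this holds even if the random bits outside a constant radius of $v$ are adversarial. For the coloring part: each color $c\in S_v$ is picked by some out-neighbor with probability at most $\sum_{u\in\Nout(v)} C_i/(2|\Psi(u)|) \le \sum_{u\in\Nout(v)} C_i/(2p_u) \le 1/2$ using $\Invariant_i$ and $|\Psi(u)|\ge p_u$; since $|S_v|$ is binomial with mean $C_i/2$ and $|\Psi(v)|\ge p_v \ge p^\star$, a Chernoff bound gives $|S_v| = \Omega(C_i)$ except with probability $\exp(-\Omega(C_i))$, and then all of $S_v$ being blocked has probability $\exp(-\Omega(C_i))$; the mild dependence among blocking events and the event that some particular out-neighbor's sample is far above its mean are folded into the $d^\star\exp(-\Omega(p^\star))$ term via a union bound over the $\le d^\star$ out-neighbors. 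For maintaining $\Invariant_{i+1}$: put $Y = \sum_u 1/p_u$ over out-neighbors still uncolored after iteration $i$; each stays uncolored with probability $\le \exp(-\Omega(C_i))$, so $\Expect[Y] \le \exp(-\Omega(C_i))/C_i \le 1/(2C_{i+1})$ by the choice of the sequence, and since the summands lie in $[0,1/p^\star]$ with $\sum_u (1/p_u)^2 \le (1/p^\star)\sum_u 1/p_u \le 1/(C_i p^\star)$, Hoeffding's inequality gives $\Prob[Y > 1/C_{i+1}] \le \exp(-\Omega(C_i p^\star / C_{i+1}^2)) \le \exp(-\Omega(C_i))$ (and $\le\exp(-\Omega(\sqrt{p^\star}))$ in the truncated last step). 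As the $1/p_u$ are dependent, one reveals the per-vertex samples in reverse topological order of the DAG --- exactly the device used in the proof of \cref{lem:shrinkii}~(iii) --- so that the conditional survival probability of each out-neighbor stays $\le\exp(-\Omega(C_i))$ and Hoeffding still applies.

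Finally I would assemble the pieces. Because $C_{i+1}$ is an iterated exponential in $C_i$, the sequence climbs from $C$ to $\Theta(\sqrt{p^\star})$ in $k = O(\logstar p^\star - \logstar C)$ steps, which is also the running time. A vertex uncolored at the end either was frozen at some iteration --- i.e.\ it satisfied $\Invariant_i$ but afterwards was uncolored and violated $\Invariant_{i+1}$ --- or survived every iteration and failed to get colored in the last one; the one-iteration lemma controls each such event, and because those bounds are robust to adversarial randomness outside a constant radius, no radius-$k$ union bound is incurred, yielding the stated $\exp(-\Omega(\sqrt{p^\star})) + d^\star\exp(-\Omega(p^\star))$. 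I expect the delicate step to be precisely this last accounting: ensuring that the error terms from the small-parameter early iterations do not dominate the bound relies on the exact growth rate of the $C_i$-sequence together with the adversarial-robustness of the per-iteration estimate; everything else is routine Chernoff/Hoeffding bookkeeping, and in any case the statement is essentially imported as a black box from~\cite{ChangLP18}.
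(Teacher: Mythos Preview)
The paper does not prove \cref{lem:color-remain-copy}; it is quoted verbatim from~\cite{ChangLP18} and used as a black box in Appendix~\ref{sect:CLP-details}. Your proposal correctly recognizes this and reconstructs the argument along the same lines that the paper itself sketches in its review of \trial\ and in the analysis of \cref{lem:shrinkii} (the invariant $\Invariant_i$, the $C_i$-sequence with cap $\sqrt{p^\star}$ rather than $\polylog p^\star$, the Hoeffding step with the reverse-topological revealing trick). So there is nothing to compare against, and your reconstruction is faithful to the approach the paper attributes to~\cite{ChangLP18}.

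One remark on the point you yourself flag as delicate: your bound $\Prob[Y>1/C_{i+1}]\le\exp(-\Omega(C_i))$ is correct but loose. The Hoeffding exponent is $\Theta(C_i\,p^\star/C_{i+1}^2)$, which for all iterations except the very last one (where $C_{i+1}$ hits the cap $\sqrt{p^\star}$) is much larger than $C_i$; in fact for early iterations it is $\Omega(p^\star)$. Only at the final transition $i=k-2$ does the exponent drop to $\Theta(C_{k-2})$, and getting this to match the claimed $\exp(-\Omega(\sqrt{p^\star}))$ requires some care with how the sequence is truncated. You are right that this is where the precise growth rate of the $C_i$'s matters, and that everything else is routine; since the lemma is imported wholesale, the paper does not spell this out either.
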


Now the lower bound on $p^\star$ is met. We show how to increase the $C$-value to any given constant we like in $O(1)$ rounds. We apply \cref{lem:shrink-copy} using the current  $p^\star$ and $C$.
After that, we can set the new $C$-value to be $C' = C \cdot \exp(C/6) / (1+\lambda)$, after putting each vertex $v$ not meeting the following condition to $V_{\bad}$:
\[
\text{Sum of $1/p_u$ over all remaining uncolored vertices $u$ in $\Nout(v)$ is at most $1/C' = \frac{1+\lambda}{ \exp(C/6) C }$.}
\]
If $\lambda$ is chosen as a small enough constant, we have $C' > C$. After a constant number of iterations, we can increase the $C$-value to any constant we like. Now, all conditions in \cref{lem:CLP-summary} are met for the three sets $R$, $V_{\bad}$, and $V_{\Good} \leftarrow U$.
\begin{lemma}[\cite{ChangLP18}]\label{lem:shrink-copy}
There is an one-round algorithm meeting the following conditions.
Let $v$ be any vertex. Let $d$ be the summation of $1/p_u$ over all vertices $u$ in $\Nout(v)$ that remain uncolored after the algorithm.
Then the following holds.
\begin{align*}
\Prob\left[d \geq \frac{1+\lambda}{ \exp(C/6) C }\right] &\leq \exp\left(-2 \lambda^2   p^\star \exp(-C/3) / C \right) +d^\star\exp(-\Omega(p^\star)).
\end{align*}
\end{lemma}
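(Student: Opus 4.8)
The plan is to analyze a single invocation of the \trial\ (\textsf{ColorBidding}) procedure: each vertex $w$ places each color $c\in\Psi(w)$ into $S_w$ independently with probability $C/(2|\Psi(w)|)$, learns the sets $S_u$ of its out-neighbors $u\in\Nout(w)$ in one communication round, and colors itself with any color of $S_w$ that is avoided by all those $S_u$. The quantity in the statement is $d=\sum_{u\in\Nout(v)}(1/p_u)\cdot\mathbf 1[u\text{ uncolored}]$, so the claim is a concentration bound on this weighted count of out-neighbors of $v$ that fail.

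\textbf{Single-vertex failure bound.} Fix $u\in\Nout(v)$ and let $T_u=\bigcup_{w\in\Nout(u)}S_w$ be the set of colors chosen by $u$'s out-neighbors. For each color $c\in\Psi(u)$ the indicator events $\mathbf 1[c\in T_u]$ are mutually independent (the $S_w$ are independent, and within each $S_w$ colors are chosen independently), with $\Prob[c\in T_u]\le\sum_{w\in\Nout(u)}C/(2|\Psi(w)|)\le\sum_{w\in\Nout(u)}C/(2p_w)\le 1/2$ by the hypothesis $\sum_{w\in\Nout(u)}1/p_w\le 1/C$. Since $|\Psi(u)|\ge p_u\ge p^\star$, a Chernoff bound shows that with probability $1-\exp(-\Omega(p^\star))$ a constant fraction of $\Psi(u)$ lies outside $T_u$; call this good event $\mathcal E_u$. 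Conditioned on $\mathcal E_u$ and on $T_u$, $u$ stays uncolored only if $S_u$ (sampled independently of $T_u$) misses all of those $\Omega(|\Psi(u)|)$ available colors, which happens with probability at most $\bigl(1-C/(2|\Psi(u)|)\bigr)^{\Omega(|\Psi(u)|)}\le\exp(-C/6)$ after fixing the constant hidden in $\mathcal E_u$ (this is exactly the color-counting estimate of~\cite{ChangLP18}). Hence $\Prob[u\text{ uncolored}\mid\mathcal E_u]\le\exp(-C/6)$ and $\Prob[\overline{\mathcal E_u}]\le\exp(-\Omega(p^\star))$.

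\textbf{Union bound and Hoeffding.} Let $\mathcal E=\bigcap_{u\in\Nout(v)}\mathcal E_u$; since $|\Nout(v)|\le d^\star$ we get $\Prob[\overline{\mathcal E}]\le d^\star\exp(-\Omega(p^\star))$, the second term of the claim, so it suffices to bound $\Prob[d\ge(1+\lambda)/(\exp(C/6)C)\mid\mathcal E]$. Write $Z_u=(1/p_u)\mathbf 1[u\text{ uncolored}]\in[0,1/p_u]$, so $d=\sum_{u\in\Nout(v)}Z_u$, and set $\mu=\exp(-C/6)/C$; by the previous step and $\sum_u 1/p_u\le 1/C$ we have $\Expect[d\mid\mathcal E]\le\mu$. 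The $Z_u$ are not independent, but—exactly as in the proof of \cref{lem:shrinkii}(iii)—order $\Nout(v)=(u_1,\dots,u_s)$ in reverse topological order and reveal, for $r=1,\dots,s$, the sets $S_{u_r}$ together with $\{S_w:w\in\Nout(u_r)\}$; this determines $Z_r$, and conditioned on any outcome of $Z_1,\dots,Z_{r-1}$ (and $\mathcal E$) the conditional failure probability of $u_r$ is still at most $\exp(-C/6)$, so Hoeffding's inequality applies to $\sum_r Z_r$ with increments in $[0,1/p_{u_r}]$. Using $\sum_u(1/p_u)^2\le\tfrac1{p^\star}\sum_u 1/p_u\le\tfrac1{Cp^\star}$ and $(\lambda\mu)^2=\lambda^2\exp(-C/3)/C^2$, this gives
\[
\Prob[d\ge(1+\lambda)\mu\mid\mathcal E]\le\exp\!\left(\frac{-2(\lambda\mu)^2}{\sum_u(1/p_u)^2}\right)\le\exp\!\left(-\frac{2\lambda^2 p^\star\exp(-C/3)}{C}\right),
\]
and combining with $\Prob[\overline{\mathcal E}]$ yields the lemma.

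\textbf{Main obstacle.} The delicate point is justifying Hoeffding despite the dependence among the $Z_u$: one must check that revealing out-neighbor selections in reverse topological order keeps the per-step conditional failure probability at $\exp(-C/6)$, so that the bounded-difference estimate is legitimate—this is the same subtlety flagged at the end of \cref{sect-analysis-color-bidding}. A secondary nuisance is extracting the constant exactly $C/6$ in the single-vertex bound (rather than an unspecified $\Omega(C)$), which requires following the careful palette bookkeeping of~\cite{ChangLP18} when defining $\mathcal E_u$.
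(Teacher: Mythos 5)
The paper never proves \cref{lem:shrink-copy}: it is imported verbatim from~\cite{ChangLP18}, and the only argument of this kind carried out in the paper is for the sparsified analogue, \cref{lem:shrinkii}. Your proposal follows exactly that template (and, as far as the structure goes, the original CLP proof): a per-out-neighbor failure bound of $\exp(-C/6)$ conditioned on a palette event $\mathcal{E}_u$ that fails with probability $\exp(-\Omega(p^\star))$, a union bound over at most $d^\star$ out-neighbors producing the second term, and Hoeffding with $\sum_u (1/p_u)^2 \le 1/(Cp^\star)$ producing precisely the exponent $2\lambda^2 p^\star \exp(-C/3)/C$. The choice of the constant $1/3$ in $\mathcal{E}_u$ recovering exactly $\exp(-C/6)$ is also right. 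So the route and the constants are correct.

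One step is not justified as written: in the concentration part you condition on the global event $\mathcal{E}=\bigcap_u \mathcal{E}_u$. For $j>r$ the event $\mathcal{E}_{u_j}$ depends on $T_{u_j}=\bigcup_{w\in\Nout(u_j)}S_w$, and under the reverse topological order it is entirely possible that $u_r\in\Nout(u_j)$, so $\mathcal{E}_{u_j}$ is correlated with $S_{u_r}$. Hence the assertion that, conditioned on the history \emph{and on $\mathcal{E}$}, vertex $u_r$ fails with probability at most $\exp(-C/6)$ does not follow (the failure event $S_{u_r}\subseteq T_{u_r}$ is not monotone, so no FKG-type shortcut applies), and the same objection hits the claim $\Expect[d\mid\mathcal{E}]\le\mu$. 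The standard repair, which is also how the proof of \cref{lem:shrinkii}~(iii) avoids the issue (there all $T$-sets are exposed before any $S$-sets): set $Z_r=(1/p_{u_r})\,\mathbf{1}[u_r\text{ uncolored}]\,\mathbf{1}[\mathcal{E}_{u_r}]$, observe that $d=\sum_r Z_r$ on $\mathcal{E}$, and at step $r$ reveal first $\{S_w : w\in\Nout(u_r)\}$ (which determines $T_{u_r}$ and whether $\mathcal{E}_{u_r}$ holds) and only then the set $S_{u_r}$, which is fresh and independent of the history because the ordering guarantees $u_r\notin\Nout(u_{r'})$ for all $r'<r$. Then $\Prob[Z_r=1/p_{u_r}\mid \text{history}]\le\exp(-C/6)$ without any global conditioning, the sum is stochastically dominated by a sum of independent increments in $[0,1/p_{u_r}]$, Hoeffding gives the first term, and $\Prob[\overline{\mathcal{E}}]\le d^\star\exp(-\Omega(p^\star))$ gives the second. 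With that adjustment your argument is a correct proof of the lemma.
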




\section{The CLP Algorithm in the Low-Memory $\MPC$ Model}\label{app:LowMemMPC}
In this section, we show which changes have to be made to \cite{ChangLP18} to get a low-memory $\MPC$ algorithm, thus proving  \Cref{lemma:CLP}.

There are two main issues in the low-memory $\MPC$ model that we need to take care of.
First, the total memory of the system is limited to $\tilde{\Theta}(m + n)$, where $m$ and $n$ are the number of edges and vertices in the input graph, respectively.
Second, the local memory per machine is restricted to $O(n^{\alpha})$, for an arbitrary constant $\alpha>0$.
These two restrictions force us to be careful about the amount of information sent between the machines.
In particular, no vertex can receive messages from more than $O(n^{\alpha})$ other vertices in one round (as opposed to the $\CLIQUE$, where a vertex can receive up to $O(n)$ messages per round).

The key feature of our partitioning algorithm is that we can reduce the coloring problem to several instances of coloring graphs with maximum degree $\Delta = O(n^{\alpha / 2})$.
Given this assumption, we can implement the CLP algorithm in the low-memory $\MPC$ model almost line by line as done by Parter~\cite[Appendix A.2]{Parter18} for the $\CLIQUE$.
Therefore, here we simply point out the differences in the algorithm and refer the reader to the paper by Parter for further technical details.

\paragraph{Dense Vertices.}
Put briefly, a vertex is $\gamma$-dense, if a $(1 - \gamma)$-fraction of the edges incident on it belong to at least $(1 - \gamma) \cdot \Delta$ triangles.
An $\gamma$-almost clique is a connected component of $\gamma$-dense vertices that have at most $\gamma \cdot \Delta$ vertices outside the component.
Each such component has a weak diameter of at most $2$.
These components can be computed in $2$ rounds by each vertex learning its $2$-hop neighborhood.
This process is performed $O(\log \log \Delta)$ times in parallel which incurs a factor of $O(\log \log \Delta)$ in the memory requirements, which is negligible.
Furthermore, the algorithm requires running a coloring algorithm within the dense components.
Since the component size is at most $\Delta \ll \Delta^2$, we can choose one vertex in the component as a leader and the leader vertex can locally simulate the coloring algorithm without breaking the local memory restriction.

\paragraph{Memory Bounds.}
Once the $2$-hop neighborhoods of nodes have been learned, no more memory overhead is required.
Since we have $\Delta \ll  n^{\alpha/2}$, learning the $2$-hop neighborhoods does not violate the local memory restriction of $O(n^{\alpha})$.
For the total memory bound, storing the $2$-hop neighborhoods requires at most $\widetilde{O}(\sum_v (\deg_G (v))^2)$ memory.


\paragraph{Post-Shattering and Clean-up.}
Another step that we cannot use as a black box is a subroutine that colors a graph that consists of connected components of $O(\poly \log n)$ size.
Regardless of the component sizes being small, all vertices over all components might not fit the memory of a single machine.
Hence, similarly to the CLP algorithm in the $\LOCAL$ model, we use the best deterministic list coloring algorithm to color the components.
For general graphs, currently the best runtime in the $\LOCAL$ model is obtained by applying the algorithm by Panconesi and Srinivasan~\cite{PanconesiS96} with runtime of $ 2^{O(\sqrt{\log n'})} $, where $n' = O(\poly \log n)$ is the maximum size of the small components.
We can improve this bound exponentially in the $\MPC$ model by using the known \emph{graph exponentiation} technique~\cite{Lenzen2013,brandt2018Arb} and obtain a runtime of $O(\sqrt{\log \log n})$.

The graph exponentiation technique works as follows.
Suppose that every vertex knows all the vertices and the topology of its $2^{i - 1}$-hop neighborhood in round $i - 1$ for some integer $i \geq 0$.
Then, in round $i$, every vertex can communicate the topology of its $2^{i - 1}$-hop neighborhood to all the vertices in its $2^{i - 1}$-hop neighborhood.
This way, every vertex learns its $2^{i}$-hop neighborhood in round $i$ and hence, every vertex can simulate any $2^i$-round $\LOCAL$ algorithm in $i$ rounds.
We observe that, in the components of $O(\poly \log n)$ size, the $2^{i}$-hop neighborhood of any vertex for any $i$ fits into the memory of a single machine since the number of vertices in the neighborhood is clearly bounded by $O(\poly \log n)$.
The same observation yields that the total memory of $\tilde{O}(m)$ suffices.

\end{document}